\documentclass[conference]{IEEEtran}
\usepackage[utf8]{inputenc}
\usepackage[nolist]{acronym}
\usepackage{cite}
\usepackage{amsmath,amssymb,amsfonts, amsthm}
\usepackage{graphicx}
\usepackage{textcomp}
\usepackage{xcolor}
\usepackage{tikz}
\usepackage{pgfplots}
\usepackage{multirow}
\usepackage{makecell}
\usepackage{algorithm}
\usepackage{algpseudocode}
\usepackage{framed}
\usepackage{soul}
\usepackage{bm}
\usepackage{pdfcomment}
\usepackage{fancyhdr}
\usepackage{subcaption}

\usepgfplotslibrary{external}
\bibliographystyle{ieeetr}
\tikzset{every picture/.style={line width=0.75pt}}
\usetikzlibrary{positioning, shapes, trees, automata, calc, fit, automata}
\def\BibTeX{{\rm B\kern-.05em{\sc i\kern-.025em b}\kern-.08em
		T\kern-.1667em\lower.7ex\hbox{E}\kern-.125emX}}

\fancypagestyle{firststyle}
{
	\setlength{\headheight}{2cm} 
	\fancyhf[c]{This is the Accepted Version of the work. The definitive version was published in\\ \emph{Proc. ACM Meas. Anal. Comput. Syst.}, Vol. 4, 2, Article 27 (June 2020) ACM, New York, NY. \\ \href{https://doi.org/10.1145/3392145}{https://doi.org/10.1145/3392145}}
	\fancyfoot{}
}

\tikzset{every picture/.style={line width=0.75pt}}
\tikzset{every picture/.style={line width=0.75pt}}
\tikzset{cross/.style={cross out, draw=black, fill=none, minimum size=2*(#1-\pgflinewidth), inner sep=0pt, outer sep=0pt}, cross/.default={2pt}}
\usetikzlibrary{positioning, shapes, trees, automata, calc, fit, automata, shapes.misc, arrows, patterns}

\providecommand{\todo}[1]{}

\providecommand{\hze}{}
\providecommand{\h}{}
\providecommand{\T}{}
\providecommand{\rema}{}
\providecommand{\ant}{}
\providecommand{\hr}{}
\providecommand{\rr}{}

\renewcommand{\todo}[1]{\textcolor{red}{\underline{\textbf{\#TODO :}} #1}}

\renewcommand{\hze}{\mathcal{H}_{\text{TAI}}}
\renewcommand{\h}{\mathcal{H}}
\renewcommand{\T}{T_{\text{start}}}
\renewcommand{\rr}{\text{Reg}}
\renewcommand{\rema}{\textbf{Remark: }}
\renewcommand{\ant}{\textbf{Numerical Application to TSN: }}
\renewcommand{\hr}{\mathcal{H}_{\text{Reg}}}

\providecommand{\DIFdel}[1]{} 
\renewcommand{\DIFdel}[1]{\sout{#1}} 



\def\ben{\begin{equation*}}
\def\een{\end{equation*}}
\def\barr{\begin{array}}
\def\earr{\end{array}}

\newtheorem{proposition}{Proposition}

\newtheorem{definition}{Definition}

\newtheorem{lemma}{Lemma}

\begin{document}

\title{On Time Synchronization Issues in Time-Sensitive Networks with Regulators and Nonideal Clocks}
\author{\IEEEauthorblockN{Ludovic Thomas}
	\IEEEauthorblockA{\textit{I\&C} \\
		\textit{EPFL}\\
		Lausanne, Switzerland \\
		ludovic.thomas@epfl.ch}
	\and
	\IEEEauthorblockN{Jean-Yves Le Boudec}
	\IEEEauthorblockA{\textit{I\&C} \\
		\textit{EPFL}\\
		Lausanne, Switzerland \\
		jean-yves.leboudec@epfl.ch}
}

\maketitle
\thispagestyle{firststyle}
\begin{abstract}
	Flow reshaping is used in time-sensitive networks (as in the
	context of IEEE TSN and IETF Detnet) in order to reduce burstiness
	inside the network and to support the computation of guaranteed
	latency bounds. This is performed using per-flow regulators (such
	as the Token Bucket Filter) or interleaved regulators (as with IEEE
	TSN Asynchronous Traffic Shaping, ATS). The former use one FIFO
	queue per flow, whereas the latter use one FIFO queue per input
	port. Both types of regulators are beneficial as they cancel the
	increase of burstiness due to multiplexing inside the network.
	It was demonstrated, by using network calculus, that they
	do not increase the worst-case latency. However, the properties of
	regulators were established assuming that time is perfect in all
	network nodes. In reality, nodes use local, imperfect clocks. Time-sensitive networks exist in two flavours: (1) in non-synchronized networks,
	local clocks run independently at every node and their deviations
	are not controlled and (2) in synchronized networks, the deviations of
	local clocks are kept within very small bounds using for example a
	synchronization protocol (such as PTP) or a satellite based
	geo-positioning system (such as GPS). We revisit the properties of
	regulators in both cases. In non-synchronized networks, we show that
	ignoring the timing inaccuracies can lead to network instability
	due to unbounded delay in per-flow or interleaved regulators. We
	propose and analyze two methods (rate and burst cascade, and
	asynchronous dual arrival-curve method) for avoiding this problem. In
	synchronized networks, we show that there is no instability with
	per-flow regulators but, surprisingly, interleaved regulators
	can lead to instability. To establish these results, we develop a
	new framework that captures industrial requirements on clocks in
	both non-synchronized and synchronized networks, and we develop a toolbox
	that extends network calculus to account for clock imperfections.

\end{abstract}

\section{Introduction}
	Time-sensitive networks support real-time applications in many industries such as automation~\cite{iecIECIEEE608022019}, avionics~\cite{AFDX,TTE}, space~\cite{ecssSpaceWireLinksNodes2008}, and automobile~\cite{ieeeDraftStandardLocal2019b}. Recent work in time-sensitive networks include  the \ac{TSN} task group of the \ac{IEEE} and the Detnet working group of the \ac{IETF}. Both aim to provide deterministic worst-case delay and jitter bounds with seamless reconfiguration~\cite{ieeeIEEEStandardLocal2018} and redundancy~\cite{ieeeIEEEStandardLocal2017}.
	
	Reshaping flows inside the network by means of traffic regulators helps achieve these objectives. Traffic regulators are hardware elements that, placed before a multiplexing stage, remove the increased burstiness due to interference with other flows in previous hops. They support higher scalability and efficiency of time-sensitive networks and enable the computation of guaranteed latency-bounds in networks with cyclic dependencies~\cite{wandeler2006performance,specht2016urgency,mohammadpourLatencyBacklogBounds2018}. They come in two types: the \acf{PFR} (also called ``per-flow shaper'')~\cite[Section 1.7.4]{leboudecNetworkCalculusTheory2001} and the \acf{IR}~\cite{leboudecTheoryTrafficRegulators2018}; the \ac{IR} processes flow aggregates and the \ac{PFR} processes each flow individually and requires one queue per flow.
	
	In both types, each flow has its own regulation parameter, usually in terms of burst and rate. Regulators then delay any packet whose release would violate the regulation parameter. A well-known example of a \ac{PFR} is the Linux's Token-Bucket Filter~\cite{wagner2001short}. Configured with a rate $r$ and a burst $b$, it makes sure that over any window of duration $t$, no more than $rt+b$ bits are released by the regulator. Hence, the evaluation of elapsed time is at the heart of the operation of any regulator. When a regulator can base its computations on an ideal clock, previous studies have established that it enjoys 
	the ``shaping-for-free'' property, i.e., a regulator that removes the burstiness increase caused by a \ac{FIFO} system does not increase the worst-case delay of flows \cite[Thms 1.5.2 and 1.7.3]{leboudecNetworkCalculusTheory2001},\cite[Thm 5]{leboudecTheoryTrafficRegulators2018}. This property is essential to the analysis of time-sensitive networks with regulators.
	
	In reality, the clock used by a regulator is nonideal, and the clocks used by different devices in a network deviate slightly from true time. Time-sensitive networks are either synchronized or non-synchronized. In non-synchronized networks, local clocks run independently at every node and their deviations are not controlled.
In synchronized networks, the deviations 
are kept within bounds, using a
time-synchronization protocol or a \acl{GNSS}. With time-synchronization methods such as the \acf{PTP}~\cite{ieeeIEEEStandardPrecision2008}, WhiteRabbit~\cite{moreiraWhiteRabbitSubnanosecond2009} or the \ac{GPS}~\cite{powersGPSGalileoUTC2004}, the clock deviation bound is $\sim$1$\mu$s or less; we call such cases ``tightly synchronized''. Here, the clock deviation is smaller than the latency requirements of network flows ($\sim$1ms for avionics systems), and tightly-synchronized networks are typically analyzed as if clocks would be ideal. Some other networks require time synchronization only for network management purposes and use a method such as the \acf{NTP}~\cite{murtaQRPp14CharacterizingQuality2006}, which provides a clock deviation bound of $\sim$100ms; we call such cases ``loosely synchronized''.

	Consider a flow of data shaped at some point in the network by a token-bucket filter with rate $r$ and burst $b$ (Figure~\ref{fig:prob:pfr}). After traversing some network element, say $S$, the flow typically becomes more bursty and no longer satisfies the burst tolerance $b$. Assume a per-flow regulator $R$ is applied to the flow at the output of $S$, in order to re-enforce the burst tolerance $b$, using a token-bucket filter with same rate $r$ and burst $b$. The token-bucket filter delays data of the flow that comes out of $S$ in too large bursts. However, if clocks are ideal, the shaping-for-free property means that the worst-case delay of the flow through $S$ and $R$ is the same as the worst-case delay through $S$ alone (i.e., late packets are not delayed by the regulator). Now assume the clocks are nonideal and the network is not synchronized. If the clock at the token-bucket filter $R$ is too slow, the true value of $r$ implemented by $R$ is slightly less than that of the source; this will lead to a slow, but steady, buildup of backlog at the input buffer of $R$, which might lead over time to an arbitrarily large delay or unexpected loss.

	This simple example suggests that clock nonidealities might significantly affect the delay analysis of time-sensitive networks with regulators. It has raised concerns and discussions in the ongoing standardization process of \ac{IEEE} \acs{ATS} \cite{ieeeDraftStandardLocal2019a}. In this paper, we provide theoretical foundations to the problem and we determine to what extend delay analyses are affected in non-synchronized and synchronized networks.
Our main contributions are: 
		
$\bullet$ We propose a time model for non-synchronized and synchronized networks; it can be used for computation of latency bounds. Using the example of \ac{TSN}, we show that the model parameters can easily be obtained from industrial requirements.
		
$\bullet$ To compute latency bounds when clocks are nonideal, we propose a toolbox to be used with other network calculus results.
		
$\bullet$ In non-synchronized networks, we show that the configuration of regulators must be adapted to take into account the clock imperfections. If not adapted, we prove that regulators can yield unbounded latencies or unexpected packet losses.
		
$\bullet$ For non-synchronized networks, we refine and provide a formal justification for the method proposed in \cite[Annex V.8]{ieeeDraftStandardLocal2019a} for configuring the regulators and for avoiding the problem mentioned above. This rate- and burst-cascade method increases the rate and burst tolerance at every regulator along the path of a flow; it requires that the parameters of a regulator depend on the position of the regulator along the flow path, thus it adds complexity to the control plane. It applies to both \ac{PFR} and \ac{IR}. We propose an alternative method, \acf{ADAM}, that uses the same regulator parameters at all re-shaping points on the flow path thus makes the control plane simpler; it applies to \ac{PFR} only. We also compare the delay bounds obtained with each method, but we leave to future work the practical evaluation of the two methods.
		
$\bullet$ In synchronized networks, 
we compute a bound on the delay penalty imposed by \ac{PFR}s. In tightly-synchronized networks, this penalty is small compared to latency bounds, and the current practice of ignoring it is adequate. In contrast, in loosely-synchronized networks, we show an example where the delay penalty can be significant thus should be taken into account.

$\bullet$ The conclusions are very different in synchronized networks with \ac{IR}s. We show that, even in tightly-synchronized networks, \ac{IR}s can yield unbounded delay or unexpected loss if the residual clock inaccuracies are not accounted for. The method of rate and burst cascade can be used to avoid this problem.	
	
In Sections~\ref{sec:related} and~\ref{sec:prob}, we present the related work and provide the necessary background on time-sensitive networks, regulators and network calculus. We introduce our assumptions and our time and network models in Section~\ref{sec:sys-model}. In Section~\ref{sec:toolbox}, we detail our toolbox of network calculus results. We then analyze regulators in non-synchronized and synchronized networks in Sections~\ref{sec:manag-async} and \ref{sec:manag-sync}, respectively. We make our conclusive remarks in Section~\ref{sec:conclu}. Proofs of propositions are available in Appendix~\ref{sec:appendix}.
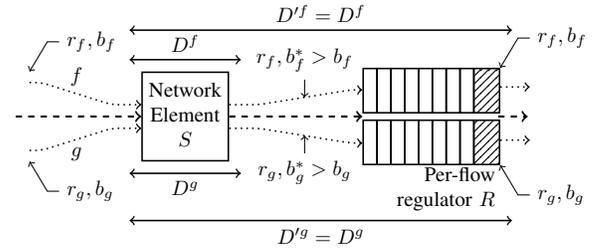
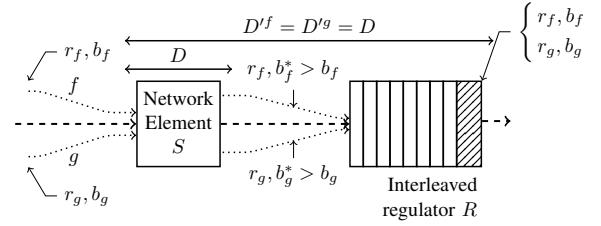
\begin{figure}\centering\begin{minipage}{0.9\linewidth}\centering %
		\resizebox{\linewidth}{!}{
\begin{tikzpicture} %
\tikzstyle{mn} = [draw, minimum height=1.5cm]
\tikzstyle{nn} = [draw, minimum height=0.75cm, anchor=south west]
\tikzstyle{ns} = [draw, minimum height=0.75cm, anchor=north west]
\tikzstyle{fil} = [draw, pattern=north east lines, text width=0.2cm]
\tikzstyle{hfil} = [fil, minimum height=0.75cm]
\tikzstyle{ffil} = [fil, minimum height=1.5cm]
\tikzstyle{cor} = [xshift=-\pgflinewidth]

\node[minimum height=1.5cm] at (0,0) (source) {};
\node[minimum height=0.75cm, anchor=north west] at ([cor]source.north east) (fs1) {};
\node[minimum height=0.75cm, anchor=south west] at ([cor]source.south east) (fs2) {};

\node[mn] at (3,0) (fifo) {\makecell[c]{Network\\Element\\$S$}};
\draw[->, line width=1pt, dashed] (source) -- (fifo);

\node[anchor=south west] at ([xshift=0.5cm, yshift=0.2cm] fs1.north east) (fs1val) {$r_f,b_f$};
\node[anchor=north west] at ([xshift=0.5cm, yshift=-0.2cm] fs2.south east) (fs2val) {$r_g,b_g$};
\draw[->, line width=0.5pt] (fs1val.west) -- ++(-0.2cm,0) -- (fs1.north east);
\draw[->, line width=0.5pt] (fs2val.west) -- ++(-0.2cm,0) -- (fs2.south east);

\node[ns] at (6,-0.05) (p) {};
\node (spfr) at (6,0) {};
\node[ns] at ([cor]p.north east) (p) {};
\node[ns] at ([cor]p.north east) (p) {};
\node[ns] at ([cor]p.north east) (p) {};
\node[ns] at ([cor]p.north east) (p) {};
\node[ns] at ([cor]p.north east) (p) {};
\node[ns] at ([cor]p.north east) (p) {};
\node[ns] at ([cor]p.north east) (p) {};
\draw let \p1=(p.north east) in node at (\x1,0) (epfr) {};

\node[nn] at (6,0.05) (p) {};
\node[nn] at ([cor]p.south east) (p) {};
\node[nn] at ([cor]p.south east) (p) {};
\node[nn] at ([cor]p.south east) (p) {};
\node[nn] at ([cor]p.south east) (p) {};
\node[nn] at ([cor]p.south east) (p) {};
\node[nn] at ([cor]p.south east) (p) {};
\node[nn] at ([cor]p.south east) (p) {};

\node[hfil, cor, anchor=north west] at ([yshift=-0.05cm]epfr.center) (fpfr2) {};
\node[hfil, cor, anchor=south west] at ([yshift=0.05cm]epfr.center) (fpfr1) {};
\node[anchor=north east] at ([xshift=0.5cm,yshift=0.16cm]fpfr2.south west) {\makecell[r]{Per-flow\\regulator $R$}};
\node[anchor=south west] at ([xshift=0.5cm, yshift=0.2cm] fpfr1.north east) (fpfr1val) {$r_f,b_f$};
\node[anchor=north west] at ([xshift=0.5cm, yshift=-0.2cm] fpfr2.south east) (fpfr2val) {$r_g,b_g$};
\draw[->, line width=0.5pt] (fpfr1val.west) -- ++(-0.2cm,0) -- (fpfr1.north east);
\draw[->, line width=0.5pt] (fpfr2val.west) -- ++(-0.2cm,0) -- (fpfr2.south east);

\draw[->, line width=1pt, dashed] (fifo) -- (spfr.center);
\draw[->, line width=1pt, dashed] ([xshift=0.4cm]epfr.center) -- ++(0.5cm,0);

\draw[->, rounded corners=0.1cm, dotted] ([yshift=0.2cm] fs1.east) -- ++(0.2cm,0) -- ([yshift=0.2cm, xshift=-0.5cm] fifo.west) node[above, pos=0.5] {$f$}-- ([yshift=0.2cm]fifo.west);
\draw[->, rounded corners=0.1cm, dotted] ([yshift=-0.2cm] fs2.east) -- ++(0.2cm,0) -- ([yshift=-0.2cm, xshift=-0.5cm] fifo.west) node[below, pos=0.5] {$g$}-- ([yshift=-0.2cm]fifo.west);

\draw[<-, rounded corners=0.1cm, dotted] ([yshift=0.45cm] spfr.center) -- ++(-0.2cm,0) -- ([yshift=0.2cm, xshift=0.5cm] fifo.east) node[pos=0.5] (targetf) {} -- ([yshift=0.2cm]fifo.east);
\draw[<-, rounded corners=0.1cm, dotted] ([yshift=-0.45cm] spfr.center) -- ++(-0.2cm,0) -- ([yshift=-0.2cm, xshift=0.5cm] fifo.east) node[pos=0.5] (targetg) {} -- ([yshift=-0.2cm]fifo.east);

\draw[->, dotted] ([yshift=0.45cm] fpfr1.south east) -- ++(0.5cm,0);
\draw[->, dotted] ([yshift=-0.45cm] fpfr1.south east) -- ++(0.5cm,0);

\node[anchor=south] at ([yshift=0.3cm] targetf) (inter1val) {$r_f,b^*_f>b_f$};
\node[anchor=north] at ([yshift=-0.3cm] targetg) (inter2val) {$r_g,b^*_g>b_g$};
\draw[->, line width=0.5pt] (inter1val.south) -- (targetf.center);
\draw[->, line width=0.5pt] (inter2val.north) -- (targetg.center);

\draw[<->] ([xshift=-0.2cm, yshift=0.2cm]fifo.north west) -- ([xshift=0.2cm, yshift=0.2cm] fifo.north east) node[above, pos=0.5] {$D^f$};
\draw[<->] let \p1=([xshift=-0.2cm, yshift=0.7cm]fifo.north west), \p2=([xshift=0.2cm] fpfr1.north east) in (\x1,\y1) -- (\x2,\y1) node[above, pos=0.5] {$D'^f = D^f$};
\draw[<->] ([xshift=-0.2cm, yshift=-0.2cm]fifo.south west) -- ([xshift=0.2cm, yshift=-0.2cm] fifo.south east) node[below, pos=0.5] {$D^g$};
\draw[<->] let \p3=([xshift=-0.2cm, yshift=-1cm]fifo.south west), \p4=([xshift=0.2cm] fpfr2.south east) in (\x3,\y3) -- (\x4,\y3) node[below, pos=0.5] {$D'^g = D^g$};

\end{tikzpicture}} %
		\vspace{-0.2cm}
		\subcaption{\label{fig:prob:pfr}The \acl{PFR} uses one FIFO queue per flow (2~here). The shaping is for free for each individual flow.} %
	\end{minipage}\\\begin{minipage}{0.9\linewidth}\centering %
		\resizebox{\linewidth}{!}{
\begin{tikzpicture} %
\tikzstyle{mn} = [draw, minimum height=1.5cm]
\tikzstyle{n} = [draw, minimum height=1.5cm, anchor=west]
\tikzstyle{fil} = [draw, pattern=north east lines, text width=0.2cm]
\tikzstyle{hfil} = [fil, minimum height=0.75cm]
\tikzstyle{ffil} = [fil, minimum height=1.5cm]
\tikzstyle{cor} = [xshift=-\pgflinewidth]

\node[minimum height=1.5cm] at (0,0) (source) {};
\node[minimum height=0.75cm, anchor=north west] at ([cor]source.north east) (fs1) {};
\node[minimum height=0.75cm, anchor=south west] at ([cor]source.south east) (fs2) {};

\node[mn] at (3,0) (fifo) {\makecell[c]{Network\\Element\\$S$}};
\draw[->, line width=1pt, dashed] (source) -- (fifo);

\node[anchor=south west] at ([xshift=0.5cm, yshift=0.2cm] fs1.north east) (fs1val) {$r_f,b_f$};
\node[anchor=north west] at ([xshift=0.5cm, yshift=-0.2cm] fs2.south east) (fs2val) {$r_g,b_g$};
\draw[->, line width=0.5pt] (fs1val.west) -- ++(-0.2cm,0) -- (fs1.north east);
\draw[->, line width=0.5pt] (fs2val.west) -- ++(-0.2cm,0) -- (fs2.south east);

\node[n] at (6,0) (p) {};
\node (spfr) at (p.west) {};
\node[n] at ([cor]p.east) (p) {};
\node[n] at ([cor]p.east) (p) {};
\node[n] at ([cor]p.east) (p) {};
\node[n] at ([cor]p.east) (p) {};
\node[n] at ([cor]p.east) (p) {};
\node[n] at ([cor]p.east) (p) {};
\node[n] at ([cor]p.east) (p) {};
\node (epfr) at (p.east) {};

\node[ffil, cor,  anchor=west] at (epfr.center) (fpfr) {};
\node[anchor=north east] at ([xshift=0.5cm]fpfr.south west) {\makecell[r]{Interleaved\\regulator $R$}};
\node[anchor=south west] at ([xshift=0.5cm, yshift=0.2cm] fpfr.north east) (fpfrval) {	$\left\lbrace\begin{aligned}
		&r_f,b_f\\
		&r_g,b_g\\
	\end{aligned}\right.$};
\draw[->, line width=0.5pt] (fpfrval.west) -- ++(-0.2cm,0) -- (fpfr.north east);

\draw[->, line width=1pt, dashed] (fifo) -- (spfr.center);
\draw[->, line width=1pt, dashed] (fpfr1.south east) -- ++(0.5cm,0);

\draw[->, rounded corners=0.1cm, dotted] ([yshift=0.2cm] fs1.east) -- ++(0.2cm,0) -- ([yshift=0.2cm, xshift=-0.5cm] fifo.west) node[above, pos=0.5] {$f$}-- ([yshift=0.2cm]fifo.west);
\draw[->, rounded corners=0.1cm, dotted] ([yshift=-0.2cm] fs2.east) -- ++(0.2cm,0) -- ([yshift=-0.2cm, xshift=-0.5cm] fifo.west) node[below, pos=0.5] {$g$}-- ([yshift=-0.2cm]fifo.west);

\draw[<-, rounded corners=0.1cm, dotted] ([yshift=0.1cm] spfr.center) -- ++(-0.2cm,0) -- ([yshift=0.5cm, xshift=0.5cm] fifo.east) node[pos=0.5] (targetf) {} -- ([yshift=0.5cm]fifo.east);
\draw[<-, rounded corners=0.1cm, dotted] ([yshift=-0.1cm] spfr.center) -- ++(-0.2cm,0) -- ([yshift=-0.5cm, xshift=0.5cm] fifo.east) node[pos=0.5] (targetg) {} -- ([yshift=-0.5cm]fifo.east);

\node[anchor=south] at ([yshift=0.3cm] targetf) (inter1val) {$r_f,b^*_f>b_f$};
\node[anchor=north] at ([yshift=-0.3cm] targetg) (inter2val) {$r_g,b^*_g>b_g$};
\draw[->, line width=0.5pt] (inter1val.south) -- (targetf.center);
\draw[->, line width=0.5pt] (inter2val.north) -- (targetg.center);

\draw[<->] ([xshift=-0.2cm, yshift=0.2cm]fifo.north west) -- ([xshift=0.2cm, yshift=0.2cm] fifo.north east) node[above, pos=0.5] {$D$};
\draw[<->] let \p1=([xshift=-0.2cm, yshift=0.7cm]fifo.north west), \p2=([xshift=0.2cm] fpfr1.north east) in (\x1,\y1) -- (\x2,\y1) node[above, pos=0.5] {$D'^f = D'^g = D$};

\end{tikzpicture}} %
		\vspace{-0.2cm}
		\subcaption{\label{fig:prob:ir}The \acl{IR} uses only one FIFO queue. If the network element $S$ is \acs{FIFO} for the aggregate $\lbrace f,g\rbrace$, then the shaping is for free for the aggregate and both flows have the same delay bound $D'^f=D'^g=D$ through $S$ and $R$. } %
	\end{minipage} %
	\caption{\label{fig:prob}Shaping-for free property of two type of regulators in a network with ideal clocks.}
\end{figure}

\section{Related work}
\label{sec:related}

The modeling of clock nonidealities benefits from a solid background in time metrology. In~\cite{ituDefinitionsTerminologySynchronization1996}, the \ac{ITU} defines fundamental notions and models for clocks used in synchronization networks. These models are further detailed in reference documents such as~\cite{ituTimingRequirementsSlave2004} for the \ac{ITU} and \cite{ieeeIEEEStandardDefinitions2009} for the \ac{IEEE}. Correspondingly, industrial requirements have been defined for clocks to be used in networks. They put constraints and bounds on the clock characteristics defined in the above documents. For instance, for a clock to be used in a synchronized \ac{TSN} network, it must meet the requirements of \cite[Annex B.1]{ieeeIEEEStandardLocal2011}.

Many technologies have been developed to perform the time-synchronization of a network. The most common are the use of an external time source such as a \acf{GNSS}~\cite{powersGPSGalileoUTC2004}, and the use of time-synchronization protocols such as \ac{NTP}~\cite{rfc5905}, \ac{PTP}~\cite{ieeeIEEEStandardPrecision2008}, generalized \ac{PTP}~\cite{ieeeIEEEStandardLocal2011} and WhiteRabbit~\cite{moreiraWhiteRabbitSubnanosecond2009}. Other technologies are tailored for wireless sensor networks~\cite{wuClusterBasedConsensusTime2015,liuJointTimeSynchronization2016}. Each comes with various performance analyses: we can cite~\cite{murtaQRPp14CharacterizingQuality2006} for \ac{NTP}, \cite{dierikxWhiteRabbitPrecision2016} for WhiteRabbit. However, the design of a ``good'' time-synchronization protocol remains an open issue~\cite{frerisFundamentalLimitsSynchronizing2011,ridouxjulienPrinciplesRobustTiming2010}, and each protocol proposition adds to the time-metrology domain by identifying limits of previous protocols~\cite{gengExploitingNaturalNetwork2018a,veitchRobustSynchronizationSoftware2004}.

The analyses mention that the precision of time-synchronization protocols depends on the latency and jitter of synchronization messages and of control data. The latency and jitter bounds of time-sensitive networks were studied in numerous occasions using network calculus. For \ac{TSN}, we can cite~\cite{zhaoComparisonTimeSensitive2018,bouillard2018deterministic,specht2016urgency}. TSN provides many building blocks to provide guaranteed delay bounds. The use of regulators with \acf{ATS} is one of them. Regulators have been studied in~\cite{mohammadpourLatencyBacklogBounds2018,specht2016urgency,wandelerPerformanceAnalysisGreedy2006a}. Other building blocks include \ac{CQF} \cite{IEEEISOIEC2019a}, \ac{CBS} \cite{IEEEStandardLocal2010} or \ac{TAS} \cite{iso-iec-ieeeISOIECIEEE2018}. Choosing the best set of building blocks for a specific network is an open question~\cite{navetUsingMachineLearningto2019}, and several studies have compared their performance~\cite{nasrallahPerformanceComparisonIEEE2019}. In this paper, we focus on regulators, thus on \ac{ATS}.   

Interestingly, the reciprocal aspect, the effect of the clock and synchronization nonidealities on the network performances, appears to be much less studied. For example, the above-mentioned studies always assume that clocks are perfect in the network or that time distribution is perfect~\cite{nasrallahPerformanceComparisonIEEE2019}. Even~\ac{ns-3}~\cite{nsnamNs3NetworkSimulator2011} has a unique time-base for simulating network events, and the simulation of clock behavior and of time-synchronization protocols requires work-arounds such as the one in \cite[Section III]{maruyamaNS3BasedIEEE2015}. In this paper, we assess the effect of clock nonidealities on the performance bounds of regulators (\ac{ATS}) and we leave to future work the evaluation of their effect on other TSN building blocks.

In \cite[Section 7.1]{bergerRelevanceAdversarialQueueing2014}, the authors consider the nonidealities of clocks to show that the adversarial traffic generation described in \cite{bhattacharjeeInstabilityFIFOArbitrarily2005a,andrewsInstabilityFIFOPermanent2009} does not induce unbounded latencies under realistic network assumptions. However, their traffic model is limited to non-bursty flows \cite[Section 3]{bergerRelevanceAdversarialQueueing2014} and their results are obtained through simulations. In this paper, we focus on obtaining upper-bounds on worst-case latencies, and finding the worst-case using simulations is known to be an intractable issue~\cite{bouillard2018deterministic,phanComposingFunctionalStateBased2007}.

A seminal work on applying network calculus on networks with nonideal clocks for obtaining worst-case upper-bounds lies in \cite{daigmorteTraversalTimeWeakly2016,daigmorteEvaluationAdmissibleCAN2017, daigmorteReducingCANLatencies2017}. The authors are interested in a bandwidth management method that spreads the time at which frames are scheduled on the \ac{CAN} bus. Such bandwidth management uses offsets between the time instants at which frames are scheduled. They note that such scheduling of the frames across different nodes would require a time-synchronization mechanism between the network nodes. However, they show that a ``weak synchronization'' of the nodes (with a $1$ms precision bound) already provides significant performance improvements~\cite[Section 5.3]{daigmorteTraversalTimeWeakly2016}. Their time-model~\cite[Section 3.1]{daigmorteTraversalTimeWeakly2016} is hence limited to synchronized networks (including loosely-synchronized networks). It doesn't consider bounds on the clock frequency offset, but only on the clock time-error bound (called the ``phase bound'' in their paper).
In the present paper, we are interested in both synchronized and non-synchronized networks, and we show that taking into consideration the bounds on the frequency offsets helps having tight delay bounds. Last, the authors of \cite{daigmorteTraversalTimeWeakly2016,daigmorteEvaluationAdmissibleCAN2017, daigmorteReducingCANLatencies2017} compute tight arrival curves, service curves and latency bounds of periodic flows on a \ac{CAN} bus that employs the above-mentioned offset management method~\cite[Section 4.3.1]{daigmorteTraversalTimeWeakly2016}. Then they adapt the results to take into account the phase bounds (the time-error bounds)~\cite[Section 4.3.2]{daigmorteTraversalTimeWeakly2016}. As a consequence, their analysis of clock nonidealities is limited to a specific network model, with specific service and arrival curves. In this paper, we are interested in computing the effect of clock nonidealities given any service curve, arrival curve or latency bound.

The present paper is also motivated by discussions with industrial partners, specifically in the context of \ac{TSN}. The ongoing draft for \ac{ATS} mentions the possible consequences of clock nonidealities when deploying regulators \cite[Annex V.8]{ieeeDraftStandardLocal2019a} and proposes some solutions that would benefit from theoretical foundations, as proposed in this paper.

\vspace{-0cm}
\section{Time-Sensitive Networks With Regulators}
\label{sec:prob}

Here, we provide some background that is required by the rest of the paper. In time-sensitive networks, delays at network elements have to be bounded in worst case, not in average. To this end, network calculus is often used \cite{cruzCalculusNetworkDelay1991,cruzCalculusNetworkDelay1991a,leboudecNetworkCalculusTheory2001,changPerformanceGuaranteesCommunication2000}. This framework uses cumulative functions such as $A(t)$, where $A(t)$ is the total number of bits observed at some observation point between an arbitrary time reference $0$ and time $t$. Traffic flows are assumed to be bounded by arrival curve constraints, namely, constraints of the form: $\forall t\geq s\geq 0, R(t)-R(s)\leq  \alpha(t-s)$ (the function $\alpha$ is called ``arrival curve''). A frequently used function  is $\gamma_{r,b}$ defined by $\gamma_{r,b}(t)=rt+b$ for $t>0$ and $\gamma_{r,b}(t)=0$ for $t\leq 0$. It corresponds to a flow that is limited to a rate $r$ and a burst $b$ (``leaky-bucket'' arrival curve).

The service offered by a network element is also assumed to be lower bounded by a condition of the form
$\forall t\geq0: D(t)\geq (A\otimes \beta)(t)$ where $A$ [resp. $D$] is the input [resp. output] cumulative function, the function $\beta$ is called ``service curve'' and the symbol $\otimes$ is the min plus convolution, such that $(A\otimes \beta)(t)= \inf_{s\geq 0}\left(A(s)+\beta(t-s)\right)$. By Reich's formula~\cite{norros1994storage}, a network element reduced to a single server queue with output rate $R$ offers the service curve $\beta(t)=Rt$. If in addition the server can take vacations for durations upper bounded by $T$ per busy period, the system offers a service curve $\lambda_{R,T}(t)=\left|R(t-T)\right|^+$($=\max(0,R(t-T))$), called ``rate-latency'' service curve. A FIFO network element that guarantees a delay upper bounded by $D$ offers the service curve $\delta_D$ defined by $\delta_D(t)=0$ for $t\leq D$ and $\delta_D(t)=+\infty$ for $t>D$. The concatenation of network elements that each offers a service curve also offers a service curve equal to the min-plus convolution of service curves. Many schedulers, such as Deficit Round Robin or the Credit Based Shaper of IEEE TSN, are characterized by rate-latency service curves
\cite{boyer2012deficit,daigmorte2018modelling}.

Classic network calculus results give delay and backlog bounds at a network element, given some arrival-curve and service-curve constraints. They also give bounds on the burst of the output, i.e., arrival curves for the output flows \cite{leboudecNetworkCalculusTheory2001,mohammadpour2019improved}.

Time-sensitive networks can be per-flow networks or per-class networks. In the former case, schedulers are per-flow, e.g., there is one Deficit Round Robin queue per flow. It follows that service-curve properties apply to individual flows. In contrast, in class-based networks, schedulers offer a service guarantee (captured by a service curve) to the aggregate of all flows that belong to one specific class; inside a class, the scheduler is FIFO. Providing delay bounds in class-based network is more complicated than in per-flow networks. In particular, we need to compute good bounds on the burst increase that can occur at every hop \cite{charnyDelayBoundsNetwork2000}. Practical solutions for complex class-based networks almost all require that flows are re-shaped individually inside the network.

Flow shaping (or re-shaping) is performed by regulators, that are either per-flow (\ac{PFR}) or interleaved (\ac{IR}). A PFR, configured with arrival curve $\sigma$ for flow $f$, makes sure that its output satisfies the arrival curve constraint $\sigma$ (also called ``shaping curve''). If the input data of flow $f$ arrives too fast, the packets are stored in the \ac{PFR} buffer (with one FIFO queue per flow), until the earliest time when it is possible to release the packet without violating the arrival curve constraint. 

Note that a regulator, as defined above, controls the arrival curve of any individual flow at its output. As a consequence, \acf{ATS}~\cite{ieeeDraftStandardLocal2019a} (the standardization of \ac{PFR} and \ac{IR}) differs from other types of shapers that act on a per-class basis. For instance, \acf{TAS}~\cite{iso-iec-ieeeISOIECIEEE2018} is a scheduler that controls, using gates, when classes may access the link, depending on a global distributed schedule. It cannot control the arrival curve of a single flow that continues to suffer from the contention with other flows of the same class. It is also worth noting that, as regulators only need to measure elapsed time, they are insensitive to constant time offsets. On the contrary, some schedulers such as \ac{TAS} require the nodes to be time-synchronized.

The input-output characterization of a \ac{PFR} with concave shaping curve $\sigma$  is well understood:  it can be modelled as the sequence of two virtual systems, a fluid greedy shaper followed by a packetizer~\cite{le2002some}. The fluid greedy shaper is similar to the PFR but operates on individual bits: it releases fractions of a packet as soon as possible. The packetizer receives the bit-by-bit output of the former and stores it until a full packet can be released. The former is a min-plus linear system, characterized by the relation $D(t)=(A\otimes\sigma)(t)$ where $A$ [resp. $D$] is the input [resp. output] cumulative function of the fluid greedy shaper. The latter does not increase the per-packet delay bound. It can be ignored in latency calculations \cite{chang1998general}. In particular, a PFR with concave shaping curve $\sigma$ is a network element that offers $\sigma$ as service curve.

An IR is similar to a \ac{PFR} with one large difference. All packets of all flows are stored in a single FIFO queue, the packet at the head of the queue is released at the earliest time when it is possible without violating the arrival curve constraint for this flow, and packets of other flows wait until they appear at the head of the queue~\cite{specht2016urgency}. \ac{PFR}s and \ac{IR}s have state information per flow but \ac{IR}s have a single FIFO queue thus are preferred in the context of IEEE TSN, which is typically per-class \cite{ieeeDraftStandardLocal2019a}. Unlike \ac{PFR}s, no service curve characterization appears to be known for \ac{IR}s.

When a flow is served in a network element, its burst typically increases, and the increase is large when the flow shares a network element with many bursty flows in the same class. Worse, the more bursty the competing flows are, the larger the burst increase is. The increased burst leads to larger delays and backlogs in downstream nodes, which is the ingredient for a cascade effect and can even lead to instability when there are cyclic dependencies \cite{charnyDelayBoundsNetwork2000, andrewsInstabilityFIFOPermanent2009}. This does not occur if flows are reshaped by regulators at some or every node \cite{thomasCyclicDependenciesRegulators2019}. In contrast, if regulators are used, the burst of the output is known and can be imposed to be the same as at the source, which enables us to find good delay bounds.

The regulator, however, is itself a queuing system and its impact on delay should be accounted for.
Here, an essential property of regulators is used, called ``shaping-for-free''. For a \ac{PFR}, it can be stated as follows \cite[Theorem~3]{leboudecTheoryTrafficRegulators2018}. Consider flows, such as $f$ on Figure~\ref{fig:prob:pfr}, which satisfy arrival-curve constraints $\sigma_f$ (one per flow) and are served in a network element $S$, which is FIFO for packets inside every flow; after $S$, the flows are processed by a \ac{PFR} with the same arrival curves (i.e., $\sigma_f$ for flow $f$). Then, for  every flow $f$, the worst-case delay, $D^{f}$, for any packet of the flow through $S$ is equal to the worst-case delay, $D'^{f}$ through $S$ and the \ac{PFR}. In other words, re-shaping does not increase the worst-case delay of the previous hop (but reduces the worst-case delay at the next hop). If the \ac{PFR} is replaced by an \ac{IR}, there is a similar result for the worst-case delay over all flows that are processed by the \ac{IR} (Figure~\ref{fig:prob:ir}), assuming $S$ is FIFO for all packets of all flows inside a given class \cite[Theorem~5]{leboudecTheoryTrafficRegulators2018}. The shaping-for-free property is established assuming all clocks are ideal.

In reality, the clocks at the source and at different \ac{PFR}s or \ac{IR}s along the path of a flow are likely to be different. We say that the regulators in a time-sensitive network are ``non-adapted'' if we ignore the clock deviations and apply the shaping-for-free property. As we show in Sections~\ref{sec:manag-async} and \ref{sec:manag-sync}, this can lead to severe problems, in both non-synchronized and synchronized networks.

\section{System model}\label{sec:sys-model}
	We first propose a framework for modeling a clock within a device. We derive this framework for non-synchronized networks and synchronized networks. We show how the clocks requirements within \ac{TSN} are easily captured under our time model. Finally, we detail the network model under consideration. Notations for the whole paper are available in Table~\ref{tab:notations}.

\begin{table} %
	\caption{\label{tab:notations}Notation} %
	\resizebox{\linewidth}{!}{
	\begin{tabular}{l|l}
		$t$ & The measure of a time instant \\
		$\mathcal{H}_i$ & The clock of a device or the true time (\acs{TAI})\\
		$d_{g\rightarrow i}$ & The relative time-function between $\h_g$ and $\h_i$\\
		$T_{\text{start}}$ & When any of the clocks shows $T_{\text{start}}$, all other clocks have \\ & positive values and no device has sent any bit yet.\\
TAI & International Atomic Time (true time)\\
		\hline
		$\eta$ & The timing-jitter bound\\
		$\rho$ & The clock-stability bound\\
		$\Delta$ & The time-error bound in a synchronized network\\
		\hline
		$R$ [resp $R^{*}$]& The cumulative arrival function of a flow\\ & at the input [resp output] of a device \\
		$\alpha$ [resp $\alpha^{*}$] &  An arrival curve for a flow \\ & at the input [resp output] of a device\\
		$\beta$ & The service provided by a system (series of devices) to a flow \\
		$D'_k$ [resp $D_k$] & An upper-bound on the delay of a flow through its $k$-th hop \\ & [resp excluding the regulator of the $k$-th hop] \\
		\hline
		$R^{\h_i}, \alpha^{\h_i},$ & We use the super-script to denote the clock used to observe \\ $\beta^{\h_i}, D^{\h_i} $& one of the previous notions. \\ & Example: $\alpha^{\h_i}$ is the arrival curve as observed with clock $\h_i$.\\
		\hline
		$\gamma_{r,b}(t)=rt+b, t>0$ & Leaky-bucket arrival curve of rate $r$, burst $b$ \\
\hspace{.95cm}$= 0, t\leq 0$&\\
		$\delta_D(t)=+\infty, t>D$ & Service curve of a variable $D$-bounded delay  \\
\hspace{.95cm}$= 0, t\leq D$&\\
		$\lambda_{R,T}(t)=\left|R(t-T)\right|^+$ & Rate-latency service curve of rate $R$ and latency $T$\\
        \hline
		$\mathcal{R}_k(r)$ [resp $\mathcal{Q}_k(b)$] & The smallest rate $\geq r$ [resp burst $\geq b$] that can be implemented \\ & by regulator $\text{Reg}_k$.\\ 
		\hline
		$|x|^+$ & $\max(0,x)$\\
		$a\wedge b$ [resp $a\vee b$] & $\min(a,b)$ [resp $\max(a,b)$] \\
	\end{tabular}}
\end{table}
 %
\subsection{General Time-Model}
\label{sec:time-model}
	
	We denote with $\hze$ the true time, i.e. the \ac{TAI}. We assume that it represents a continuous quantity. When reading the time indicated by a clock $\mathcal{H}_i$ in the network, only a subset of values are readable, due to the precision of the clock logic. We assume that this clock logic enforces the accessible values to increase when the clock is read over the course of the true time. 
	With this assumption, and since the clock output is observable only at discrete time instants, it is possible to find a continuous, strictly increasing function $h_i(t)$ of the true time, which returns the value that the clock would display at true time $t$ if it had infinite precision (Figure~\ref{fig:general-time}). Accessing the values of clock $\h_i$ corresponds to sampling the function $h_i$ at the discrete time instants where the clock logic does a transition.
	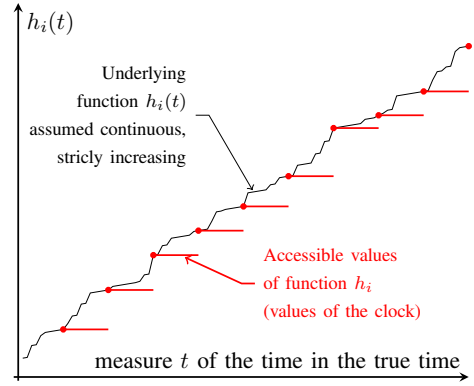
\begin{figure}\centering %
		\resizebox{0.7\linewidth}{!}{
\begin{tikzpicture}[samples=200] %

	\pgfplotsset{ticks=none}

	\begin{axis}[xlabel=measure $t$ of the time in the true time, ylabel=$h_i(t)$,xmin=-0.1,xmax=10,ymin=-0.1,ymax=10, axis lines=center] 
	
	\addplot[no marks, line width=0.4pt] table {general1.dat};
	\addplot[jump mark left, red, mark=*, mark size=1pt] table {general2.dat};

	\node at (axis cs:2,7) (legendTrue) {\makecell[r]{\footnotesize{Underlying}\\\footnotesize{ function $h_i(t)$}\\\footnotesize{assumed continuous,}\\\footnotesize{stricly increasing}}};
	\node[red] at (axis cs:7.2,2.5) (legendAccess) {\makecell[l]{\footnotesize{Accessible values}\\\footnotesize{of  function $h_i$}\\\footnotesize{(values of the clock)}}};
	
	\draw[->, line width=0.4pt] (legendTrue.east) -- ++(0.2cm,0) -- (axis cs:5.2,5);
	\draw[->, red] (legendAccess.west) -- ++(-0.2cm,0) -- (axis cs:3.7,3.2);
	\end{axis} %
\end{tikzpicture}}%
		\caption{\label{fig:general-time} Function $h_i$ (Section \ref{sec:time-model}). Only a subset of values are readable. We assume that the underlying function $h_i$ is continuous, strictly increasing.} %
	\end{figure} %
	This allows us to introduce the relative time function, which will be useful in Section~\ref{sec:toolbox}, as follows. 
	\begin{definition}[Relative time function $d_{g\rightarrow i}$]\label{def:relative-time-function}
		For any two clocks $\h_g,\h_i$ we define the relative time function from $\h_g$ to $\h_i$ as $d_{g\rightarrow i} = h_i\circ h_g^{-1}$, where $\circ$ represents the composition of functions.
	\end{definition}
	
	$d_{g\rightarrow i}(t)$ is the value that clock $\h_i$ would have when clock $\h_g$ shows time $t$ if they both had infinite precision. Note that $h_i$ and $h_g$ are both continuous, strictly increasing. Hence, $d_{g\rightarrow i}$ is also continuous, strictly increasing. We note $d_{g\rightarrow i}^{-1}$ its inverse. Obviously, $d_{g \rightarrow i}^{-1}$ is also equal to $d_{i \rightarrow g}$.
	
	The time-metrology literature uses the time-error function~\cite[Section 4.5.13]{ituDefinitionsTerminologySynchronization1996}, which is equal to the time difference between two clocks $\h_i$ and $\h_g$. With our notation, the time-error function between $\h_g$ and $\h_i$ is $d_{g\rightarrow i}(t) - t$ for any $t$ measured with $\h_g$.
	
	For two clocks $\h_g,\h_i$, $d_{g\rightarrow i}(0)$ is the value of the clock $\h_i$ when the clock $\h_g$ shows zero. We denote by $T_{\text{start}}$ the maximum of this value for any pair $(\h_g,\h_i)$, where each of $\h_g,\h_i$ represents a clock in the network or the \ac{TAI}. We assume that, for any pair of clocks, no device transmits any bit in the network until $d_{g \rightarrow i}$ reaches $T_{\text{start}}$. We believe this is not a limiting assumption because we can assume that all devices start with a rough estimation of the true time, hence of each other. Consequently, $T_{\text{start}}$ could be in the magnitude of hours or days, whereas the origin of time on network devices usually refers to several years in the past.
	
	In this paper, we consider two time-models: %
	
	$\bullet$ The non-synchronized time-model: Clocks are free-running and do not interact with each other, but constraints on their stability can be formulated.
	
	$\bullet$ The synchronized time-model: In addition to the stability requirements on the clocks, a time-synchronization algorithm (such as NTP or PTP), or an external time source (such as a \ac{GNSS}) is employed. It distributes a time reference to all devices so that their local time matches with each other within a specified bound.

\subsection{The Non-Synchronized Time-Model}
\label{sec:time-model:async}

	We consider a clock $\h_i$ in the network. We first assume that this clock does not interact with any other. This corresponds to the \emph{free-running mode} defined in~\cite[Section 4.4.1]{ituDefinitionsTerminologySynchronization1996}. We assume that, when compared to the true time $\hze$, the clock behaves as per the time-error model provided in \cite[Annex I.3]{ituDefinitionsTerminologySynchronization1996}, reported below:
	\begin{equation}
		h_i(t) - t = x_{0,i} + t y_{0,i}(T)  + \frac{D_{i}}{2} t^2 + \psi_i(t)
	\end{equation}
	With $x_{0,i}$ the initial time offset of $\h_i$ (relative to the true time), $y_{0,i}(T)$ the frequency offset at constant temperature $T$, defined relative to the true frequency of 1 second per second, $D_{i}$ the average frequency drift of clock $\h_i$ caused by its aging, and $\psi_i$ is a random noise component.

	As the clock is free-running, $x_{0,i}$ can take any value. Hence, it is impossible to put constraints on the time-error function itself but we can constrain its evolution. Take $s<t$, then
	\begin{equation}\label{eq:error-model-ev}
		h_i(t) - h_i(s) - (t-s) = (t-s)y_{0,i}(T) + \frac{D_i}{2}(t^2-s^2) + \psi_i(t) - \psi_i(s)
	\end{equation}
	
	The first term is linear with $(t-s)$ and depends on $y_{0,i}(T)$.
	
	\textbf{Remark:} Some time-metrology studies, including~\cite{gengExploitingNaturalNetwork2018a}, denote with ``clock drift'' the linear evolution of the time-error function, \emph{i.e.} $y_{0,i}(T)$. However, we decide here to remain consistent with the definitions of~\cite{ituDefinitionsTerminologySynchronization1996} that defines the \emph{frequency offset} as the linear evolution of the time-error function and the \emph{frequency drift} as the second-order evolution of the time-error function.
	
	In industrial requirements, the frequency offset is usually bounded by a value that depends on the temperature conditions. We note $y_{\max,i}(T)$ the bound on the frequency offset of clock $\h_i$ at constant temperature $T$ and $\rho_{1,i}=\max_{\lbrace T\in\mathcal{T}\rbrace}y_{\max,i}(T)$ its highest value over the whole range of temperatures $T\in\mathcal{T}$ that the network is expected to encounter. It corresponds to term $a_1+a_2$ in \cite[Section 11.2.1]{ituTimingRequirementsSlave2004}.
	
	The second term is in second order with $t-s$ and depends on the relative aging of the clocks.
	
	\textbf{Remark:} \cite{ituDefinitionsTerminologySynchronization1996} defines two main contributors to the frequency drift, \emph{i.e.} to the second-order evolution of the time-error function \cite[Section 4.5.4]{ituDefinitionsTerminologySynchronization1996}: the aging of the clock and the external effects, the latter being dominated by the effect of the evolution of the temperature on the clock frequency. Hence, when the temperature depends on the time, which is the case for any real-life single observation such as the one presented in Figure~7 of~\cite{gengExploitingNaturalNetwork2018a}, then the term $t\cdot y_{0,i}(T(t))$ has indeed a second-order (or even higher-order) term that depends on $\frac{dT}{dt}$. However, for a network model, it is generally impossible to predict the evolution of the temperature, \emph{i.e.} the function $T(t)$. Upper-bounding the frequency offset $y_{0,i}(T(t))$ by $\rho_{1,i}=\max_{\lbrace T\in\mathcal{T}\rbrace}y_{\max,i}(T)$ ensures that the model remains conservative for any possible evolution of the temperature. This is also consistent both with \emph{a)} the \ac{ITU} specifications \cite[Section 11.2.1]{ituTimingRequirementsSlave2004} which increase the linear frequency offset by a factor, $a_2$ in \cite[Section 11.2.1]{ituTimingRequirementsSlave2004}, that accounts for the temperature changes, and keep the aging as the only remaining second-order contributor, and \emph{b)} with the \ac{IEEE} specifications, which define a frequency offset bound independent from the temperature~\cite[Annex B.1.1]{ieeeIEEEStandardLocal2011}. As a consequence, the second-order evolution of the time-error function in our model only depends on the aging of the clock, the term $D_it^2/2$. It depends on the constant aging coefficient $D_i$ of clock $\h_i$.
	
	In industrial requirements, such as \ac{TSN}~\cite{ieeeIEEEStandardLocal2011}, this term is often neglected. To verify the assumption that the aging is negligible, we compare the linear coefficient due to the frequency offset, $\rho_{1,i}$, with the linear coefficient due to the aging, $D_i\frac{t+s}{2}$. The following numerical application verify that we can neglect $D_i$.
		
	\ant Call $L$ the order of magnitude of the lifetime of a network. Then the aging coefficient over $L$ is bounded by $D_i\cdot L$. If we take for $\rho_{1,i}$ the value specified in \cite[Annex B.1.1]{ieeeIEEEStandardLocal2011} and for $D_i$ (not specified in \ac{TSN}) the largest value specified in \cite[Table 24]{ituTimingRequirementsSlave2004}, we obtain $L=D_i/\rho_{1,i}\approx10^{-4}/10^{-14}=10^{10}s$. For the aging to be noticeable, compared to the acceptable frequency offset, the network shall be in operation for more than 300years.
	
	The last term of Equation~(\ref{eq:error-model-ev}) is made of noise and is further detailed in \cite[Section 8]{ituTimingRequirementsSlave2004}. It has two components. The former is the timing jitter \cite[Section 4.1.12]{ituDefinitionsTerminologySynchronization1996}, a high-frequency signal. It is usually constrained by a peak-to-peak jitter bound $\eta_i$ \cite[Annex B.1.3.1]{ieeeIEEEStandardLocal2011}.
	
	\textbf{Remark:} Due to its stochastic nature, the probability for a clock to present a jitter higher than the specified jitter bound, or to have a frequency offset higher than the frequency offset bound cannot be 0 \cite[Note of Section 8.3]{ituTimingRequirementsSlave2004}. However, providing bounds with a zero probability of excursion would not be possible, neither would it lead to deterministic delay bounds that are at the core of working groups such as \ac{TSN} or Detnet. We here assume that we can find bounds on the jitter and on the frequency offset with a fulfillment probability high enough so as to keep the probability of an excursion over the lifetime of a network negligible. Then, any clock that does not behave as per the specified bounds is considered as faulty, and the probability of a faulty clock, together with its probable consequences, can be studied using the best practices of the safety-analysis domain~\cite{ecssECSSQST3002CFailureModes2009,ecsssECSSQST4002CHazardAnalysis2008}. This assumption is also consistent with industrial requirements such as \ac{TSN} \cite[Annex B.1]{ieeeIEEEStandardLocal2011} that typically provides bounds on the jitter and on the frequency offset without specifying the fulfillment probability - hence considering any excursion as a failure.
	
	\ant In the \ac{TSN} requirements~\cite[Annex B.1.3.1]{ieeeIEEEStandardLocal2011}, the jitter of any clock $\h_i$ shall not exceed 2ns peak-to-peak, that is $\eta_i=\text{2ns}$ for any $\h_i$ in the network.
	
	The last noise component, the wander~\cite[Section 4.1.15]{ituDefinitionsTerminologySynchronization1996}, is a low-frequency noise signal. As opposed to the jitter or the frequency offset, it is usually constrained using the \ac{TDEV}, a statistical metric. The \ac{TDEV} of clock $\h_i$, $\text{dev}_i(t-s)$ is an upper-bound on the deviation of the time-error function over an observation period $t-s$. With the same argumentation as before, we assume that we can find $m_i\in\mathbb{R}$ such that the probability of the time-error function to present a wander over $t-s$ higher than $m_i\cdot\text{dev}_i(t-s)$ is negligible over the lifetime of the network, and that such situation can be considered as a failure. Note that the order of magnitude of $m_i$ would typically be around 10 for a normal distribution, as such multiples of the deviation already achieve a very high fulfillment probability.
	
	In the majority of technical requirements, such as for \ac{TSN}~\cite[Annex B.1]{ieeeIEEEStandardLocal2011}, the \ac{TDEV} is in the form $\text{dev}_i(t-s)=(t-s)c_i$, with $c_i$ a constant. In some cases, it can even be sub-linear~\cite[Section 8.1]{ituTimingRequirementsSlave2004} or negligible~\cite[Section 8.]{ituTimingRequirementsSlave2004}. To remain conservative, we consider the linear form and we define $\rho_{2,i}=m_i\cdot c_i$. Hence, the wander of Equation~(\ref{eq:error-model-ev}) is upper-bounded by $\rho_{2,i}\cdot(t-s)$.
	
	We define the stability of clock $\h_i$ as $\rho_i = 1+\rho_{1,i}+\rho_{2,i}$. As a consequence, the linear coefficient of Equation~(\ref{eq:error-model-ev}) is bounded by $\rho_i-1$. It is worth noting that in general, one of $\rho_{1,i}$,$\rho_{2,i}$ is negligible with respect to the other. For example, when the non-synchronized clock $\h_i$ uses the phase-locking mechanism~\cite[Section 4.4.4]{ituDefinitionsTerminologySynchronization1996} (also called syntonization or frequency synchronization \cite{recommendation20068261}) with a near-perfect clock representing $\hze$, then $\rho_{1,i}$ is null~\cite[Appendix I.3]{ituDefinitionsTerminologySynchronization1996}. When phase-locking mechanisms are not used, for instance in \ac{TSN}, then $\rho_{1,i}$ is usually much higher than $c_i$, and with our previous remark on the value of $m_i$, $\rho_{1,i}$ would typically remain much higher than $\rho_{2,i}$.
	
	\ant In the \ac{TSN} requirements, for any clock $\h_i$ in the network, we shall have: $\rho_{1,i}=100$ppm~\cite[Annex N.1.1]{ieeeIEEEStandardLocal2011} and $c_i=5\cdot10^{-9}$~\cite[Table B.1]{ieeeIEEEStandardLocal2011}. Hence, even with a margin $m_i$ of hundred times the deviation, $\rho_{2,i}$ remains much smaller than $\rho_{1,i}$ and we obtain that any clock $\h_i$ has the stability $\rho_i=1+1\cdot 10^{-4}$.

	From the above considerations, Equation~(\ref{eq:error-model-ev}) has (1) a jitter, high-frequency term, constrained by $\eta_i$, (2) a linear term, bounded by $\rho_i-1$, and (3) no higher-level terms. We can upper-bound: 
	\begin{equation}\label{eq:bounds-single-clock}
		\forall t\ge s, h_i(t) - h_i(s) \le (t-s)\rho_i + \eta_i
	\end{equation}
	We now lower-bound the evolution by changing $h_i$ for $h_i^{-1}$ and we obtain:
	\begin{equation}\label{eq:bounds-single-clock-bis}
	\forall t\ge s, (t-s-\eta_i) \frac{1}{\rho_i} \le h_i(t) - h_i(s) \le (t-s)\rho_i + \eta_i
	\end{equation}
	
	Let us now consider a pair of clocks $(\h_i,\h_g)$. We are interested in bounding the evolution of the relative time between $\h_g$ and $\h_i$, that is $d_{g\rightarrow i}(t)$ for $t$, the measure of a time instant with $\h_g$. We obtain, for $t \ge s$:
	\begin{align*}
		d_{g\rightarrow i}(t)-d_{g\rightarrow i}(s) &= h_i(h_g^{-1}(t)) - h_i(h_g^{-1}(s))\\
		&\le (h_g^{-1}(t) - h_g^{-1}(s))\rho_i + \eta_i\\
		&\le (t-s)\rho_i\rho_g + \eta_g \rho_i + \eta_i
	\end{align*}
	
	For a given network, we define the \emph{clock stability bound} of the network as $\rho=\max_{\lbrace \h_i,\h_g\rbrace}(\rho_i\rho_g)$ and the \emph{timing-jitter bound} of the network as $\eta=\max_{\lbrace \h_i,\h_g\rbrace}(\eta_g \rho_i + \eta_i)$. Then any pair of clocks $(\h_i,\h_g)$ in the network verify $d_{g\rightarrow i}(t)-d_{g\rightarrow i}(s) \le \rho (t-s) + \eta$.	The lower bound is obtained by symmetry, by flipping the $\h_g$ and $\h_i$ roles.
	
	We have finally obtained the following model: for any pair $(\h_g,\h_i)$, $\forall t \ge s$,
	\begin{equation}\label{eq:constr-async}
		\frac{1}{\rho} (t-s - \eta) \le d(t) - d(s) \le \rho (t-s) + \eta
	\end{equation}
	where $d = d_{g\rightarrow i}$. Note that $\rho$ and $\eta$ do not depend on $\h_g,\h_i$.
	
	\begin{figure}\centering\begin{minipage}{0.48\linewidth}\centering
		\resizebox{\linewidth}{!}{	\begin{tikzpicture}[samples=200]
		
	\pgfplotsset{ticks=none}
	
	\begin{axis}[xlabel=$t \text{ observed with }\mathcal{H}_{g}$, ylabel=$d(t) \text{ observed with }\mathcal{H}_{i}$,
	xmin=-3.5,xmax=20,ymin=-2,ymax=25, axis lines=center] 

	\addplot[no marks, domain=0:20, dotted] {x+2} node[pos=0.6, sloped, above] {Slope 1};
	\addplot[no marks, domain=3:20, red] {1.5*x+3.5} node[pos=0.3, sloped, above] {Slope $\rho$};
	\addplot[no marks, dotted, domain=3:6, red] {0.5*x+2};
	\draw[red] (axis cs:3,5) -- (axis cs:3,8);
	\addplot[no marks, domain=6:20, red] {0.5*x+2} node[pos=0.6, sloped, below] {Slope $1/\rho$};
	\addplot[no marks, domain=3:6, red] {0+5};
	\addplot[no marks] table[y expr=\thisrowno{1}+2] {asyncdat.dat};
	
	\draw[line width=0.3pt] (axis cs:3,0) -- (axis cs:3,5) node[pos=0, below] {$s$};
	\draw[line width=0.3pt] (axis cs:0,5) -- (axis cs:3,5) node[pos=0, left] {$d(s)$};

	
	\draw[dotted] (axis cs:3,8) -- (axis cs:1.5,8);
	\draw[dotted, <->, red] (axis cs:1.5,5) -- (axis cs:1.5,8) node[midway, left] {$\eta$};
	
	\draw[dotted] (axis cs:6,5) -- (axis cs:6,3);
	\draw[dotted, <->, red] (axis cs:3,3) -- (axis cs:6,3) node[midway, below] {$\eta$};

	
	\end{axis}
	\end{tikzpicture}}
		\subcaption{\label{fig:asynchroExemple}In the non-synchronized time-model.}
	\end{minipage}\hspace{0.02\linewidth}\begin{minipage}{0.48\linewidth}\centering
		\resizebox{\linewidth}{!}{\begin{tikzpicture}[samples=200]

\pgfplotsset{ticks=none}

\begin{axis}[xlabel=$t$ observed with $\h_g$, ylabel=$d(t)$ observed with $\h_i$,
xmin=-3.5,xmax=20,ymin=-2,ymax=25, axis lines=center] 

\addplot[no marks, domain=0:20, dashed] {x} node [pos=0.65,sloped, above]{$d(t)=t$};


\draw[red] (axis cs:4,3) -- (axis cs:4,6);
\addplot[no marks, domain=4:7, red] {0+3};
\addplot[no marks, domain=7:9, red] {0.5*x-0.5}; 
\addplot[no marks, domain=9:20, red] {x-5};
\addplot[no marks, domain=4:10, red] {1.5*x-0} node[pos=0.3, sloped, above] {Slope $\rho$};
\addplot[no marks, domain=10:20, red] {x+5};

\addplot[no marks] table {syncdat.dat};

\draw[line width=0.3pt] (axis cs:4,0) -- (axis cs:4,3) node[pos=0, below] {$s$};
\draw[line width=0.3pt] (axis cs:0,3) -- (axis cs:4,3) node[pos=0, left] {${d}(s)$};

\draw[<->, dotted, red] (axis cs:15,10) -- (axis cs:15,15) node[midway, right] {$\Delta$}; 

\draw[dotted] (axis cs:4,6) -- (axis cs:2.5,6);
\draw[dotted, <->, red] (axis cs:2.5,3) -- (axis cs:2.5,6) node[midway, left] {$\eta$};



\end{axis}
\end{tikzpicture}} %
		\subcaption{\label{fig:synchroExemple}In the synchronized time-model.} %
	\end{minipage}
		\caption{\label{fig:enveloppe-all}Envelope of $d(t)$ (red) and example of a possible evolution of $d(t)$ (black).}
	\end{figure}
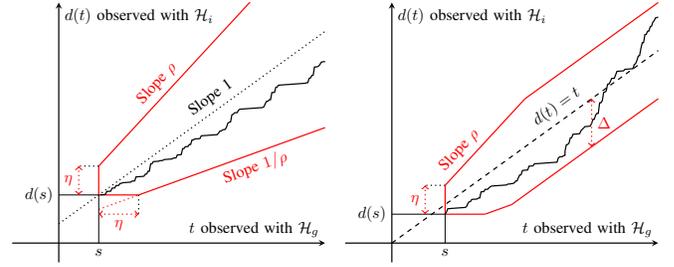	
	
	Figure~\ref{fig:asynchroExemple} presents, for a given known starting point $(s,d(s))$, the possible evolution space of $d(t)$ in the non-synchronized model as well as a possible trajectory. We note that the time-error function $d(t) - t$ can be unbounded in this model. We also note the symmetry of the non-synchronized envelope: if $d$ meets the stability conditions (\ref{eq:constr-async}), then $d^{-1}$ also does.
	
	\ant For a \ac{TSN} network, any clock $\h_i$ satisfies $\rho_i= 1+1\cdot 10^{-4}$ and $\eta_i=$ 2ns. Hence, for a \ac{TSN} network, the clock-independent parameters are $\rho=1+2\cdot 10^{-4}$ and $\eta=$ 4ns. Note that the above values are minimum requirements for clocks to be used as local clocks in \ac{TSN}. Of course, if better bounds are known from the manufacturer specifications, the values of $\rho$ and $\eta$ can be updated accordingly.

\subsection{The Synchronized Time-Model}
	In a synchronized network, clocks meeting the above stability requirements are synchronized with each other. The synchronization can be performed using for example \ac{PTP} \cite{ieeeIEEEStandardPrecision2008}, generalized \ac{PTP} defined in \ac{TSN}~\cite{ieeeIEEEStandardLocal2011}, \ac{NTP} \cite{millsNetworkTimeProtocol2010a}, WhiteRabbit \cite{moreiraWhiteRabbitSubnanosecond2009} or a \ac{GNSS}~\cite{powersGPSGalileoUTC2004}. When synchronization is used, the time-error function for any pair such as $(\h_g,\h_i)$ is bounded by the precision of the protocol.
	
	For a given network, we define $\Delta\ge0$ the \emph{time-error bound} of the network and we assume that for any pair $(\h_g,\h_i)$, $d$ meets the constraints of Equation (\ref{eq:constr-async}), plus
	\begin{equation}\label{eq:const-sync}
	 	\forall t, |d(t) - t| \le \Delta
	\end{equation}
	where $d=d_{g\rightarrow i}$. Note that $\Delta$ does not depend on $\h_g,\h_i$.

	\rema Clocks of a synchronized network have synchronization servos that control their frequency. This control can be used to perform some form of syntonization based on the synchronization messages as described in~\cite[Section 12.1]{ieeeIEEEStandardPrecision2008}. Another example, is NTP's second leap: for example, Google's public NTP services smear the second leap by increasing the clock frequency by 11.6ppm over a duration of 24h~\cite{googleLeapSmearPublic}. In both cases, the frequency offset used by the servo to control the clock might exceed the non-synchronized stability requirements of Equation~(\ref{eq:constr-async}). This is handled in our model by defining, for any clock $\h_i$, another linear term in the form $(t-s)\rho_{3,i}$ in Equation~(\ref{eq:error-model-ev}), where $\rho_{3,i}$ upper-bounds the properties of the servo of clock $\h_i$. Then, $\rho_i$ is redefined as $\rho_i=1+\rho_{1,i}+\rho_{2,i}+\rho_{3,i}$ and the network-wide parameter $\rho$ is redefined accordingly.
	
	\ant For tightly-synchronized networks, we take $\Delta=$1$\mu$s from \cite[Normative Annex B.3]{ieeeIEEEStandardLocal2011}. For loosely-synchronized networks, we select the precision of \ac{NTP}. In \cite[Figure 27]{rfc5905}, \ac{NTP} defines a ``step threshold'' of 125ms, and a survey of the \ac{NTP} network performed in \cite{murtaQRPp14CharacterizingQuality2006} notes that this value is hardly exceeded if the clock is synchronized~\cite[Section IV.B.1]{murtaQRPp14CharacterizingQuality2006}. Hence, we take $\Delta=$125ms.
	
 	Figure~\ref{fig:synchroExemple} presents, for a given known starting point $(s,d(s))$, the possible evolution space of $d(t)$ in the synchronized model as well as a possible trajectory. Note that the $\Delta$ envelope is not centered on the starting point but on the $d(t)=t$ function. Here again, we keep the symmetry noted previously on the non-synchronized constraints. %
 %
\subsection{Network Model}\label{sec:model:network}
	We model the network as a set of ``network elements'' and ``regulators''. Each flow $f$ has a non-cyclic path made of a series of connected network elements $(S_k)_{k=1\ldots n+1}$, with $k$ the index of the network element in the path of the considered flow $f$. Each flow $f$ is assumed to be processed by a regulator after each network element in its path, except the last one (Figure~\ref{fig:async-hop-model}). We call ``$k$th hop'' of this flow the sequence $S_k-\text{Reg}_k$. Each regulator can either be a \ac{PFR} or an \ac{IR}. For each flow $f$ and each index $k=1\ldots n$, we note Reg$_k$ the regulator that processes flow $f$ after the network element $S_k$. On Reg$_k$, we assume that we can configure a shaping curve $\sigma_k$ with a burst $b_{\text{Reg}_k}$ and a rate $r_{\text{Reg}_k}$ for this flow. Practical implementations support only limited accuracy, and we note $\mathcal{Q}_k(b)$ [resp. $\mathcal{R}_k(r)$] the lowest value that is configurable by this regulator and that is higher than $b$ [resp $r$]. By convention, Reg$_0$ denotes the source of the flow.
	\begin{figure}\centering %
		\resizebox{\linewidth}{!}{
\begin{tikzpicture}
	\tikzstyle{conf} = [draw, line width=0.4pt]
	\tikzstyle{ref} = [anchor=north, xshift=0.1cm, yshift=0.1cm]
	\tikzstyle{tt} = [line width=0.4pt]
		
	\node at (0,0) (s) {};
	\node[draw, anchor=west] at ([xshift=1cm] s.east) (sys) {$S_k$};
	\node[draw, anchor=west] at ([xshift=1cm] sys.east) (pfr) {Reg$_k$};
	\node[anchor=north] at (pfr.south east) {$\mathcal{H}_{\text{Reg}_k}$};
	\node[anchor=north] at (sys.south east) {$\mathcal{H}_{S_k}$};
	
	\node[draw, dashed, text width=1cm] at (-1.5,0) (prevreg) {Reg$_{k-1}$};
	
	\draw[dashed, <-] (prevreg.west) -- ++(-0.5cm,0);
	\draw[dashed,->] (prevreg.east) -- (0,0);
	\draw[-, dashed] let \p1=(s.east) in (s.east) -- (0,\y1) node[pos=1,anchor=center] (target1) {};
	\draw[dashed] (0,-0.2cm) -- (0,0.2cm);
	
	\node[anchor=north] at ([yshift=0.1cm, xshift=0.2cm]prevreg.south east) {$\mathcal{H}_{\text{Reg}_{k-1}}$};

	\draw[->] (s.east) -- (sys.west) node[pos=0, above] {$f$} node[pos=0.5, anchor=center] (targetIn) {};
	\draw[->] (sys.east) -- (pfr.west);
	\draw[->] (pfr.east) -- ++(2cm,0) node[pos=0.5,anchor=center] (targetOut) {};

	\node[conf] at ([yshift=1cm]pfr) (pfrConf) {$\sigma_k$};
	\draw[->, tt] (pfrConf.south) -- (pfr.north);
	
	\draw[<->] ([yshift=-0.8cm] sys.south west) -- ([yshift=-0.8cm] sys.south east) node[right, pos=1] {$D_k$};
	\draw[<->] ([yshift=-1cm] sys.south west) -- ([yshift=-1cm] pfr.south east) node[right, pos=1] {$D_k'$};
	
	\node at ([yshift=1cm] targetIn.center) (valIn) {$\alpha_{k-1}$};
	\draw[->, tt] (valIn.south) -- (targetIn.center);
	
	\node at ([yshift=1cm] targetOut.center) (valOut) {$\alpha_{k}$};
	\draw[->, tt] (valOut.south) -- (targetOut.center);
	
\end{tikzpicture}} %
		\caption{\label{fig:async-hop-model} Network calculus model for a regulated flow. At each hop, the flow goes through a network element $S_k$ and is then regulated by the regulator Reg$_k$. Each device of the model has its own internal clock, noted at the bottom right.} %
	\end{figure}
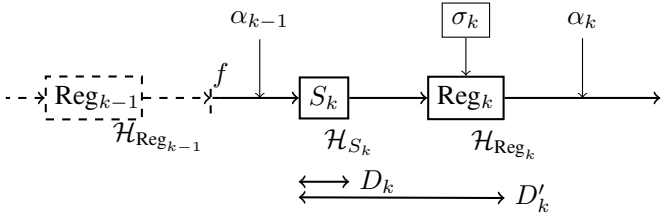 %
	When a set of flows share the same network element $S$, they do not need to share the same regulator just after $S$. However, when a set of flows share a same \ac{IR}, we assume that they also share the same upstream network element and that this network element is \ac{FIFO} for the considered set of flows.
		
	For each network element $S$, we assume that if we know the arrival curve of each flow $g$ going through $S$, then for any flow $f$ going through $S$ we can obtain a delay bound $D_S^f$ for the flow through the network element.
	
	Each network element and each regulator (including, for each flow $f$, its source Reg$_0$) is called a device. We assume that each device uses a local clock noted at the bottom right of the device, as in Figure~\ref{fig:async-hop-model}. In particular, any regulator releases packets according to the regulation rules that depend on its type \cite{leboudecTheoryTrafficRegulators2018,ayedHierarchicalTrafficShaping2014a} (see also Section~\ref{sec:prob}), but the computations for its operation are performed using its local clock. When we need to detail the clock used to observe a delay, an arrival curve or a service curve, we put it in superscript.
	
	Each flow $f$ exits it source Reg$_0$ with a leaky-bucket arrival curve of rate $r_0=r_{\text{Reg}_0}$ and burst $b_{0}=b_{\text{Reg}_0}$, when observed with $\h_{\text{Reg}_0}$.	For each flow $f$, we note $\alpha_k$ an arrival curve of the flow at the output of its $k$-th hop and $D_k$ [resp $D_k'$] an upper bound on the delay for the flow through S$_k$ [resp through the combination made of S$_k$ and Reg$_k$]. As the flow $f$ is not processed by any regulator after the last network element in its path $S_{n+1}$, for each flow $f$ we note by convention $D_{n+1}'=D_{n+1}$.
	
	\rema This general model also applies in the specific case where several devices share the same clock, for instance if they are located within a same network node. We can indeed write that the clocks of the considered devices compose a sub-network, with parameters $\rho'=\rho$, $\eta'=\eta$, and that they are synchronized together as per the constraints of Equation~(\ref{eq:const-sync}) with a sub-network synchronization bound $\Delta'$ very small or null. However, the editors of IEEE P802.1Qcr (\acl{ATS}) mention in the ongoing draft that even two devices sharing the same oscillator of a same node can have different notions of time~\cite [Section 8.6.11.2]{ieeeDraftStandardLocal2019a}. They also mention that this difference needs to be accounted for in their ad-hoc delay analysis in~\cite[Informative Annex V.6]{ieeeDraftStandardLocal2019a}. In such case, we take for the sub-network parameters $\rho'=\rho$, $\eta'=\eta$ and $\Delta'=\max(|\text{ClockOffsetMax}|,|\text{ClockOffsetMin}|)$, where \text{ClockOffsetMax} and \text{ClockOffsetMin} are defined in \cite[Section 8.6.11.2]{ieeeDraftStandardLocal2019a}. %

\section{Network Calculus Toolbox for Networks With Nonideal Clocks}\label{sec:toolbox}

The three-bound theorem of \cite{leboudecNetworkCalculusTheory2001} is valid whenever all the notions used in the theorem are expressed using the same clock: the arrival curves, the service curves, and the delays. Therefore, we propose a toolbox that can be used to change the clock used to observe one of the above notions; this results in an extension of network calculus.

In the entire section, we consider a device $j$ and a flow $f$ entering this device (see Figure~\ref{fig:ijSit}). Whenever $f$ and $j$ are unambiguous, the dependency of the defined functions and notions on $j,f$ will be omitted.
We also consider two clocks $\h_i$ and $\h_g$ and we denote with $d$ the function $d_{g\rightarrow i}$.

\subsection{Results on Delays}
\label{sec:toolbox:delay}
	We are interested in a bound on the measure, with the \ac{TAI}, of the delay that flow $f$ suffers through device $j$. If a delay bound is known with a different clock $\h_i$, the following proposition enables us to retrieve a bound as seen with any clock $\h_g$ (especially with the \ac{TAI}).
	
	\begin{proposition}[Changing the clock for a delay bound]\label{prop:toolbox:delay}
		If $D^{\h_i}$ is an upper-bound on a delay measured with clock $\h_i$, then $(d\oslash d)(D^{\h_i})$ is an upper-bound on the delay measured with $\h_g$, where $\oslash$ denotes the min-plus de-convolution\footnote{$f\oslash g:t\mapsto \sup_{\tau} \left[ f(t+\tau)-g(\tau) \right]$, see \cite{leboudecNetworkCalculusTheory2001}} and $d$ is the function $d_{g\rightarrow i}$.
	\end{proposition}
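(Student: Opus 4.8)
The plan is to unwind the definition of delay in terms of the cumulative functions $R$ (input) and $R^*$ (output) of flow $f$ at device $j$, but measured with two different clocks. Fix a packet, or more precisely fix an arrival-time/departure-time pair; a delay bound $D^{\h_i}$ means that if the packet enters at time $a$ on clock $\h_i$ and leaves at time $b$ on clock $\h_i$, then $b - a \le D^{\h_i}$. The same physical event corresponds, on clock $\h_g$, to entry time $h_g(h_i^{-1}(a)) = d_{i\to g}(a) = d^{-1}(a)$ and departure time $d^{-1}(b)$, using Definition~\ref{def:relative-time-function} and the observation there that $d_{g\to i}^{-1} = d_{i\to g}$. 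So the delay on clock $\h_g$ for that packet is $d^{-1}(b) - d^{-1}(a)$, and we must bound this quantity uniformly over all admissible pairs $(a,b)$ with $a \le b \le a + D^{\h_i}$.

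**The core estimate.**

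First I would set $\tau = a$ and $D = b - a \le D^{\h_i}$, so the $\h_g$-delay is $d^{-1}(a + D) - d^{-1}(a)$. Since $d^{-1} = d_{i\to g}$ is nondecreasing and $D \le D^{\h_i}$, this is at most $d^{-1}(a + D^{\h_i}) - d^{-1}(a)$. Now I take the supremum over all possible arrival times $a$ (equivalently, over the argument $\tau$): the worst-case $\h_g$-delay is bounded by
\[
\sup_{\tau}\left[ d^{-1}(\tau + D^{\h_i}) - d^{-1}(\tau) \right].
\]
The final step is to recognize this expression as a min-plus de-convolution. By the footnote's definition, $(f \oslash g)(t) = \sup_\tau [f(t+\tau) - g(\tau)]$, so with $f = g = d^{-1}$ and $t = D^{\h_i}$ we get exactly $(d^{-1} \oslash d^{-1})(D^{\h_i})$. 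Since $d^{-1} = d_{i\to g} = d_{g\to i}^{-1}$ and the statement writes $d$ for $d_{g\to i}$, this is $(d^{-1}\oslash d^{-1})(D^{\h_i})$; I would check against the paper's conventions whether they intend $d$ in the statement to already denote the relevant direction, and if so restate the bound as $(d\oslash d)(D^{\h_i})$ with $d$ the function carrying data from the $\h_i$-clock back to the $\h_g$-clock. The monotonicity of $d$ (hence of its inverse), established right after Definition~\ref{def:relative-time-function}, is what licenses replacing $D$ by $D^{\h_i}$ inside the argument.

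**Main obstacle.**

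The routine calculation is trivial; the real care needed is bookkeeping about which direction the relative time function runs and what "delay measured with clock $\h$" means operationally when clocks only take discrete readable values. I would address this by working with the underlying continuous functions $h_i, h_g$ throughout (the paper already justified that these exist and are continuous and strictly increasing), so that $a, b$ range over real values and $d^{-1}$ is a genuine continuous increasing bijection; the discreteness of readable clock values only makes the true delay a sampled version of this, hence no larger, so the bound survives. The one genuine subtlety is making sure the supremum over $\tau$ is over the correct variable — it must range over arrival instants as measured on the clock in whose frame the de-convolution argument is expressed — and that the de-convolution is taken with $d$ against itself rather than $d$ against the identity; getting the two copies of $d$ right is the crux, and it falls out precisely because both the arrival and departure instants of the packet are transported by the same function $d^{-1}$.
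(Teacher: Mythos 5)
Your argument is correct and is essentially the paper's proof, carried out with explicit arrival and departure instants: since both instants are transported to the $\h_g$ timeline by the same strictly increasing function, the $\h_g$-delay is bounded by the de-convolution of that function with itself, and monotonicity licenses replacing the actual delay by its bound $D^{\h_i}$ inside the argument. You are also right to pause over the direction of $d$: with $d=d_{g\rightarrow i}$ as the proposition states, the map carrying $\h_i$-instants to $\h_g$-instants is $d^{-1}$, so your computation correctly yields $(d^{-1}\oslash d^{-1})(D^{\h_i})$, whereas the paper's own proof writes $d(A(t))=A(d(t))$, which implicitly treats $d$ as $d_{i\rightarrow g}$ and is internally consistent but at odds with the convention announced in the proposition. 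The slip is harmless for everything downstream: by the symmetry of the stability envelope (if $d$ satisfies Equation~(\ref{eq:constr-async}) then so does $d^{-1}$), both $(d\oslash d)(\tau)$ and $(d^{-1}\oslash d^{-1})(\tau)$ are bounded by $\rho\tau+\eta$, and by $\min(\rho\tau+\eta,\tau+2\Delta)$ in the synchronized model, so the specialized delay bounds are unchanged. Your remark on discrete clock readings is also resolved exactly as the paper intends, by working with the continuous, strictly increasing underlying functions $h_i,h_g$.
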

	The proof is in Appendix~\ref{proof:toolbox:delay}.	When a delay is observed with the \ac{TAI}, we call it a \ac{TAI} delay.

	\rema A measure of an \ac{ETE} delay is often a subtraction of two other measures: the time value given by a clock on the recipient's side when the packet arrives and the time value given by a clock located on the source side when the packet departs. If the clocks used to timestamps these two events are not perfectly synchronized with the \ac{TAI}, the measured delay may violate a delay bound computed in the \ac{TAI} using the work presented in this paper. Nevertheless, we consider the \ac{TAI} delay as being the ``true delay'' experienced by the flow, and any real-world measurement should take into account the uncertainties of the measurement clocks as per the best-practices of the time metrology domain \cite{ituDefinitionsTerminologySynchronization1996,ieeeIEEEStandardDefinitions2009}.
	
	In the \emph{Non-Synchronized Time Model}, we have, for any $t,\tau$, $(d \oslash d) (\tau) \le \rho \tau + \eta$ and thus $D^{\h_g}\leq \rho D^{\h_i}+\eta$. In the \emph{Synchronized Time Model}, we obtain $(d \oslash d) (\tau) \le \min(\rho \tau + \eta,\tau+2\Delta)$ and thus $D^{\h_g}\leq \min\left(\rho D^{\h_i}+\eta, D^{\h_i}+2\Delta\right)$.

	\ant For both time models, for values of the delay bound $D^{\h_i}$ ranging from $1\mu$sec to $200$msec, the relative increase on the bound ranges from $0.4\%$ to $0.02\%$. Practically, the effect of clock nonidealities on the \emph{definition} of delay bounds in time-sensitive networks can thus be ignored.

\subsection{Results on Arrival Curves}

	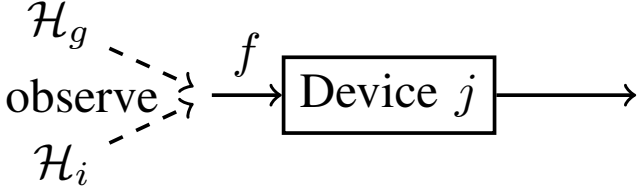
\begin{figure}\centering
		\resizebox{\linewidth}{!}{\begin{tikzpicture}
	\node[draw] at (0,0) (nj) {Device $j$};
	\node[anchor=east] at (-2,-0.5) (ni) {$\h_i$};
	\node[anchor=east] at (-2,0.5) (ng) {$\h_g$};
	\draw[->] ([xshift=-0.5cm] nj.west) -- (nj.west) node[pos=0.5, above] {$f$} node[pos=0, anchor=center] (f) {};
	\draw[->] (nj.east) -- ++(1cm,0);
	\draw[->, dashed] (ni) -- (f);
	\draw[->, dashed] (ng) -- (f);
	\node at (-2.2,0) {observe};
	
\end{tikzpicture}}
		\caption{\label{fig:ijSit}Clocks $\h_g$ and $\h_i$ observe flow $f$ entering device $j$}
	\end{figure}

	From \cite{leboudecNetworkCalculusTheory2001}, the cumulative function $R(t)$ of the flow entering the device $j$ is defined, when measured with the \acs{TAI}, as the number of bits entering the device between the instant measured as $0$ in the \ac{TAI} and the instant measured as $t$ in the \ac{TAI}. When we use clock $\h_i$ instead of \acs{TAI}, we obtain a different cumulative function. We call $R^{\h_i}(t)$ the cumulative function obtained when counting the number of bits entering a device between the instant measured as $0$ using $\h_i$ and the instant measured as $t$ using $\h_i$ (Figure~\ref{fig:ijSit}).  
%
%
	The following proposition gives the relation between cumulative functions obtained with different clocks:
	\begin{proposition}[Changing the clock of a cumulative function]\label{prop:cumu} %
		For any clock pair $(\h_g,\h_i)$:
		$$\forall t, R^{\mathcal{H}_{g}}(t) = R^{\mathcal{H}_i}(d_{g\rightarrow i}(t))$$
	\end{proposition}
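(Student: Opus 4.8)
The plan is to route the two cumulative functions through a single ``absolute'' bit count indexed by true time. Let $N(\tau)$ denote the number of bits of flow $f$ that have entered device $j$ up to and including true time $\tau$; this is $R^{\hze}(\tau)$ up to an additive constant, but keeping $N$ generic avoids committing to where the TAI origin sits. Since each $h_i$ is continuous and strictly increasing it is invertible on its range, and the true instant at which clock $\h_i$ reads a value $x$ is $h_i^{-1}(x)$. A bit that physically enters at true time $\tau$ is recorded by an observer using $\h_i$ as arriving at clock-time $h_i(\tau)$, and $h_i(\tau)\le t$ iff $\tau\le h_i^{-1}(t)$. Hence, directly from the definition of $R^{\h_i}$ as the count between the instant ``$\h_i$ reads $0$'' and the instant ``$\h_i$ reads $t$'',
\[
R^{\h_i}(t) = N\bigl(h_i^{-1}(t)\bigr) - N\bigl(h_i^{-1}(0)\bigr),
\]
and likewise $R^{\h_g}(t) = N\bigl(h_g^{-1}(t)\bigr) - N\bigl(h_g^{-1}(0)\bigr)$.

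Next I would evaluate the first identity at $d_{g\rightarrow i}(t)$ and use Definition~\ref{def:relative-time-function}: because $d_{g\rightarrow i}=h_i\circ h_g^{-1}$, we get $h_i^{-1}\circ d_{g\rightarrow i}=h_g^{-1}$, so
\[
R^{\h_i}\bigl(d_{g\rightarrow i}(t)\bigr) = N\bigl(h_g^{-1}(t)\bigr) - N\bigl(h_i^{-1}(0)\bigr).
\]
Comparing this with the expression for $R^{\h_g}(t)$, the claim reduces to $N\bigl(h_i^{-1}(0)\bigr)=N\bigl(h_g^{-1}(0)\bigr)$, i.e. to the statement that no bit is transmitted between the true instant at which $\h_g$ reads $0$ and the true instant at which $\h_i$ reads $0$.

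This reduction is the only step that is not bookkeeping, and it is exactly what the $\T$ assumption supplies; I expect it to be the main obstacle, since it requires unwinding the definition of $\T$ and the ``no bits before $\T$'' hypothesis. For every pair of clocks $(\h_j,\h_i)$ one has $d_{j\rightarrow i}(0)\le\T$ by the definition of $\T$ as the maximum of these quantities; writing $d_{j\rightarrow i}(0)=h_i\bigl(h_j^{-1}(0)\bigr)$ and $\T=h_i\bigl(h_i^{-1}(\T)\bigr)$ and using strict monotonicity of $h_i$ gives $h_j^{-1}(0)\le h_i^{-1}(\T)$. Taking the minimum over $i$, the true instant at which any clock $\h_j$ reads $0$ precedes the earliest true instant at which some clock reads $\T$; by assumption no device has sent a bit by that earliest instant (and the companion requirement that all clocks are strictly positive there excludes the degenerate boundary case), so $N\bigl(h_j^{-1}(0)\bigr)=0$ for every clock, in particular for $\h_i$ and $\h_g$. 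Substituting back, both displays collapse to $R^{\h_g}(t)=R^{\h_i}\bigl(d_{g\rightarrow i}(t)\bigr)$ for all $t$. One minor caveat worth a sentence in the write-up: when $0$ or $t$ is not an actually readable value of $\h_i$, ``$\h_i$ reads $x$'' must be understood through the continuous idealisation $h_i$ and its inverse rather than through the quantised clock output; since $N$ only jumps at true bit-arrival instants and the endpoints of the counting window are true instants, this quantisation is immaterial.
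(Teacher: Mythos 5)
Your proof is correct and follows essentially the same route as the paper's: both reduce the identity to showing that no bits have been transmitted by the (TAI) instants at which $\h_i$ and $\h_g$ read $0$, which is exactly what the $\T$ assumption delivers. Routing everything through the TAI-indexed bit count $N$ lets you avoid the paper's case split on the sign of $d_{g\rightarrow i}(0)$, but the key lemma and reduction are the same.
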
 %
	The proof of the proposition is given in Appendix~\ref{proof:toolbox:cumu}. We now define the concept of an arrival curve observed with a clock. %
	\begin{definition} %
		We say that a wide-sense increasing function $\alpha$ is an arrival curve for the flow entering the device $j$ when observed with clock $\mathcal{H}_i$ if $\forall t, \tau, R^{\mathcal{H}_i}(t+\tau) - R^{\mathcal{H}_i}(t) \le \alpha(\tau)$. We note such a function $\alpha^{\mathcal{H}_i}$.
	\end{definition}

	\rema In particular, when $\mathcal{H}_i$ is the \ac{TAI}, we retrieve the definition of an arrival curve, as defined in \cite[Chapter 1]{leboudecNetworkCalculusTheory2001}.
	
	\begin{proposition}[Changing the clock of an arrival curve]\label{prop:ac}
		If $\alpha^{\mathcal{H}_i}$ is an arrival curve for the flow entering the device and being observed with $\mathcal{H}_i$, then an arrival curve for the flow observed with $\mathcal{H}_{g}$ is $\alpha^{\mathcal{H}_{g}}:t\mapsto \alpha^{\mathcal{H}_i} \left((d\oslash d)(t)\right)$, where $\oslash$ is the min-plus de-convolution and $d=d_{g\rightarrow i}$.
	\end{proposition}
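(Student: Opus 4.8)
The plan is to reduce the statement to Proposition~\ref{prop:cumu} together with the definition of the min-plus de-convolution. First I would fix $t$ and $\tau$ and use Proposition~\ref{prop:cumu}, which gives $R^{\h_g}(u)=R^{\h_i}(d(u))$ for every $u$, to rewrite the increment of the cumulative function seen with $\h_g$ as an increment of the one seen with $\h_i$:
$$R^{\h_g}(t+\tau)-R^{\h_g}(t) = R^{\h_i}\big(d(t+\tau)\big) - R^{\h_i}\big(d(t)\big).$$
Since $d=d_{g\rightarrow i}$ is (strictly) increasing, $d(t+\tau)\ge d(t)$, so the right-hand side is the increment of $R^{\h_i}$ over a window of length $d(t+\tau)-d(t)$ starting at $d(t)$; applying the hypothesis that $\alpha^{\h_i}$ is an arrival curve observed with $\h_i$ yields
$$R^{\h_g}(t+\tau)-R^{\h_g}(t) \le \alpha^{\h_i}\big(d(t+\tau)-d(t)\big).$$

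Next I would bound $d(t+\tau)-d(t)$ uniformly in $t$ by $(d\oslash d)(\tau)$. This is immediate from the definition recalled in the footnote to Proposition~\ref{prop:toolbox:delay}: $(d\oslash d)(\tau)=\sup_{u}\big[d(\tau+u)-d(u)\big]\ge d(\tau+t)-d(t)$, obtained by taking $u=t$ inside the supremum. Because $\alpha^{\h_i}$ is wide-sense increasing, monotonicity then gives
$$\alpha^{\h_i}\big(d(t+\tau)-d(t)\big) \le \alpha^{\h_i}\big((d\oslash d)(\tau)\big) = \alpha^{\h_g}(\tau),$$
and chaining the two displays establishes $R^{\h_g}(t+\tau)-R^{\h_g}(t)\le\alpha^{\h_g}(\tau)$ for all $t,\tau$, which is the desired arrival-curve property. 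A one-line remark would note that $\alpha^{\h_g}$ is wide-sense increasing, being the composition of the wide-sense increasing maps $d\oslash d$ and $\alpha^{\h_i}$, so it is a legitimate arrival curve.

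There is no serious obstacle here; the only points requiring a little care are (i) invoking Proposition~\ref{prop:cumu} so that the window in the $\h_i$-timeline has the correct endpoints $d(t)$ and $d(t+\tau)$, and (ii) observing that the bound $d(t+\tau)-d(t)\le(d\oslash d)(\tau)$ must hold uniformly in $t$ — which is exactly what the supremum in the de-convolution encodes. This is the one place where the $\oslash$ operation, rather than a pointwise estimate on $d$, is genuinely needed; it also automatically absorbs the jitter/stability slack, consistently with the specializations $(d\oslash d)(\tau)\le\rho\tau+\eta$ and $(d\oslash d)(\tau)\le\min(\rho\tau+\eta,\tau+2\Delta)$ noted after Proposition~\ref{prop:toolbox:delay}.
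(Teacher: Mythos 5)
Your proof is correct and follows essentially the same route as the paper: rewrite the $\h_g$-increment via Proposition~\ref{prop:cumu} as an $\h_i$-increment, bound it by $\alpha^{\h_i}(d(t+\tau)-d(t))$, and then replace $d(t+\tau)-d(t)$ by $(d\oslash d)(\tau)$ using monotonicity of $\alpha^{\h_i}$. The only cosmetic difference is that you bound the argument of $\alpha^{\h_i}$ directly and then apply monotonicity once, whereas the paper first takes a supremum inside $\alpha^{\h_i}$ and then invokes monotonicity — the two orderings are equivalent.
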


	The proof of the proposition is available in Appendix~\ref{proof:toolbox:ac}. We can now find an arrival curve for a flow as observed from any clock as long as we know one that is observed with one clock. We now apply the proposition to our two clock models.

	\paragraph*{Non-Synchronized Time Model} %
		Here, for any $t,\tau$, $(d \oslash d) (\tau) \le (\rho \tau + \eta)$, thus, if $\alpha^{\mathcal{H}_i}$ is an arrival curve observed with $\mathcal{H}_i$, then $\alpha^{\mathcal{H}_{g}}:t\mapsto \alpha^{\mathcal{H}_i}(\rho t+\eta)$ is an arrival curve observed with $\mathcal{H}_{g}$. %
				
 		\textbf{Example: }\emph{Application to leaky-bucket arrival curves.} Assume $\alpha^{\mathcal{H}_i}$ is a leaky-bucket arrival curve with burst $b_i$ and rate $r_i$, then $\alpha^{\mathcal{H}_{g}}$ is also a leaky-bucket arrival curve with burst $b_{g}=b_i+r_i \eta \ge b_i$ and a rate $r_{g}=\rho r_i \ge r_i$. We can also write $\alpha^{\mathcal{H}_{g}} = \gamma_{\rho r_i,b_i+r_i\eta}$.
 		
 		\ant This represents 0.02\% of a rate increase and a burst increase that is below 1 bit for most flows (below 250Mbits/s) as $\eta=$4ns.
 	\paragraph*{Synchronized Time Model}
	 	Here, we additionally obtain $ \forall \tau, (d \oslash d)(\tau) \le \tau + 2 \Delta$. Consequently, if $\alpha^{\mathcal{H}_i}$ is an arrival curve observed with $\mathcal{H}_i$, then $\alpha^{\mathcal{H}_{g}}:t\mapsto \alpha^{\mathcal{H}_i}(\min(t+2\Delta, \rho t + \eta))$ is an arrival curve observed with $\mathcal{H}_{g}$. As arrival curves are wide-sense increasing, we also have $\alpha^{\mathcal{H}_{g}}:t\mapsto \min\left[\alpha^{\mathcal{H}_i}(t+2\Delta),\alpha^{\mathcal{H}_i}(\rho t + \eta)\right]$.

		\textbf{Example: }\emph{Application to leaky-bucket arrival curves.}\label{ex:lb-async}
			Assume $\alpha^{\mathcal{H}_i}$ is a leaky-bucket arrival curve of burst $b_i$ and rate $r_i$.
			Then the function $t\mapsto \alpha^{\mathcal{H}_i}(t+2\Delta)$ is a leaky-bucket arrival curve of rate $r_i$ and burst $b_i+2r_i\Delta$. And $t\mapsto \alpha^{\mathcal{H}_i}(\rho t + \eta)$ is a leaky-bucket arrival curve of rate $r=\rho r_i$ and burst $b_i+r_i\eta$. Hence, $\alpha^{\mathcal{H}_{g}}= \gamma_{r_i,b_i+2r_i\Delta} \wedge \gamma_{\rho r_i,b_i+r_i\eta}$, the minimum of these two arrival curves (Figure~\ref{fig:synchroShape}). According to the numerical application, we expect to have $2r_i\Delta$ much higher than $r_i\eta$. In Figure~\ref{fig:synchroShape}, we respect this order but not the scale.

		\ant In a synchronized network, we have a second additional arrival curve with an unchanged rate but with an increased burst. For a flow of 500kbits/s, this represents a burst increase of 125kbits for a loosely-synchronized network and 1bit for a tightly-synchronized network. For loosely-synchronized networks, due to the high burst increase, the other part of the arrival curve, $\gamma_{\rho r_i,b_i+r_i\eta}$ needs to be used in order to obtain tight bounds.
			
		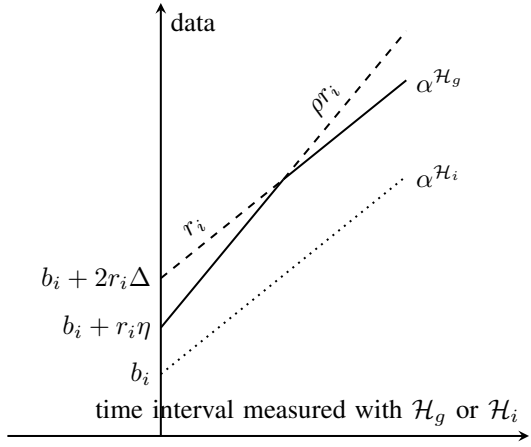
\begin{figure}\centering
			\resizebox{0.8\linewidth}{!}{	\begin{tikzpicture} %
	
	\pgfplotsset{ticks=none}
	\begin{axis}[xlabel=time interval measured with $\mathcal{H}_{g}$ or $\mathcal{H}_i$, ylabel=data,
	xmin=-5,xmax=12,ymin=0,ymax=7, axis lines=center] 

	\addplot[no marks, domain=0:8, dotted] {0.4*x+1} node[pos=1, right] {$\alpha^{\mathcal{H}_i}$} node[pos=0,left] {$b_i$};
	
	\addplot[no marks, domain=0:4] {0.6*x+1.75} node[pos=0, left] {$b_i+r_i\eta$};
	\addplot[no marks, domain=0:4, dashed] {0.4*x+2.55} node[pos=0, left] {$b_i +2r_i\Delta$} node[pos=0.5, sloped, anchor=south east] {$r_i$};
	
	\addplot[no marks, domain=4:8, dashed] {0.6*x+1.75} node[pos=0.6, sloped, anchor=south east] {$\rho r_i$};
	\addplot[no marks, domain=4:8] {0.4*x+2.55}  node[pos=1, right] {$\alpha^{\mathcal{H}_g}$};

	\end{axis}
	\end{tikzpicture}}
			\caption{\label{fig:synchroShape}$\mathcal{H}_i$ is synchronized with $\mathcal{H}_{g}$. A flow $f$ has the leaky-bucket arrival curve $\alpha^{\mathcal{H}_i}$ when it is viewed from $\mathcal{H}_i$. Then, $f$ has the arrival curve $\alpha^{\mathcal{H}_{g}}$ when it is viewed from $\mathcal{H}_{g}$.}
		\end{figure}
	
		\begin{table}%
			\caption{\label{tab:ac-results} Relations between an arrival curve as observed with $\h_i$ and an arrival curve as observed with $\h_g$.}%
			\resizebox{\linewidth}{!}{
	\begin{tabular}{l|l|l}
		& \multicolumn{2}{c}{Arrival curve} \\
		& General & Leaky-Bucket \\
		\hline
		in $\mathcal{H}_{i}$ & $\alpha^{\mathcal{H}_i}(t)$ & $\gamma_{r,b}$ \\ 
		in $\mathcal{H}_{g}$, general & $\alpha^{\mathcal{H}_i}((d_i \oslash d_i)(t))$ & --  \\
		in $\mathcal{H}_{g}$, non-sync & $\alpha^{\mathcal{H}_i}(\rho t + \eta)$ & $\gamma_{r \rho,b+r\eta}$  \\
		in $\mathcal{H}_{g}$, sync & $\alpha^{\mathcal{H}_i}(\min[\rho t + \eta,t+2\Delta])$ & $\gamma_{r \rho,b+r\eta} \wedge \gamma_{r,b+2r\Delta} $ \\
		
	\end{tabular}}
		\end{table}
	
		Table~\ref{tab:ac-results} regroups the results of this subsection. Note that the table can be used for any pair of clocks $(\h_g, \h_i)$, each being either a clock in the network or the \ac{TAI}.
	
\subsection{Results on Service Curves}
	As for the arrival curve concept, we define the service curve concept relative to a clock:
	\begin{definition} %
		We say that a wide-sense increasing function $\beta$ is a service curve of the device $j$ for the flow when observed with $\mathcal{H}_i$ if $\beta(0)=0$ and $\forall t \ge 0, R^{*\mathcal{H}_i}(t) \ge \inf_{0\le s\le t}R^{\mathcal{H}_i}(s) + \beta(t-s)$. We note such a function $\beta^{\mathcal{H}_i}$. %
	\end{definition} %
	\rema In particular, when $\mathcal{H}_i$ is the \ac{TAI}, we retrieve the definition of a service curve, as defined in \cite[Chapter 1]{leboudecNetworkCalculusTheory2001}.
	\begin{proposition}[Changing the clock for a service curve]\label{prop:sc} %
		If $\beta^{\mathcal{H}_i}$ is a service curve observed with $\mathcal{H}_i$, then a service curve observed with $\mathcal{H}_{g}$ is $\beta^{\mathcal{H}_{g}}:t\mapsto \beta^{\mathcal{H}_i}((d \overline{\oslash} d)(t))$, where $d=d_{g\rightarrow i}$ and $d\overline{\oslash}d(t)=\inf_{u\ge0} [d(t+u) - d(u)]$ (max-plus de-convolution~\cite[Section 3.2.1]{leboudecNetworkCalculusTheory2001}).
	\end{proposition}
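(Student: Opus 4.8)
The plan is to transport the known service-curve inequality for $\h_i$ over to $\h_g$ by the strictly increasing change of variable furnished by $d=d_{g\rightarrow i}$, using Proposition~\ref{prop:cumu} to translate cumulative functions from one clock to the other.

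First I would dispose of the structural requirements on $\beta^{\h_g}=\beta^{\h_i}\circ(d\,\overline{\oslash}\,d)$. Since $d$ is strictly increasing, for $t\ge 0$ one has $d(t+u)\ge d(u)$ for all $u\ge 0$, so $(d\,\overline{\oslash}\,d)(t)\ge 0$; moreover $(d\,\overline{\oslash}\,d)(0)=\inf_{u\ge0}[d(u)-d(u)]=0$, and if $t'\ge t$ then $d(t'+u)-d(u)\ge d(t+u)-d(u)$ pointwise in $u$, hence $(d\,\overline{\oslash}\,d)(t')\ge(d\,\overline{\oslash}\,d)(t)$. Thus $(d\,\overline{\oslash}\,d)$ is wide-sense increasing and null at $0$; composing with the wide-sense increasing $\beta^{\h_i}$ with $\beta^{\h_i}(0)=0$ shows $\beta^{\h_g}$ is wide-sense increasing with $\beta^{\h_g}(0)=0$, as a service curve must be.

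Next, fix $t\ge0$ and set $T=d(t)$. By Proposition~\ref{prop:cumu}, $R^{*\h_g}(t)=R^{*\h_i}(T)$, and the hypothesis that $\beta^{\h_i}$ is a service curve for $\h_i$, applied at the instant $T$, gives $R^{*\h_i}(T)\ge\inf_{0\le\sigma\le T}\bigl[R^{\h_i}(\sigma)+\beta^{\h_i}(T-\sigma)\bigr]$. The heart of the proof is to check that, for every $\sigma\in[0,T]$, the bracketed quantity is at least $\inf_{0\le s\le t}\bigl[R^{\h_g}(s)+\beta^{\h_g}(t-s)\bigr]$. For $\sigma$ in the image $[d(0),T]$ of $[0,t]$ under $s\mapsto d(s)$, take $s=d^{-1}(\sigma)\in[0,t]$; then Proposition~\ref{prop:cumu} gives $R^{\h_g}(s)=R^{\h_i}(d(s))=R^{\h_i}(\sigma)$, while choosing $u=s$ in the infimum defining $(d\,\overline{\oslash}\,d)(t-s)$ yields $(d\,\overline{\oslash}\,d)(t-s)\le d(t)-d(s)=T-\sigma$, so by monotonicity of $\beta^{\h_i}$, $\beta^{\h_g}(t-s)=\beta^{\h_i}\bigl((d\,\overline{\oslash}\,d)(t-s)\bigr)\le\beta^{\h_i}(T-\sigma)$. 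Adding the two lines gives $R^{\h_g}(s)+\beta^{\h_g}(t-s)\le R^{\h_i}(\sigma)+\beta^{\h_i}(T-\sigma)$, hence the infimum over $s$ is dominated by the bracketed quantity. Taking the infimum over $\sigma$ and chaining with the displayed inequality for $R^{*\h_i}(T)$ gives $R^{*\h_g}(t)\ge\inf_{0\le s\le t}[R^{\h_g}(s)+\beta^{\h_g}(t-s)]$, the desired service-curve property in $\h_g$.

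The step that I expect to need genuine care is the boundary regime $\sigma\in[0,d(0))$, which arises when $\h_i$ is ahead of $\h_g$ at the origin ($d(0)>0$): then no $s\ge0$ solves $d(s)=\sigma$, so the change of variable fails. I would settle it using the model assumption that no bit is transmitted before every clock has passed $T_{\text{start}}$, together with $R^{\h_g}(0)=0$: Proposition~\ref{prop:cumu} then forces $R^{\h_i}(d(0))=R^{\h_g}(0)=0$, hence $R^{\h_i}(\sigma)=0$ for all $\sigma\le d(0)$. Consequently $R^{\h_i}(\sigma)+\beta^{\h_i}(T-\sigma)=\beta^{\h_i}(T-\sigma)\ge\beta^{\h_i}(T-d(0))\ge\beta^{\h_i}\bigl((d\,\overline{\oslash}\,d)(t)\bigr)=\beta^{\h_g}(t)=R^{\h_g}(0)+\beta^{\h_g}(t)$, which again dominates $\inf_{0\le s\le t}[R^{\h_g}(s)+\beta^{\h_g}(t-s)]$ (take $s=0$). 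Apart from this case, and routine bookkeeping about the intervals on which $d$, $R^{\h_i}$ and $\beta^{\h_i}$ are defined, the whole argument is the single change of variable $\sigma=d(s)$.
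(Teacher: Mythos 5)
Your proof is correct and follows essentially the same line as the paper's: apply Proposition~\ref{prop:cumu} together with the $\h_i$ service-curve inequality at $d(t)$, change variable $\sigma=d(s)$, bound $(d\,\overline{\oslash}\,d)(t-s)\le d(t)-d(s)$ by taking $u=s$ in the infimum, and dispose of the left boundary using the $T_{\text{start}}$ assumption to argue $R^{\h_i}(\sigma)=0$ there. The differences are cosmetic --- you compare the $\h_i$ infimand term-by-term against the $\h_g$ infimum and cut the boundary at $d(0)$ via $R^{\h_g}(0)=0$, while the paper rewrites the $\h_i$ infimum as a restricted $\h_g$ infimum and cuts at $T_{\text{start}}$ --- and you additionally verify the structural requirements on $\beta^{\h_g}$ (wide-sense increasing, null at zero), which the paper leaves implicit.
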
 %

	The proof is in Appendix~\ref{proof:toolbox:sc}. Again, we now apply the proposition to the two time models:
	\paragraph*{Non-Synchronized Time Model}
		Here, $\forall t, (d \overline{\oslash} d)(t) \ge \frac{1}{\rho}\left|t-\eta\right|^+$, thus 
if $\beta^{\mathcal{H}_i}$ is a service curve viewed from $\mathcal{H}_i$, then $\beta^{\mathcal{H}_{g}}:t\mapsto \beta^{\mathcal{H}_i}(\frac{1}{\rho}|t-\eta|^+)$ is a service curve viewed from $\mathcal{H}_{g}$.
		
	\textbf{Example: }\emph{Application to rate-latency service curves.} Assume $\beta^{\mathcal{H}_i}$ is a rate-latency service curve with rate $R_i$ and latency $T_i$, then $\beta^{\mathcal{H}_{g}}$ is a rate-latency service curve with rate $R_g=R_i/\rho\le R_i$ and latency $T_g=\eta+\rho T_i\ge T_i$.

	\ant The service observed with another clock has a guaranteed rate reduced by 0.02\% and a slightly increased latency. For a latency of $10\mu$s, this represents 6ns of added latency.
	
	\textbf{Example: }\emph{Application to \ac{PFR} service curves.}	Recall from Section~\ref{sec:prob} that, when observed with its internal clock, a \ac{PFR} offers to the regulated flow a service curve equal to its shaping curve. If we observe a token-bucket filter with rate $r_i$ and burst $b_i$ with a different clock, $\h_g$, we obtain the service curve $\beta^{\mathcal{H}_{g}} = \delta_{\eta} \otimes \gamma_{r_i/\rho,b_i}$. This corresponds to the leaky-bucket service curve with a burst $b_i$ and a service rate $r_g=r_i/\rho\le r_i$, delayed by delay $\eta$ (Figure~\ref{fig:lb-sc-async}).
	
	\paragraph*{Synchronized Time Model} If $\beta^{\mathcal{H}_i}$ is a service curve viewed from $\mathcal{H}_i$, then $\beta^{\mathcal{H}_{g}}:t\mapsto\beta^{\mathcal{H}_i}(\max[\frac{1}{\rho}(t-\eta),t-2\Delta,0])$ is a service curve viewed from $\mathcal{H}_{g}$.
		\begin{figure*}
		\begin{minipage}{0.30\linewidth}\centering
			\resizebox{\linewidth}{!}{
	\begin{tikzpicture}
	
	\pgfplotsset{ticks=none}
	\begin{axis}[xlabel=time interval measured with $\mathcal{H}_{g}$ or $\mathcal{H}_i$, ylabel=data,
	xmin=-1,xmax=11,ymin=-0.6,ymax=5, axis lines=center, height=7cm, width=8cm]

	\addplot[no marks, domain=0:8, dotted] {0.4*x+1} node[pos=1, right] {$\beta^{\mathcal{H}_i}$} node[pos=0,left] {$b_i$} node[pos=0.5,above,sloped] {$r_i$};
	
	\addplot[no marks, domain=0:1.6, dashed] {1};
	\addplot[no marks, domain=1.6:8] {0.3*x+0.5} node[pos=0.8,sloped,below] {$r_i/\rho$} node[pos=1,right] {$\beta^{\mathcal{H}_{g}}$};
	
	\draw[-] (axis cs:1.6,1) -- (axis cs:1.6,0) node[pos=1,below] {$\eta$};

	\end{axis}
\end{tikzpicture}}
			\subcaption{\label{fig:lb-sc-async}$\mathcal{H}_i$, $\mathcal{H}_{g}$ are not synchronized. $\beta^{\h_i}$ is the service curve of a \acf{PFR}.}
			\end{minipage}\hspace{0.04\linewidth}\begin{minipage}{0.30\linewidth}\centering
				\resizebox{\linewidth}{!}{
	\begin{tikzpicture}
	
	\pgfplotsset{ticks=none}
	\begin{axis}[xlabel=time interval measured with $\mathcal{H}_{g}$ or $\mathcal{H}_i$, ylabel=data,
	xmin=-1,xmax=12,ymin=-0.6,ymax=5, axis lines=center,  height=7cm, width=8cm]

	\addplot[no marks, domain=0:8, dotted] {0.4*x+1} node[pos=1, right] {$\beta^{\mathcal{H}_i}$} node[pos=0,left] {$b_i$} node[pos=0.5,above,sloped] {$r_i$};
	
	\addplot[no marks, dashed, domain=0:1.6] {1};
	\addplot[no marks, dashed, domain=1.6:3] {1};
	\draw[dashed] (axis cs:3,1) -- (axis cs:3,0) node[pos=1, below] {$2\Delta$};

	\addplot[no marks, domain=1.6:4.33] {0.2*x+0.68};
	\addplot[no marks, domain=4.33:8, dashed] {0.2*x+0.68} node[pos=0.8,sloped,below] {$r_i/\rho$};
	\addplot[no marks, domain=3:4.33, dashed] {0.4*x+1.8-2};
	\addplot[no marks, domain=4.33:8] {0.4*x+1.8-2} node[pos=1,right] {$\beta^{\mathcal{H}_{g}}$} node[pos=0.5,sloped,above] {$r_i$};
	
	\draw (axis cs:1.6,1) -- (axis cs:1.6,0) node[pos=1,below] {$\eta$};

	\end{axis}
\end{tikzpicture}}
				\subcaption{\label{fig:lb-sc-sync}$\mathcal{H}_i$, $\mathcal{H}_{g}$ are synchronized. $\beta^{\h_i}$ is the service curve of a \acf{PFR}.}
			\end{minipage}\hspace{0.04\linewidth}\begin{minipage}{0.30\linewidth}\centering
				\resizebox{\linewidth}{!}{

	\begin{tikzpicture}
	
	\pgfplotsset{ticks=none}
	\begin{axis}[xlabel=$\mathcal{H}_{g}$ or $\mathcal{H}_i$, ylabel=data,
	xmin=-1,xmax=12,ymin=-3.4,ymax=8, axis lines=center, height=7cm, width=8cm]

	\addplot[no marks, domain=2:6, dotted] {1.5*x-3} node[pos=0,below] {$T_i$} node[pos=0.8,sloped,above] {$R_i$} node[pos=1,above] {$\beta^{\mathcal{H}_i}$};
	
	\addplot[no marks, domain=3:8] {0.75*x-2.25} node[pos=0, anchor=center] (y1){};
	\draw[dashed] (y1.center) -- ++(0,-0.5cm) node[pos=1, below, draw, dashed] {$\rho T_i+\eta$};
	
	\addplot[no marks, domain=8:10, dashed] {0.75*x-2.25}node[pos=0.7, below, sloped] {$R_i/\rho$};
	
	\addplot[no marks, domain=5.5:8, dashed] {1.5*x-8.25}node[pos=0.5,below,sloped] {$R_i$} node[pos=0,below] {$T_i+2\Delta$};
	
	\addplot[no marks, domain=8:9.5] {1.5*x-8.25} node[pos=1,above] {$\beta^{\mathcal{H}_{g}}$};

	\end{axis}
\end{tikzpicture}} %
				\subcaption{\label{fig:rl-sc-sync} $\mathcal{H}_i$, $\mathcal{H}_{g}$ are synchronized. $\beta^{\mathcal{H}_i}$ is a rate-latency service curve.} %
			\end{minipage}
			\caption{\label{fig:all-sync} 
		If the system offers service curve $\beta_{\h_i}$ when observed with $\h_i$, it offers service curve $\beta_{\h_g}$ when observed with $\h_g$.}
		\end{figure*}
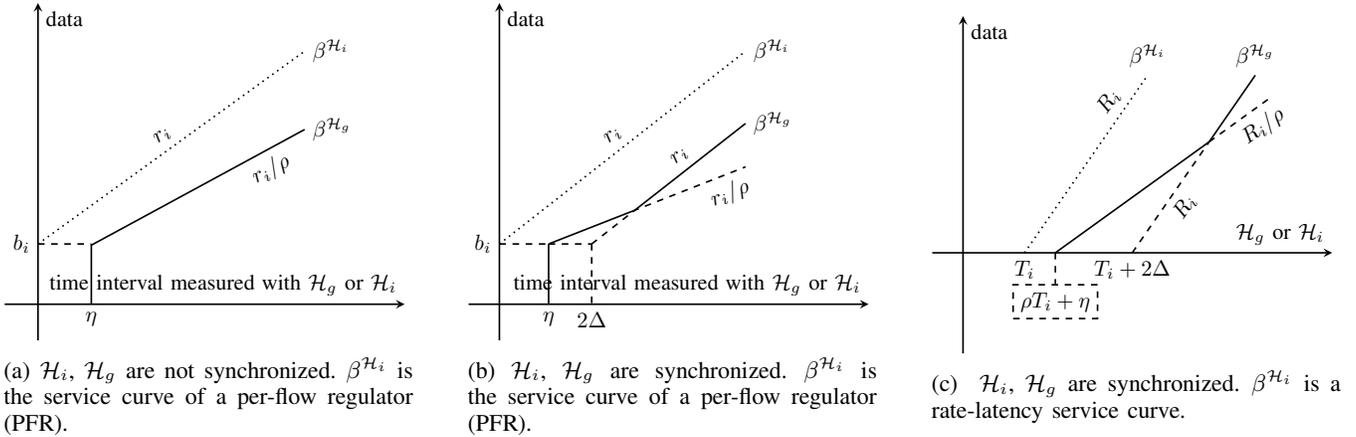
	
		\textbf{Example: }\emph{Application to rate-latency service curves.}
			Assume $\beta^{\mathcal{H}_i}$ is a rate-latency service curve with rate $R_i$ and latency $T_i$, then a service curve as observed with $\mathcal{H}_{g}$ is $\forall t,
			\beta^{\mathcal{H}_{g}}(t) = \lambda_{R_i/\rho, \eta+\rho T_i} \vee \lambda_{R_i, T_i+2\Delta}$ where $\vee$ denotes the maximum. I.e., $\beta^{\mathcal{H}_{g}}$ is the maximum between the rate-latency service curve of rate $R_i/\rho$, latency $\eta+\rho T_i$ and the rate-latency service curve of rate $R_i$ and latency $T_i+2\Delta$. Note that if $\rho-1$ is small and with our previous remark in Section~\ref{ex:lb-async} about $\eta$ and $\Delta$, we expect the latency $\eta+\rho T_i$ to be less than the latency $T_i+2\Delta$. Hence, the shape of $\beta^{\mathcal{H}_{g}}$ is given in Figure~\ref{fig:rl-sc-sync}.
		
		\ant In a synchronized network, compared to the non-synchronized time model, we also obtain a service with an unchanged rate but an increased latency: $2\mu$s for a tightly-synchronized network and 300ms for a loosely-synchronized network.
	
		\textbf{Example: }\emph{Application to \ac{PFR} service curves.}
			Assume $\beta^{\mathcal{H}_i}$ is a service curve offered by a token-bucket filter observed with its internal clock, with burst $b_i$ and rate $r_i$, then a service curve of this \ac{PFR} as observed with $\mathcal{H}_{g}$ is $\forall t$, $\beta^{\mathcal{H}_{g}}(t) = b_i + r_i \max[\frac{1}{\rho}(t-\eta),t-2\Delta,0]$. Ie,
			$\beta^{\mathcal{H}_{g}} = (\gamma_{r_i/\rho,b_i} \otimes \delta_{\eta}) \vee (\gamma_{r_i,b_i}\otimes\delta_{2\Delta})$ (Figure~\ref{fig:lb-sc-sync}).	
	\begin{table}%
		\caption{\label{tab:sc-results} Relations between a service curve of a system observed with $\h_i$ and a service curve of the same system observed with $\h_g$.}%
		\resizebox{\linewidth}{!}{
	\begin{tabular}{l|l|l|l}
		& \multicolumn{3}{c}{Service curve} \\
		& General & Rate-Latency & Leaky-Bucket  \\
		\hline
		in $\mathcal{H}_{i}$ & $\beta^{\mathcal{H}_i}(t)$ & $\lambda_{R,T}$ & $\gamma_{r,b}$ \\ 
		in $\mathcal{H}_{g}$, general & $ \beta^{\mathcal{H}_i}((d_i \overline{\oslash} d_i)(t))$ & -- & -- \\
		in $\mathcal{H}_{g}$, non-sync & $\beta^{\mathcal{H}_i}(1/\rho\cdot|t - \eta|^+)$ & $\lambda_{R/\rho,\rho T+\eta}$ & $\delta_{\eta} \otimes \gamma_{r/\rho,b}$ \\
		\multirow{2}{*}{in $\mathcal{H}_{g}$, sync} &  \multirow{2}{*}{$\beta^{\mathcal{H}_i}(\max[1/\rho\cdot (t-\eta),t-2\Delta,0])$ }&  $\lambda_{R/\rho,\rho T+\eta} $ & $(\delta_{\eta} \otimes \gamma_{r/\rho,b})$\\
		& & $\vee \lambda_{R,T+2\Delta}$  & $\vee (\delta_{2\Delta} \otimes \gamma_{r,b})$  \\
		
	\end{tabular}}
	\end{table}
	Table~\ref{tab:sc-results} regroups the results of this subsection. Note that the table can be used for any pair of clocks $(\h_g, \h_i)$, each being either a clock in the network or the \ac{TAI}.

\subsection{Results on Regulators}\label{sec:toolbox:reg}
	

	Recall that the operation of a \ac{PFR} requires measuring times, therefore, a system that is a \ac{PFR} when observed with its own clock might no longer be a \ac{PFR} when observed with a different clock. Indeed, assume, for instance, that the \ac{PFR}'s clock runs faster than expected when observed with an external clock. Then the external observer will see that the device violates the configured shaping curve and cannot be a \ac{PFR}. Conversely, if the clock of the device is two slow from an external point of view, then the external observer will conclude that the device is not a \ac{PFR} because it does not release the packets as soon as possible. Worse, the device clock may oscillate between being too fast or being too slow.
	
	Similar observations can be done with the \ac{IR}. Hence, the property of being a \ac{PFR} or an \ac{IR} is only valid when the regulator is observed with its internal clock. Consequently, none of the properties of the  \ac{IR}~\cite{leboudecTheoryTrafficRegulators2018} or the \ac{PFR}~\cite[Sections 1.5 and 1.7]{leboudecNetworkCalculusTheory2001} are expected to hold when observed with an external clock, in general.
	
	One of them states that, at the ouput of the regulator, a flow has the shaping curve as arrival curve. Observed with an external clock, this property does not hold but an arrival curve as observed with this external clock can be retrieved using Table~\ref{tab:ac-results}.
	
	In the following sections, we focus on the  shaping-for-free property (Section~\ref{sec:prob}), as it is underlies all delay computations in deterministic networks with regulators. 
	Proposition~\ref{prop:insta-async} proves that in non-synchronized networks, the shaping-for-free property holds neither for a \ac{PFR} nor for an \ac{IR}.
	For synchronized networks, Section~\ref{sec:sync:unadapted-pfr} computes a lower and an upper bound on the worst-case penalty incurred by a \ac{PFR}. Finally, Proposition~\ref{prop:insta-sync-ir} proves that the \ac{IR} cannot provide any delay bound.



\section{
Non-synchronized networks with regulators}
\label{sec:manag-async}

	We now combine the set of results of the previous section with other network calculus results to analyze non-synchronized networks containing regulators. We focus on a network with flows that are leaky-bucket constrained at their sources.

	\subsection{Instability of Non-Adapted Regulators in Non-Synchronized Networks}
	\label{sec:manag-async:unstable} %

		\begin{proposition}[Instability of non-adapted regulators in non-synchronized networks]\label{prop:insta-async} %
			Consider a non-synchronized network with $\rho>1$. Consider a network element $S$ that is FIFO per flow and guarantees to a flow $f$ a delay $\leq D$ if the flow satisfies an arrival curve $\gamma_{r,b}$ (both measured in \ac{TAI}). After processing by $S$, the flow is submitted to a non-adapted regulator (\ac{PFR} or \ac{IR}) (Figure~\ref{fig:prob}) with shaping curve $\gamma_{r,b}$. There exists adversarial source clocks within our time model and adversarial traffic generation satisfying the $\gamma_{r,b}$ arrival curve such that the delay of the flow through the regulator, measured using any clock, is unbounded. %
		\end{proposition}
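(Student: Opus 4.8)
The plan is to build an adversarial instance in which the regulator's local clock runs systematically slow with respect to \ac{TAI}. Then the rate $r$ that the non-adapted regulator enforces using its own clock becomes, once expressed in \ac{TAI}, the strictly smaller rate $r/\rho$, whereas the network element $S$ keeps delivering data to the regulator at long-run rate $r$ (measured in \ac{TAI}); consequently the backlog at the regulator's input grows linearly in time, which forces the delay to be unbounded. It is enough to work with a network carrying a single flow $f$: an \ac{IR} that carries only $f$ releases the head-of-line packet of $f$ exactly as a \ac{PFR} with shaping curve $\gamma_{r,b}$ would, so the \ac{PFR} and \ac{IR} cases are handled simultaneously.

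Concretely, the adversary fixes every clock of the network; I would let all of them coincide with \ac{TAI} except the regulator's clock $\h_{\text{Reg}}$, for which I take $d_{\hze\to\text{Reg}}(t)=t/\rho$ (hence $d_{\text{Reg}\to\hze}(t)=\rho t$), transmissions starting only after every relative time function has reached $\T$. A slope of $1/\rho$ respects the increment bounds of Equation~(\ref{eq:constr-async}) since $1/\rho\in[1/\rho,\rho]$ and $\eta\ge0$, so this is a legitimate non-synchronized time model, and because $\rho>1$ the clock $\h_{\text{Reg}}$ genuinely lags \ac{TAI}. For the traffic I would let the source emit $f$ so that, measured with \ac{TAI}, its cumulative function is $\gamma_{r,b}$-greedy, i.e. a burst of $b$ followed by a steady rate $r$ (with packetization every estimate below holds up to an additive maximum-packet-length term that does not affect the asymptotics). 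This flow satisfies $\gamma_{r,b}$ in \ac{TAI}, so $S$ guarantees it a \ac{TAI}-delay $\le D$, and the cumulative arrival function $A$ at the regulator, measured with \ac{TAI}, satisfies $A(t)\ge r(t-D)+b$ for $t$ large. Since the non-adapted regulator enforces $\gamma_{r,b}$ with $\h_{\text{Reg}}$, Proposition~\ref{prop:cumu} gives $R^{*\hze}(t)=R^{*\h_{\text{Reg}}}(t/\rho)$, whence $R^{*\hze}(t)-R^{*\hze}(s)\le (r/\rho)(t-s)+b$ for $t\ge s$, and therefore $R^{*\hze}(t)\le (r/\rho)\,t+C$ for some constant $C$.

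From there the backlog at the regulator's input is $A(t)-R^{*\hze}(t)\ge r(1-1/\rho)\,t+(b-rD-C)\to+\infty$. Hence, for a bit entering the regulator at \ac{TAI}-time $t$, the \ac{FIFO} ordering of the regulator's queue (per-flow for the \ac{PFR}, the single queue for the single-flow \ac{IR}) makes its departure time $\delta(t)$ satisfy $R^{*\hze}(\delta(t))\ge A(t)$, so $(r/\rho)\delta(t)+C\ge r(t-D)+b$, i.e. $\delta(t)-t\ge(\rho-1)\,t+\text{const}\to\infty$: the \ac{TAI}-delay through the regulator is unbounded. Finally, if the delay were bounded by some $D_g$ when measured with an arbitrary clock $\h_g$, then Proposition~\ref{prop:toolbox:delay} (using $(d\oslash d)(\tau)\le\rho\tau+\eta$ in the non-synchronized model) would yield the finite \ac{TAI}-delay bound $\rho D_g+\eta$, a contradiction; so the delay is unbounded whatever clock is used to measure it. The only genuinely delicate part of the argument is the bookkeeping — verifying that the chosen clock family is admissible in the model and compatible with $\T$, and reducing the packetized regulator to its fluid greedy shaper (Section~\ref{sec:prob}); the engine of the proof is just the linear-backlog estimate above.
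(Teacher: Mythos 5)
Your proof is correct and takes essentially the same route as the paper: the identical adversarial relative time function $d(t)=t/\rho$ slows the regulator's clock by a factor $\rho$, a greedy $\gamma_{r,b}$ source keeps pushing data at long-run TAI rate $r$ while the non-adapted regulator can only drain at TAI rate $r/\rho<r$, the backlog therefore grows linearly, and the single-flow IR reduces to a PFR. The only cosmetic differences are that the paper carries the computation out in the regulator's clock $\mathcal{H}_{\text{Reg}}$ (showing for every $e>0$ the delay eventually exceeds $e$ via the fluid-shaper output formula) and treats packetization explicitly with a floor function, whereas you argue directly in TAI via the backlog and wave away packetization as an additive constant.
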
 %
		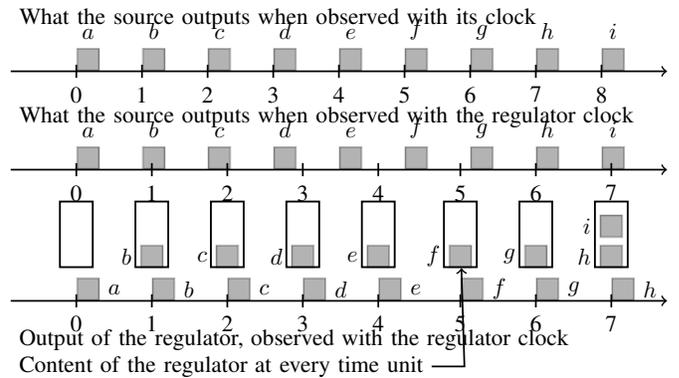
\begin{figure} %
			\resizebox{\linewidth}{!}{

\begin{tikzpicture}
	\tikzstyle{f} = [draw, regular polygon,regular polygon sides=4, anchor=south west, fill=black, opacity=0.3]
	
	\node[anchor=west] at (-1,0.8) {What the source outputs when observed with its clock};
	\draw[->] (-1,0) -- (9,0);
	\draw[-] (0.,-0.1) -- (0,0.1) node[pos=0, below] {0};
	\draw[-] (1,-0.1) -- (1,0.1) node[pos=0, below] {1};
	\draw[-] (2,-0.1) -- (2,0.1) node[pos=0, below] {2};
	\draw[-] (3,-0.1) -- (3,0.1) node[pos=0, below] {3};
	\draw[-] (4,-0.1) -- (4,0.1) node[pos=0, below] {4};
	\draw[-] (5,-0.1) -- (5,0.1) node[pos=0, below] {5};
	\draw[-] (6,-0.1) -- (6,0.1) node[pos=0, below] {6};
	\draw[-] (7,-0.1) -- (7,0.1) node[pos=0, below] {7};
	\draw[-] (8,-0.1) -- (8,0.1) node[pos=0, below] {8};
	\node[f, label=above:$a$] at (0,0) {};
	\node[f, label=above:$b$] at (1,0) {};
	\node[f, label=above:$c$] at (2,0) {};
	\node[f, label=above:$d$] at (3,0) {};
	\node[f, label=above:$e$] at (4,0) {};
	\node[f, label=above:$f$] at (5,0) {};
	\node[f, label=above:$g$] at (6,0) {};
	\node[f, label=above:$h$] at (7,0) {};
	\node[f, label=above:$i$] at (8,0) {};

	\draw[->] (-1,-1.5) -- (9,-1.5);
	\node[anchor=west] at (-1,-0.7) {What the source outputs when observed with the regulator clock};
	\draw[-] (0,-1.6) -- (0,-1.4) node[pos=0, below] {0};
	\draw[-] (1.15,-1.6) -- (1.15,-1.4) node[pos=0, below] {1};
	\draw[-] (2.30,-1.6) -- (2.30,-1.4) node[pos=0, below] {2};
	\draw[-] (3.45,-1.6) -- (3.45,-1.4) node[pos=0, below] {3};
	\draw[-] (4.60,-1.6) -- (4.60,-1.4) node[pos=0, below] {4};
	\draw[-] (5.85,-1.6) -- (5.85,-1.4) node[pos=0, below] {5};
	\draw[-] (7,-1.6) -- (7,-1.4) node[pos=0, below] {6};
	\draw[-] (8.15,-1.6) -- (8.15,-1.4) node[pos=0, below] {7};
	\node[f, label=above:$a$] at (0,-1.5) {};
	\node[f, label=above:$b$] at (1,-1.5) {};
	\node[f, label=above:$c$] at (2,-1.5) {};
	\node[f, label=above:$d$] at (3,-1.5) {};
	\node[f, label=above:$e$] at (4,-1.5) {};
	\node[f, label=above:$f$] at (5,-1.5) {};
	\node[f, label=above:$g$] at (6,-1.5) {};
	\node[f, label=above:$h$] at (7,-1.5) {};
	\node[f, label=above:$i$] at (8,-1.5) {};

	\node (rect) at (0,-3) (ir1) [anchor=south, draw,thick,minimum width=0.5cm,minimum height=1cm] {};
	\node (rect) at (1.15,-3) (ir2) [anchor=south, draw,thick,minimum width=0.5cm,minimum height=1cm] {};
	\node (rect) at (2.30,-3) (ir3) [anchor=south, draw,thick,minimum width=0.5cm,minimum height=1cm] {};
	\node (rect) at (3.45,-3) (ir4) [anchor=south, draw,thick,minimum width=0.5cm,minimum height=1cm] {};
	\node (rect) at (4.60,-3) (ir5) [anchor=south, draw,thick,minimum width=0.5cm,minimum height=1cm] {};
	\node (rect) at (5.85,-3) (ir6) [anchor=south, draw,thick,minimum width=0.5cm,minimum height=1cm] {};
	\node (rect) at (7,-3) (ir7) [anchor=south, draw,thick,minimum width=0.5cm,minimum height=1cm] {};
	\node (rect) at (8.15,-3) (ir8) [anchor=south, draw,thick,minimum width=0.5cm,minimum height=1cm] {};
	
	\draw[->] (-1,-3.5) -- (9,-3.5);
	\node[anchor=west] at (-1,-4.1) {Output of the regulator, observed with the regulator clock};
	\draw[-] (0,-3.6) -- (0,-3.4) node[pos=0, below] {0};
	\draw[-] (1.15,-3.6) -- (1.15,-3.4) node[pos=0, below] {1};
	\draw[-] (2.30,-3.6) -- (2.30,-3.4) node[pos=0, below] {2};
	\draw[-] (3.45,-3.6) -- (3.45,-3.4) node[pos=0, below] {3};
	\draw[-] (4.60,-3.6) -- (4.60,-3.4) node[pos=0, below] {4};
	\draw[-] (5.85,-3.6) -- (5.85,-3.4) node[pos=0, below] {5};
	\draw[-] (7,-3.6) -- (7,-3.4) node[pos=0, below] {6};
	\draw[-] (8.15,-3.6) -- (8.15,-3.4) node[pos=0, below] {7};
	
	\node[f, label=right:$a$] at (0,-3.5) {};
	\node[f, label=left:$b$, anchor=south] at (ir2.south) {};
	\node[f, label=right:$b$] at (1.15,-3.5) {};
	\node[f, label=left:$c$, anchor=south] at (ir3.south) {};
	\node[f, label=right:$c$] at (2.30,-3.5) {};
	\node[f, label=left:$d$, anchor=south] at (ir4.south) {};
	\node[f, label=right:$d$] at (3.45,-3.5) {};
	\node[f, label=left:$e$, anchor=south] at (ir5.south) {};
	\node[f, label=right:$e$] at (4.60,-3.5) {};
	\node[f, label=left:$f$, anchor=south] at (ir6.south) {};
	\node[f, label=right:$f$] at (5.85,-3.5) {};
	\node[f, label=left:$g$, anchor=south] at (ir7.south) {};
	\node[f, label=right:$g$] at (7,-3.5) {};
	\node[f, label=left:$h$, anchor=south] at (ir8.south) (h) {};
	\node[f, label=left:$i$, anchor=south] at ([yshift=0.1cm]h.north){};
	\node[f, label=right:$h$] at (8.15,-3.5) {};

	\node[anchor=west] at (-1,-4.5) (ll) {Content of the regulator at every time unit};
	\draw[->] (ll.east) -- ++(0.5cm,0) -- (ir6);
	
\end{tikzpicture} } %
			\caption{\label{fig:async:timeline-local} Adversarial case triggering the instability for non-synchronized non-adapted regulators.} %
		\end{figure} %
		The proof is in Appendix~\ref{proof:insta-async}. Recall that if the clocks would be ideal, the worst-case delay of the flow would not be increased by the regulator and the total delay would thus be $\leq D$. In contrast, with nonideal clocks, the total delay is unbounded, thus ``shaping-for-free'' does not hold.
		
		As illustration, consider for simplicity a \ac{PFR} or an \ac{IR} regulating only one flow: in this situation, both devices present the same behavior. We consider the source of a leaky-bucket constrained flow with all packets of same size, burst $b=1$~packet and rate $r=1$ packet per time unit. In the first line of Figure~\ref{fig:async:timeline-local}, we show the traffic trace at the output of the source, observed with its clock. The source is greedy, its sends at its maximum rate of one packet per time unit.
		
		Now assume that the \ac{FIFO} system $S$ has zero delay, whatever the clock used to observe it. Then the first timeline in Figure~\ref{fig:async:timeline-local} is also the traffic trace at the input of the regulator, but observed with the source clock.
		Assume now that the clock of the regulator is running slower than the one of the source. Here we exaggerate the phenomenon and we take $d_{\text{source} \rightarrow \text{regulator}}(t)~=~s_0~t$ with $s_0=6/7$. The regulator is non-adapted. It enforces the arrival curve of the source (one packet of burst, one packet per unit of time) by using its internal clock.
	
		Seen with the regulator clock, the incoming traffic trace is given with the second timeline in Figure~\ref{fig:async:timeline-local}. Packet $b$ appears to arrive too soon (at time
 $0$) and needs to be stored in the regulator. It is released when the regulator clock reaches time $1$. When measured with the regulator clock, packet $b$ suffers a delay of $1/7$ time unit, and $1/6$ time unit when observed with the source clock.
		
		We note that each packet has an increasing delay. At local time $7$, the regulator has now two packets in its buffer, and it will increase to three packets at time $14$, and so forth. We observe that the delays and the buffer occupation is linearly increasing. Hence, we have an unstable system with no delay bound. In practice, buffer sizes are finite and instead of unbounded delay, we will see unexpected packet losses (this contradicts the purpose of time-sensitive networks, which assume zero loss due to buffer overflow).%
		
		\ant Looking at the example, we observe that the delays observed with the source clock increase linearly with time and with $1/s_0-1$, which can be as large as $\rho-1$. In a \ac{TSN} context, this represents approximately $200\mu $s of increased worst-case delay per second of operation.

		The example presented above is inspired by a remark made in the TSN ATS draft \cite[Annex V.8]{ieeeDraftStandardLocal2019a}: ``\textit{If the upstream device [\ldots] runs faster than nominal and [the] downstream Bridge [\ldots] runs slower than nominal, the backlog as well as the per hop delay in the downstream Bridge could grow under peak conditions}''. In our model, the ``upstream device'' represents the upstream regulator, or the source, the ``downstream Bridge'' represents the next regulator, and ``peak conditions'' refer to a greedy source.
	
	\subsection{The Rate and Burst Cascade for Non-Synchronized Networks with Regulators}
	
		This last quoted remark highlights how a first solution for the instability problem can be formulated: one can make sure that whatever the clock conditions (but within the constraints of Equation~(\ref{eq:constr-async})), the downstream device will always have an output rate higher than the input. This requires increasing slightly the nominal rate of the regulator; and because this increase is performed at every hop, it generates a rate cascade that was first described in \cite[Annex V.8]{ieeeDraftStandardLocal2019a}. In this section, we refine the method into a rate an burst cascade, with the following differences:
		
		$\bullet$ We consider the jitter tolerance $\eta$ of the clocks, unlike~\cite{ieeeDraftStandardLocal2019a}.
		
		$\bullet$ The discussion in~\cite[Annex V.7]{ieeeDraftStandardLocal2019a}, that is related to the difference between the theoretical \ac{IR} behavior and its concrete implementation in \ac{ATS} is not considered here.
		
		$\bullet$ We take into account the finite resolution of the configurable rates and bursts on the regulators, whereas~\cite{ieeeDraftStandardLocal2019a} assumes that any rate or burst is configurable.
		
		Last, we use the network calculus toolbox in Section~\ref{sec:toolbox} to prove that the rate burst cascade ensures the stability of the network, and we also compute per-hop delay bounds.
		

		The rate and burst cascade works as follows. For each flow $f$, we use the notation and the reference configuration in Section~\ref{sec:model:network} and Figure~\ref{fig:async-hop-model}.
		
		\textbf{Step~1:} For each flow $f$, and each hop $k=1\ldots n$ in its path, configure $\text{Reg}_k$ with $r_{\text{Reg}_{k}} = \mathcal{R}_k(\rho r_{\text{Reg}_{k-1}})$ and $b_{\text{Reg}_{k}} = \mathcal{Q}_k({b_{\text{Reg}_{k-1}} }+\eta r_{\text{Reg}_{k-1}})$. Recall that $\mathcal{R}_k(r)$ [resp. $\mathcal{Q}_k(b)$] denote the configurable rate [resp. burst] higher than $r$ [resp. $b$] for this regulator. Recall also that $\rho$ and $\eta$ are network-wide parameters that depend neither on the considered clock nor on $k$.
		
		\textbf{Step~2:} For each flow $f$, and each hop $k=1\ldots n$ in its path, the configured shaping curve $\sigma_{k-1}$ is an arrival curve at the output of the regulator Reg$_{k-1}$, when observed with the clock of the regulator (Section~\ref{sec:toolbox:reg}). Using Table~\ref{tab:ac-results} with $\h_g=\hze$ and $\h_i=\mathcal{H}_{\text{Reg}_{k-1}}$, it follows that flow $f$ has a leaky-bucket arrival curve $\alpha_k^{\hze}$ of rate $\rho r_{\text{Reg}_{k-1}}$ and burst $b_{\text{Reg}_{k-1}} + \eta r_{\text{Reg}_{k-1}} $ at the input of S$_k$, when observed with $\hze$.
		
		For each network element $S$ in the network and each flow $f'$ crossing $S$, the arrival curve of $f'$ at the input of $S$, observed with $\hze$ is given by the above $\alpha_k^{\hze}$, with $k$ the index of $S$ in the path of flow $f'$. According to the assumptions of Section~\ref{sec:model:network}, we can compute a \ac{TAI} delay bound $D_S^{f,\hze}$ of any flow $f$ that goes through $S$.
		
		\textbf{Step~3:} For each flow $f$, and each hop $k=1\ldots n$ in its path, compute the TAI delay bound, $D_k'^{\hze}$, of the flow through the sequence $S_k-\textmd{Reg}_k$, using the next proposition: %
		\begin{proposition}\label{lemma:cascade:whole-delay} %
			If regulators are configured as in Step~1, for each flow $f$ that goes through network element $S$,
a bound on the \ac{TAI} delay of the flow
through the concatenation of $S$ and the next regulator is
$D'^{f,\hze}= \rho^2 D^{f,\hze} + \eta (1+\rho)$ where $D^{f,\hze}$ is a bound on the \ac{TAI} delay of flow $f$ through $S$, computed in Step~2.
		\end{proposition}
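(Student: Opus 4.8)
The plan is to observe the entire hop $S_k\to\text{Reg}_k$ through the single clock $\h_{\text{Reg}_k}$ of the downstream regulator, apply the shaping-for-free property \emph{in that clock}, and then use the delay toolbox (Proposition~\ref{prop:toolbox:delay}) twice: once to push the $\hze$-delay bound of $S_k$ into $\h_{\text{Reg}_k}$, and once to bring the resulting $\h_{\text{Reg}_k}$-bound of $S_k\to\text{Reg}_k$ back to $\hze$.

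First I would show that, when observed with $\h_{\text{Reg}_k}$, the flow \emph{already} satisfies the shaping curve $\sigma_k=\gamma_{r_{\text{Reg}_k},b_{\text{Reg}_k}}$ at the input of $S_k$. At the output of $\text{Reg}_{k-1}$ the flow admits $\sigma_{k-1}=\gamma_{r_{\text{Reg}_{k-1}},b_{\text{Reg}_{k-1}}}$ as arrival curve when observed with $\h_{\text{Reg}_{k-1}}$ (the defining property of a PFR/IR, valid in its own clock; Section~\ref{sec:toolbox:reg}). Applying the non-synchronized row of Table~\ref{tab:ac-results} with $\h_i=\h_{\text{Reg}_{k-1}}$ and $\h_g=\h_{\text{Reg}_k}$, the flow has arrival curve $\gamma_{\rho r_{\text{Reg}_{k-1}},\,b_{\text{Reg}_{k-1}}+\eta r_{\text{Reg}_{k-1}}}$ at the input of $S_k$ when observed with $\h_{\text{Reg}_k}$. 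By the Step-1 configuration, $r_{\text{Reg}_k}=\mathcal{R}_k(\rho r_{\text{Reg}_{k-1}})\ge\rho r_{\text{Reg}_{k-1}}$ and $b_{\text{Reg}_k}=\mathcal{Q}_k(b_{\text{Reg}_{k-1}}+\eta r_{\text{Reg}_{k-1}})\ge b_{\text{Reg}_{k-1}}+\eta r_{\text{Reg}_{k-1}}$, so this arrival curve lies below $\sigma_k$. (For an IR, the same reasoning applies to every flow of the aggregate, each through its own upstream regulator, using that flow's Step-1 configuration.)

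Next, being FIFO --- per flow, or for the aggregate in the IR case --- is a clock-independent property (it concerns packet order, not timestamps), and $\text{Reg}_k$ is by construction a regulator with shaping curve $\sigma_k$ \emph{with respect to its own clock} $\h_{\text{Reg}_k}$. Hence shaping-for-free may be invoked in the clock $\h_{\text{Reg}_k}$: for a PFR via the pathwise greedy-shaper identity (if the input $A$ meets $\sigma_k$ then $A\otimes\sigma_k=A$, so the greedy-shaper output $C=B\otimes\sigma_k$ satisfies $C(t)\ge A(t-\delta)$ whenever $S_k$ has delay $\le\delta$ in that clock, and the packetizer adds no per-packet delay); for an IR via \cite[Theorem~5]{leboudecTheoryTrafficRegulators2018}. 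Either way, the delay of $f$ through $S_k\to\text{Reg}_k$ observed with $\h_{\text{Reg}_k}$ is at most the (worst-case) delay of $f$ through $S_k$ alone observed with $\h_{\text{Reg}_k}$; and $D^{f,\hze}$ bounds the latter in $\hze$ --- consistently with Step~2, since our trajectory does meet the $\hze$ arrival curves used to derive $D^{f,\hze}$ (and, for an IR, $D^{f,\hze}$ bounds the aggregate delay through $S_k$ because $S_k$ is FIFO for it). Proposition~\ref{prop:toolbox:delay} in the non-synchronized model then gives the bound $\rho D^{f,\hze}+\eta$ in $\h_{\text{Reg}_k}$, hence also for the whole hop $S_k\to\text{Reg}_k$; applying Proposition~\ref{prop:toolbox:delay} once more from $\h_{\text{Reg}_k}$ back to $\hze$ yields $\rho(\rho D^{f,\hze}+\eta)+\eta=\rho^2 D^{f,\hze}+\eta(1+\rho)$, as claimed.

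The crux --- and the step I expect to need the most care --- is justifying that shaping-for-free can be applied in the non-native clock $\h_{\text{Reg}_k}$: one must check that its hypotheses transport cleanly, namely that $S_k$ stays FIFO (yes, ordering is clock-independent), that $\text{Reg}_k$ is genuinely a $\sigma_k$-regulator precisely in $\h_{\text{Reg}_k}$ (yes, by definition), and that the argument is pathwise in that clock, so the only fact required about $S_k$'s input is that it meets $\sigma_k$ when read with $\h_{\text{Reg}_k}$ --- which is exactly what Step~1's rate/burst over-provisioning buys. The remainder is bookkeeping with Proposition~\ref{prop:toolbox:delay}, relying on $\rho$ and $\eta$ being network-wide and therefore bounding both $d_{\text{Reg}_{k-1}\to\text{Reg}_k}$ and the relative functions between any clock and $\hze$.
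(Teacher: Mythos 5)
Your proof is correct and follows essentially the same route as the paper's: change clocks on the arrival curve via Table~\ref{tab:ac-results} to show it lies below $\sigma_k$ in $\h_{\text{Reg}_k}$, invoke shaping-for-free in that clock, then apply Proposition~\ref{prop:toolbox:delay} twice ($\hze\to\h_{\text{Reg}_k}$ for $S_k$'s delay, then $\h_{\text{Reg}_k}\to\hze$ for the whole hop). Your added remarks on why shaping-for-free transports to a non-native clock (FIFO is clock-independent, the regulator is a $\sigma_k$-regulator precisely in its own clock) make explicit something the paper leaves implicit, but the argument is the same.
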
 %
		The proof is in Appendix~\ref{proof:cascade:whole-delay}. Intuitively, with the clock of the regulator, the arrival curve of the flow at the input of $S$ is $\leq$ the shaping curve of the regulator, due to the inflation of rate and burst; hence shaping-for-free holds with this clock.
	
		\ant Assume that the regulators' configurations have infinite precision ($\mathcal{R},\mathcal{Q}$ are the identity function).	With these settings, the rate and burst cascade method increases the rate [resp the burst] of the flow by approx. 0.02\% [resp less than one bit for most flows] at each hop.
		
		The rate and burst cascade has the drawback that the configuration of a regulator depends on its position on the flow path. This puts complexity on the control plane, specifically, for computing, distributing and managing the configuration.
%
%
%
This motivates us to propose next an alternative method which, however, works only with \acp{PFR}.
		
	\subsection{The ADAM method for Non-Synchronized Networks with Per-Flow Regulators}
\label{sec:async:ADAM}
	
		The goal of the \acf{ADAM} is, for any given flow, to have the same parameters at all regulators along the flow path. Thus,when applying \ac{ADAM}, we require that the rounding functions $\mathcal{R}_k,\mathcal{Q}_k$ (which capture the accuracy with which regulator parameters are actually implemented) are the same at all network nodes (and we consequently drop the index $k$ for these functions). We also require that all the regulators in the network be \acp{PFR}.

		The main idea of \ac{ADAM} is to establish that each flow $f$ has an arrival curve, expressed in \ac{TAI}, of the form
	$\alpha_k=\alpha_{1}\wedge\alpha_{2,k}$ at the output of its $k$-th hop where $\alpha_{1}$, $\alpha_{2,k}$ are leaky-bucket arrival curves and the former is independent of the hop index.
%
%
		
		\textbf{Step~1}: For each flow $f$, find a rate margin $W$ such that $W\ge \rho^2$ and $Wr_0$ can be exactly implemented, i.e $\mathcal{R}(Wr_0)=Wr_0$. Configure the shaping curves at all regulators along the path of the flow with rate $r_{\text{Reg}_{k}} = Wr_{0}$ and burst $b_{\text{Reg}_{k}} = b_{0}$. Since $\rho$ is a network-wide parameter that does not depend on a clock, all regulators except the source have the same configuration, independent of the hop index $k=1...n$. Here $r_0,b_0$ are the rate and burst at the source (which depend on $f$, though the dependency on $f$ is not shown for simplicity of notation).
		
		\textbf{Step~2}: Any flow $f$ that goes through a network element $S$ is output by the regulator at its previous hop. Thus, using the same justification as step 2 of the rate and burst cascade, it has the arrival curve $\alpha_1=\gamma_{\rho W r_{0},b_{0}+\eta Wr_{0} }$ (i.e. leaky bucket with rate $\rho W r_{0}$ and burst $b_{0}+\eta Wr_{0}$), when observed with $\hze$.
 		
		Then, again using the same justification as step 2 of the rate and burst cascade, compute a \ac{TAI} delay bound $D_S^{f,\hze}$ through any network element $S$ and for any flow $f$ that goes through it.	
					
		\textbf{Step~3}: For each flow $f$ and each hop $k=1\ldots n$ in its path, compute a \ac{TAI} delay bound, $D_k'^{\hze}$, of the flow through the sequence $S_k-\textmd{Reg}_k$, using Algorithm~\ref{algo:adam}. \begin{algorithm}[t]
	\caption{\label{algo:adam} Computing the TAI delay for a flow through its hop $m$ using the \ac{ADAM} method}
	\begin{algorithmic}[1]
		\Require $\lbrace D_k^{\hze} \rbrace_k$, the set of all the \ac{TAI} delay bounds for the flow through the systems S$_k$ in its path (from Step~2).
		\Require $m$, the index at which to compute the \ac{TAI} hop delay.
		\Require $r_0$, $b_0$, the rate and burst of the flow at the source, observed with the source's clock. $W$, the rate margin.
		\Require $\eta$, $\rho$ the network-wide parameters of the time-model.
		\Function{ComputeDelayHopM}{$\lbrace D_k^{\hze} \rbrace_k$,$m$,$(r_0,b_0)$,$W$}
		
			\State $r_2 \gets \rho r_{0}$\label{algo:adam:b20a}
			\State $b_{2,0}\gets b_{0}+\eta r_0$ \label{algo:adam:b20b}
			\For{$k=1\ldots m$}
				\State {$D_k'^{\hze} \gets D_k^{\hze} + \eta (1+\rho) + \frac{b_{2,k-1}-b_{0}-\eta Wr_0}{\rho r_0} \frac{\rho^2-1}{W - 1}$}\label{algo:adam:delay}
				\State {$b_{2,k} \gets b_{2,k-1} + \rho r_{0} \cdot D_k'^{\hze}$}\Comment{See Proposition~\ref{prop:adam}}\label{algo:adam:a2}
			\EndFor
		\State \Return $D_m'^{\hze}$
		\EndFunction
	\end{algorithmic}
\end{algorithm}


%
%
		
\begin{proposition}[Correctness of Algorithm~\ref{algo:adam}]\label{prop:adam} For a flow $f$ that has $n$ hops and for $m=1\ldots n$: %
\begin{enumerate} %
  \item Let $\alpha_{2,0}$ be the leaky-bucket function with rate $r_2$ (Line~\ref{algo:adam:b20a}) and burst $b_{2,0}$ (Line~\ref{algo:adam:b20b}); it is an arrival curve for the flow at its source when observed with $\hze$.
  \item Let $\alpha_{2,k}$ be the leaky-bucket function with rate $r_2$ (Line~\ref{algo:adam:b20a}) and burst $b_{2,k}$ (Line~\ref{algo:adam:a2}). It is an arrival curve of the flow observed with $\hze$ at the output of $\text{Reg}_{k}$.
  \item $D_k'^{\hze}$ is a \ac{TAI} delay bound for the flow through the concatenation $S_k-\text{Reg}_k$, for $k=1\ldots m$. %
\end{enumerate} %
\end{proposition}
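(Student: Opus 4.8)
\emph{Proof proposal.} The plan is to prove items 1--3 simultaneously by induction on $k=1,\dots,m$, mirroring exactly the forward recursion of Algorithm~\ref{algo:adam}. Write $\alpha_1:=\gamma_{\rho W r_0,\,b_0+\eta W r_0}$ for the hop-independent curve, $\alpha_{2,k}:=\gamma_{r_2,b_{2,k}}$ with $r_2=\rho r_0$, and $\alpha_k:=\alpha_1\wedge\alpha_{2,k}$; the induction invariant is that $\alpha_k$ is a \ac{TAI} arrival curve for the flow at the output of hop $k$ (this is item~2), together with items 1 and 3. The base case is item~1: the source $\text{Reg}_0$ emits the flow with curve $\gamma_{r_0,b_0}$ when observed with $\h_{\text{Reg}_0}$, so the non-synchronized row of Table~\ref{tab:ac-results} with $\h_i=\h_{\text{Reg}_0}$, $\h_g=\hze$ gives the \ac{TAI} arrival curve $\gamma_{\rho r_0,\,b_0+\eta r_0}=\alpha_{2,0}$ (Lines~\ref{algo:adam:b20a}--\ref{algo:adam:b20b}); since $W\ge\rho^2\ge1$ we also have $\alpha_{2,0}\le\alpha_1$, hence $\alpha_0=\alpha_1\wedge\alpha_{2,0}$, which starts the induction.

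Next I would prove the inductive step for item~3 (the hop delay bound), which is the heart of the argument. Assume the invariant at rank $k-1$, so the input of $S_k$ has \ac{TAI} arrival curve $\alpha_{k-1}=\alpha_1\wedge\alpha_{2,k-1}$. Being a \ac{PFR} with shaping curve $\gamma_{Wr_0,b_0}$, $\text{Reg}_k$ offers that curve as a service curve in $\h_{\text{Reg}_k}$, hence (the \ac{PFR} entry of Table~\ref{tab:sc-results}, i.e.\ Proposition~\ref{prop:sc}) offers $\delta_\eta\otimes\gamma_{Wr_0/\rho,\,b_0}$ in \ac{TAI}; and $S_k$, being \ac{FIFO} per flow with \ac{TAI} delay $\le D_k^{\hze}$ (Step~2), offers $\delta_{D_k^{\hze}}$ in \ac{TAI}. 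By concatenation, $S_k$--$\text{Reg}_k$ offers the \ac{TAI} service curve $\beta:=\delta_{D_k^{\hze}+\eta}\otimes\gamma_{Wr_0/\rho,\,b_0}$, and since both $S_k$ and the \ac{PFR} $\text{Reg}_k$ are \ac{FIFO} for the flow, the three-bound theorem makes the per-packet \ac{TAI} delay at most the horizontal deviation $h(\alpha_{k-1},\beta)$. The reason both pieces of $\alpha_{k-1}$ are needed is that $\beta$ has rate $Wr_0/\rho\ge\rho r_0$ (because $W\ge\rho^2$), which matches the rate of $\alpha_{2,k-1}$ but is strictly below the rate $\rho Wr_0$ of $\alpha_1$, so only the combined curve yields a finite deviation. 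Writing $h(\alpha_{k-1},\beta)=D_k^{\hze}+\eta+\tfrac{\rho}{Wr_0}\sup_{t\ge0}\bigl[(\alpha_1\wedge\alpha_{2,k-1})(t)-b_0-\tfrac{Wr_0}{\rho}t\bigr]$ and noting that the bracketed map is the minimum of an increasing affine function (the $\alpha_1$ branch) and a non-increasing affine function (the $\alpha_{2,k-1}$ branch), the supremum is attained at the crossover abscissa $t_c=\tfrac{b_{2,k-1}-b_0-\eta Wr_0}{\rho r_0(W-1)}$ when $b_{2,k-1}>b_0+\eta Wr_0$, giving $D_k^{\hze}+\eta(1+\rho)+(\rho^2-1)t_c$, which is exactly the expression on Line~\ref{algo:adam:delay}; when $b_{2,k-1}\le b_0+\eta Wr_0$ the supremum is at $t=0$ and an elementary one-variable comparison (in the variable $b_{2,k-1}-b_0\in[0,\eta Wr_0]$, again using $W\ge\rho^2$) shows Line~\ref{algo:adam:delay} still over-estimates it. This proves item~3 at rank~$k$.

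Then I would close the induction with item~2 (the output arrival curve). For the $\alpha_1$ component: $\text{Reg}_k$ being a \ac{PFR} with shaping curve $\gamma_{Wr_0,b_0}$, its output satisfies $\gamma_{Wr_0,b_0}$ when observed with $\h_{\text{Reg}_k}$ (Section~\ref{sec:toolbox:reg}), hence by the leaky-bucket example following Table~\ref{tab:ac-results} (with $\h_i=\h_{\text{Reg}_k}$, $\h_g=\hze$) satisfies $\gamma_{\rho Wr_0,\,b_0+\eta Wr_0}=\alpha_1$ in \ac{TAI}. For the $\alpha_{2,k}$ component: the input of hop~$k$ has \ac{TAI} arrival curve $\alpha_{k-1}\le\alpha_{2,k-1}=\gamma_{\rho r_0,b_{2,k-1}}$, and by item~3 the hop delay is $\le D_k'^{\hze}$, so the output has \ac{TAI} arrival curve $t\mapsto\alpha_{2,k-1}(t+D_k'^{\hze})\le\gamma_{\rho r_0,\,b_{2,k-1}+\rho r_0 D_k'^{\hze}}=\gamma_{r_2,b_{2,k}}=\alpha_{2,k}$, matching Line~\ref{algo:adam:a2}. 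Taking the minimum of the two bounds, the output of hop~$k$ has \ac{TAI} arrival curve $\alpha_k=\alpha_1\wedge\alpha_{2,k}$, which is the invariant at rank~$k$.

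The main obstacle is the horizontal-deviation computation inside item~3: one must track the crossover between the $\alpha_1$ and $\alpha_{2,k-1}$ branches of $\alpha_{k-1}$ against the shifted leaky-bucket service curve $\beta$, split into the two cases according to the sign of $b_{2,k-1}-b_0-\eta Wr_0$, and verify algebraically that Line~\ref{algo:adam:delay} is tight in one case and a valid upper bound in the other --- this last check being exactly where the margin condition $W\ge\rho^2$ (which also makes $\beta$ fast enough to have a finite deviation) is essential. Everything else is bookkeeping with the toolbox of Section~\ref{sec:toolbox}.
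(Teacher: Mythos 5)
Your proposal is correct and follows essentially the same route as the paper's proof: the paper first isolates the one-hop step as Lemma~\ref{lemma:adam:whole-delay} (PFR shaping curve as a service curve in its own clock, Table~\ref{tab:sc-results} conversion to \ac{TAI}, concatenation with $\delta_{D_k^{\hze}}$, then the three-bound theorem against $\alpha_1\wedge\alpha_{2,k-1}$, with the output curve from the output-flow part of the three-bound theorem) and then runs the induction, exactly as you do, with $\otimes$ and $\wedge$ coinciding for leaky-bucket curves. One point in your favour: you explicitly handle the case $b_{2,k-1}\le b_0+\eta W r_0$ (in which the crossover abscissa $t_c$ is non-positive and Line~\ref{algo:adam:delay} is only an over-estimate of the true horizontal deviation), whereas the paper dispatches the computation with ``geometrical considerations'' and Figure~\ref{fig:adam-delay}, which depicts only the $t_c>0$ case; your case distinction actually covers the base step $k=1$, since $b_{2,0}=b_0+\eta r_0<b_0+\eta W r_0$ whenever $W>1$, so it is a genuine gap-filling, not pedantry.
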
 %
		The proof is in Appendix~\ref{proof:adam:whole-delay}. Observe that, unlike with the rate and burst cascade method, here the regulators do increase (slightly) the delay bound, specifically, shaping-for-free does not hold. The proof captures this increase by using a service-curve characterization of \acp{PFR}, together with the arrival curve $\alpha_{1}\wedge\alpha_{2,k-1}$
for the flow at the input of $S_k$ observed with $\hze$. Then it applies the Network Calculus results using these curves observed with $\hze$. As we do not know any service-curve characterization for \acp{IR}, we are not able to extend this method to \acp{IR}.

Also note that the delay bound in Step~2 is computed using the arrival curve $\alpha_1$ and not the full arrival curve known by the method. This is because Step~2 is performed at every node in the network, before knowing the results of Step~3 that is performed per-flow. Using the result of Step~3 in Step~2 is possible in feed-forward networks and might in some cases lead to slightly smaller delay bounds. However, one of the major applications of regulators is in non-feedforward networks. Therefore, we do not explore such possible optimizations in this paper.


		\ant 
			We consider a network with one unique flow going through $n$ nodes, each with a \ac{FIFO} system and a regulator. Figure~\ref{fig:comparaison} compares, for one example and for both \acs{ADAM} and the rate and burst cascade, the increase of the \acl{ETE} \ac{TAI} delay bound with respect to the theoretical situation with ideal clocks. The delay-bound increase is larger with \acs{ADAM} but in both cases it is very small (less than 4\% for paths smaller than 10 hops). 

\section{
Synchronized networks with regulators}
\label{sec:manag-sync}		
In synchronized networks, we expect that unbounded delays due to non-adapted regulators cannot occur, as clock rates cannot diverge for long periods of time. We now examine to which extent this holds. We study separately
\acp{PFR} and \acp{IR}.
\begin{figure} %
	\resizebox{\linewidth}{!}{\input{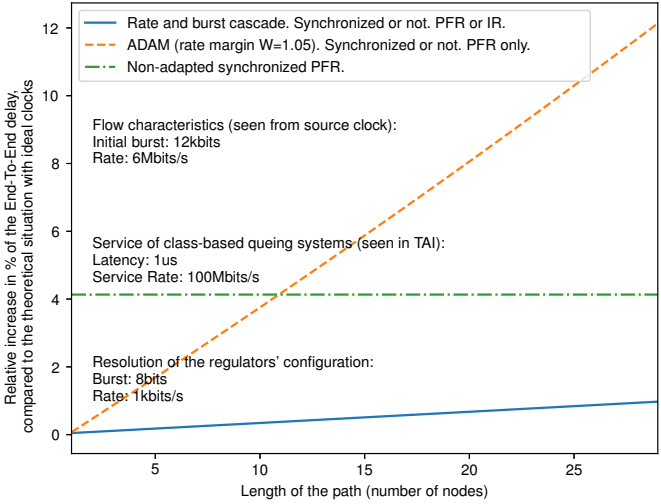}} %
	\caption{\label{fig:comparaison} Comparison of the relative increase of the \ac{TAI} \acl{ETE} delay bound of a flow obtained with several methods, with respect to the theoretical situation with ideal clocks, and as a function of the path length.} %
\end{figure}

	\subsection{Delay Penalty of Non-Adapted \acp{PFR} in Synchronized Networks}\label{sec:sync:unadapted-pfr}
	
	Even when synchronized, a non-adapted \ac{PFR} increases the worst-case delay, i.e. does not ``shape for free'':
	\begin{proposition}[Lower bound on the worst-case delay penalty of synchronized non-adapted \aclp{PFR}]\label{prop:sync:pfr-lower}
		For any leaky-bucket arrival curve $\gamma_{r,b}$, there exists a network configuration such that (1) one flow has the arrival curve $\gamma_{r,b}$ at the source, (2) the flow goes through one network element followed by one non-adapted \ac{PFR} (hence with shaping curve $\gamma_{r,b}$), (3) the clocks of the source and the \ac{PFR} are synchronized with time-error bound $\Delta$, and (4) the \ac{PFR} increases the worst-case delay of the flow by at least $\Delta$.
%
	\end{proposition}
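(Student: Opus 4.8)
The approach is to exhibit an explicit adversarial configuration. Take the network element $S$ to be a transparent element (zero delay, trivially FIFO per flow), so that the delay bound of the flow through $S$ alone is $D=0$ and the shaping-for-free property would predict a combined bound $D'=0$. Let the source emit the flow greedily against its leaky-bucket constraint, i.e. with cumulative function $R(t)=b+rt$ for $t\ge 0$ (observed with the source clock $\h_0$), which is $\gamma_{r,b}$-compliant, and let the \ac{PFR} clock $\h_R$ run slow relative to $\h_0$ for a while and then resume the source rate with a frozen offset of $-\Delta$. Since $S$ is transparent, the flow reaches the \ac{PFR} exactly as emitted; I will show that once the \ac{PFR} clock has settled into its constant offset, every bit released by the \ac{PFR} has a \ac{TAI} sojourn time equal to $\Delta$, so the worst-case delay through $S$ and the \ac{PFR} is at least $\Delta$ while it is $0$ through $S$ alone, which gives claim (4); claims (1)--(3) hold by construction.

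Concretely, choose the network parameters with $\rho>1$ (any valid value; if $\Delta=0$ the claim is trivial, so assume $\Delta>0$), and define $d=d_{0\rightarrow R}$ piecewise-linearly: $d(t)=t$ for $t\le0$; $d$ has slope $1/\rho$ on $[0,T^\star]$ with $T^\star=\Delta\rho/(\rho-1)$, so that $d(T^\star)=T^\star/\rho=\Delta/(\rho-1)=T^\star-\Delta$; and $d(t)=t-\Delta$ for $t\ge T^\star$. One checks directly that all slopes of $d$ and of $d^{-1}=d_{R\rightarrow0}$ lie in $[1/\rho,\rho]$, hence constraint~(\ref{eq:constr-async}) holds (for any $\eta\ge0$), and that $d(t)-t$ decreases monotonically from $0$ to $-\Delta$ and then stays there, hence $|d(t)-t|\le\Delta$, i.e. constraint~(\ref{eq:const-sync}) holds; so this is a legitimate synchronized configuration with time-error bound $\Delta$ (and, after shifting the time origin by $\T$, the assumption on $\T$ is met). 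This already signals the main obstacle: the two envelopes~(\ref{eq:constr-async}) and~(\ref{eq:const-sync}) must hold \emph{simultaneously}, which is exactly why the offset change has to be spread over a true-time interval of length $T^\star=\Delta\rho/(\rho-1)$, and why we need $\rho>1$ --- for $\rho=1$ the relative clock rate cannot change at all, the \ac{PFR} is insensitive to the resulting constant offset, and no penalty can arise.

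Now compute. Observed with $\h_R$, the cumulative arrival function is $R^{\h_R}(\tau)=R(d^{-1}(\tau))$, which by the shape of $d^{-1}$ equals $b+r\rho\tau$ for $\tau\in[0,\Delta/(\rho-1)]$ and $b+r\Delta+r\tau$ for $\tau\ge\Delta/(\rho-1)$. A \ac{PFR} is a \ac{PFR} only with respect to its own clock, so with $\h_R$ it behaves exactly as a token-bucket filter of rate $r$, burst $b$, and its bit-level output is the fluid greedy shaper $R^{*\h_R}=R^{\h_R}\otimes\gamma_{r,b}$; evaluating this min-plus convolution piecewise gives $R^{*\h_R}(\tau)=b+r\tau$ for $\tau\ge0$. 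During the drift phase the arrival rate seen with $\h_R$ is $r\rho>r$ while the output rate is $r$, so a backlog of $(r\rho-r)\cdot\Delta/(\rho-1)=r\Delta$ bits builds up; for $\tau\ge\Delta/(\rho-1)$ the arrival rate returns to $r$ and the backlog --- hence the per-bit delay --- stabilizes: the horizontal deviation between $R^{\h_R}$ and $R^{*\h_R}$ equals $\Delta$ in that regime. Since there $d^{-1}$ has slope exactly $1$ (a pure offset), mapping the arrival and departure instants of those late bits back to \ac{TAI} preserves the gap, so their \ac{TAI} sojourn time through the \ac{PFR} (hence through $S$ then the \ac{PFR}) is exactly $\Delta$.

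Consequently the worst-case \ac{TAI} delay of the flow through the concatenation of $S$ and the \ac{PFR} is at least $\Delta$, whereas through $S$ alone it is $0$, so the \ac{PFR} increases the worst-case delay by at least $\Delta$. For the packetized device: by the decomposition recalled in Section~\ref{sec:prob}, a \ac{PFR} with concave shaping curve is the fluid greedy shaper above followed by a packetizer that never increases a per-packet delay, so the same lower bound carries over (equivalently, running the packet-level release recursion on a greedy source with a $\lfloor b/\ell\rfloor$-packet burst and rate $r$ yields again a steady-state per-packet \ac{TAI} delay of exactly $\Delta$). The only genuinely routine points are the explicit piecewise evaluation of $R^{\h_R}\otimes\gamma_{r,b}$ and the remark that enlarging the burst only lengthens the transient and does not change the $r\Delta$ steady-state backlog.
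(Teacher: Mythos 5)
Your proof is correct and takes essentially the same approach as the paper's: the adversarial relative clock you build (identity, then slope $1/\rho$ until the time-error reaches $-\Delta$, then a frozen offset of $-\Delta$) is the same piecewise-linear $d_{\text{TAI}\rightarrow\text{PFR}}$ the paper uses, combined with a zero-delay network element, a greedy source, and the fluid-greedy-shaper-plus-packetizer decomposition; you derive the steady-state backlog $r\Delta$ and horizontal deviation $\Delta$ explicitly, whereas the paper equivalently picks a specific packet-completion instant $x_2>x_1$ and shows $R^{*\hze}(x_2+\Delta)\le R^{\hze}(x_2)$. One wording slip in the packetization step: ``the packetizer never increases a per-packet delay'' is the upper-bound direction and does not by itself transfer a lower bound from the fluid shaper to the packetized device --- the correct justification (which your parenthetical gestures at, and which the paper makes precise via $x_2$) is that the per-packet delay of the packetized \ac{PFR} equals the fluid sojourn time of the packet's last bit, which is $\ge \Delta$ once the transient has elapsed.
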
 %
	The proof is in Appendix~\ref{proof:sync:pfr-lower}. We give here a less-formal example.
	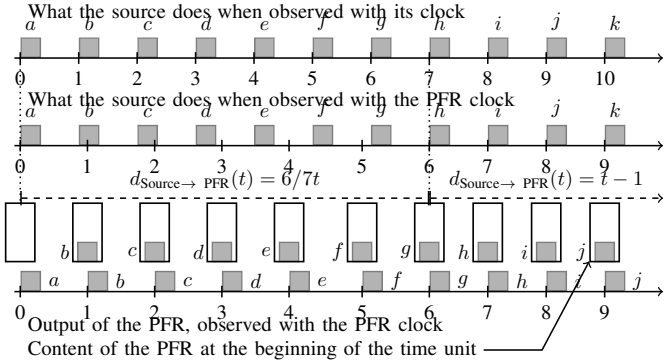
\begin{figure} %
		\resizebox{\linewidth}{!}{

\begin{tikzpicture}
	\tikzstyle{f} = [draw, regular polygon,regular polygon sides=4, anchor=south west, fill=black, opacity=0.3]
	
	\node[anchor=west] at (0,0.8) {What the source does when observed with its clock};
	\draw[->] (-0.1,0) -- (11,0);
	\draw[-] (0.,-0.1) -- (0,0.1) node[pos=0, below] {0};
	\draw[-] (1,-0.1) -- (1,0.1) node[pos=0, below] {1};
	\draw[-] (2,-0.1) -- (2,0.1) node[pos=0, below] {2};
	\draw[-] (3,-0.1) -- (3,0.1) node[pos=0, below] {3};
	\draw[-] (4,-0.1) -- (4,0.1) node[pos=0, below] {4};
	\draw[-] (5,-0.1) -- (5,0.1) node[pos=0, below] {5};
	\draw[-] (6,-0.1) -- (6,0.1) node[pos=0, below] {6};
	\draw[-] (7,-0.1) -- (7,0.1) node[pos=0, below] {7};
	\draw[-] (8,-0.1) -- (8,0.1) node[pos=0, below] {8};
	\draw[-] (9,-0.1) -- (9,0.1) node[pos=0, below] {9};
	\draw[-] (10,-0.1) -- (10,0.1) node[pos=0, below] {10};
	\node[f, label=above:$a$] at (0,0) {}; 
	\node[f, label=above:$b$] at (1,0) {};
	\node[f, label=above:$c$] at (2,0) {};
	\node[f, label=above:$d$] at (3,0) {};
	\node[f, label=above:$e$] at (4,0) {};
	\node[f, label=above:$f$] at (5,0) {};
	\node[f, label=above:$g$] at (6,0) {};
	\node[f, label=above:$h$] at (7,0) {};
	\node[f, label=above:$i$] at (8,0) {};
	\node[f, label=above:$j$] at (9,0) {};
	\node[f, label=above:$k$] at (10,0) {};

	\draw[->] (-0.1,-1.5) -- (11,-1.5);
	\node[anchor=west] at (0,-0.7) {What the source does when observed with the PFR clock};
	\draw[-] (0,-1.6) -- (0,-1.4) node[pos=0, below] {0};
	\draw[-] (1.15,-1.6) -- (1.15,-1.4) node[pos=0, below] {1};
	\draw[-] (2.30,-1.6) -- (2.30,-1.4) node[pos=0, below] {2};
	\draw[-] (3.45,-1.6) -- (3.45,-1.4) node[pos=0, below] {3};
	\draw[-] (4.60,-1.6) -- (4.60,-1.4) node[pos=0, below] {4};
	\draw[-] (5.85,-1.6) -- (5.85,-1.4) node[pos=0, below] {5};
	\draw[-] (7,-1.6) -- (7,-1.4) node[pos=0, below] {6};
	\draw[-] (8,-1.6) -- (8,-1.4) node[pos=0, below] {7};
	\draw[-] (9,-1.6) -- (9,-1.4) node[pos=0, below] {8};
	\draw[-] (10,-1.6) -- (10,-1.4) node[pos=0, below] {9};
	\node[f, label=above:$a$] at (0,-1.5) {};
	\node[f, label=above:$b$] at (1,-1.5) {}; 
	\node[f, label=above:$c$] at (2,-1.5) {}; 
	\node[f, label=above:$d$] at (3,-1.5) {}; 
	\node[f, label=above:$e$] at (4,-1.5) {}; 
	\node[f, label=above:$f$] at (5,-1.5) {}; 
	\node[f, label=above:$g$] at (6,-1.5) {}; 
	\node[f, label=above:$h$] at (7,-1.5) {};  
	\node[f, label=above:$i$] at (8,-1.5) {};
	\node[f, label=above:$j$] at (9,-1.5) {};
	\node[f, label=above:$k$] at (10,-1.5) {};
	
	\draw[dotted] (0,0) -- (0,-2.5);
	\draw[dotted] (7,0) -- (7,-2.5);
	\draw[|-|, dashed] (0,-2.4) -- (7,-2.4) node[pos=0.5,above] {$d_{\text{Source}\rightarrow \text{PFR}}(t) = 6/7 t$};
	\draw[|->, dashed] (7,-2.4) -- (11,-2.4) node[pos=0.5,above] {$d_{\text{Source}\rightarrow \text{PFR}}(t) = t-1$};

	\node (rect) at (0,-3.5) (ir1) [anchor=south, draw,thick,minimum width=0.5cm,minimum height=1cm] {};
	\node (rect) at (1.15,-3.5) (ir2) [anchor=south, draw,thick,minimum width=0.5cm,minimum height=1cm] {};
	\node (rect) at (2.30,-3.5) (ir3) [anchor=south, draw,thick,minimum width=0.5cm,minimum height=1cm] {};
	\node (rect) at (3.45,-3.5) (ir4) [anchor=south, draw,thick,minimum width=0.5cm,minimum height=1cm] {};
	\node (rect) at (4.60,-3.5) (ir5) [anchor=south, draw,thick,minimum width=0.5cm,minimum height=1cm] {};
	\node (rect) at (5.85,-3.5) (ir6) [anchor=south, draw,thick,minimum width=0.5cm,minimum height=1cm] {};
	\node (rect) at (7,-3.5) (ir7) [anchor=south, draw,thick,minimum width=0.5cm,minimum height=1cm] {};
	\node (rect) at (8,-3.5) (ir8) [anchor=south, draw,thick,minimum width=0.5cm,minimum height=1cm] {};
	\node (rect) at (9,-3.5) (ir9) [anchor=south, draw,thick,minimum width=0.5cm,minimum height=1cm] {};
	\node (rect) at (10,-3.5) (ir10) [anchor=south, draw,thick,minimum width=0.5cm,minimum height=1cm] {};
	
	\draw[->] (-0.1,-4) -- (11,-4);
	\node[anchor=west] at (0,-4.6) {Output of the PFR, observed with the PFR clock};
	\draw[-] (0,-4.1) -- (0,-3.9) node[pos=0, below] {0};
	\draw[-] (1.15,-4.1) -- (1.15,-3.9) node[pos=0, below] {1};
	\draw[-] (2.30,-4.1) -- (2.30,-3.9) node[pos=0, below] {2};
	\draw[-] (3.45,-4.1) -- (3.45,-3.9) node[pos=0, below] {3};
	\draw[-] (4.60,-4.1) -- (4.60,-3.9) node[pos=0, below] {4};
	\draw[-] (5.85,-4.1) -- (5.85,-3.9) node[pos=0, below] {5};
	\draw[-] (7,-4.1) -- (7,-3.9) node[pos=0, below] {6};
	\draw[-] (8,-4.1) -- (8,-3.9) node[pos=0, below] {7};
	\draw[-] (9,-4.1) -- (9,-3.9) node[pos=0, below] {8};
	\draw[-] (10,-4.1) -- (10,-3.9) node[pos=0, below] {9};
	
	\node[f, label=right:$a$] at (0,-4) {};
	\node[f, label=left:$b$, anchor=south] at (ir2.south) {};
	\node[f, label=right:$b$] at (1.15,-4) {}; 
	\node[f, label=left:$c$, anchor=south] at (ir3.south) {};
	\node[f, label=right:$c$] at (2.30,-4) {}; 
	\node[f, label=left:$d$, anchor=south] at (ir4.south) {};
	\node[f, label=right:$d$] at (3.45,-4) {}; 
	\node[f, label=left:$e$, anchor=south] at (ir5.south) {};
	\node[f, label=right:$e$] at (4.60,-4) {}; 
	\node[f, label=left:$f$, anchor=south] at (ir6.south) {};
	\node[f, label=right:$f$] at (5.85,-4) {}; 
	\node[f, label=left:$g$, anchor=south] at (ir7.south) {};
	\node[f, label=right:$g$] at (7,-4) {}; 
	\node[f, label=left:$h$, anchor=south] at (ir8.south) (h) {};
	\node[f, label=right:$h$] at (8,-4) {};  
	\node[f, label=left:$i$, anchor=south] at (ir9.south) (h) {};
	\node[f, label=right:$i$] at (9,-4) {}; 
	\node[f, label=left:$j$, anchor=south] at (ir10.south) (h) {};
	\node[f, label=right:$j$] at (10,-4) {}; 
	\node[anchor=west] at (0,-5) (ll) {Content of the PFR at the beginning of the time unit};
	\draw[->] (ll.east) -- ++(1cm,0) -- (ir10);
\end{tikzpicture}} %
		\caption{\label{fig:sync:timeline-pfr} Adversarial case achieving the lower penalty bound for synchronized non-adapted \acp{PFR}.} %
	\end{figure} %
	We take the same example as for Section~\ref{sec:manag-async:unstable} and Figure~\ref{fig:async:timeline-local}, with the source clock being the TAI, the source being greedy, and the system S$_k$ having no delay. But this time (Figure~\ref{fig:sync:timeline-pfr}), the regulator clock is too slow by $s_1 = 6/7$ between Time Units 0 and 7 (measured with the TAI), and then it regains the same speed, with a time function $d_{\text{Source}\rightarrow \text{PFR}}(t) = t - \Delta$. We exaggerate $\Delta$ to equal one unit of time. Then we can note that $d_{\text{Source}\rightarrow \text{PFR}}(t)$ meets Equations (\ref{eq:constr-async}) and (\ref{eq:const-sync}) with $\rho\ge s_1$.	We observe in Figure~\ref{fig:sync:timeline-pfr} that the TAI delay of each packet is increasing for packets from $a$ to $h$, and packets $h$ to $j$ have a delay of one unit, that is $\Delta$.
	
	Thanks to the synchronization, though, the delay penalty of the \ac{PFR} can be controlled:
	
	\begin{proposition}[Upper bound on the delay penalty of synchronized non-adapted \aclp{PFR}]\label{prop:sync:pfr-stable} Assume a flow is regulated at the source with rate $r_0$ and burst $b_0$ and goes through a sequence of $n$ concatenations of network elements and regulators. The regulators are nonadapted (i.e. have shaping curve $\gamma_{r_0,b_0}$), and the network is synchronized with time-error bound $\Delta$. Let $D_k^{\hze}$ be an upper bound on the \ac{TAI} delay at $S_k$, the $k$th network element on the path of the flow. A bound on the \ac{TAI} delay of the flow
through the concatenation of $S_k$ and the next regulator is $D_k'^{\hze} = D_k^{\hze} + 4\Delta$.
	\end{proposition}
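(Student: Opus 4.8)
The plan is to carry out the whole argument in \ac{TAI} rather than in the regulator's own clock. The point of doing so is that the synchronization slack $\Delta$ will enter only additively, and the leaky-bucket rates of all the arrival and service curves involved will stay equal to $r_0$, which is exactly what makes the final penalty be $4\Delta$ rather than something larger. I will use only the ``local'' facts that a \ac{PFR} produces shaping-curve-constrained traffic and offers its shaping curve as a service curve \emph{when observed with its own clock} (Section~\ref{sec:prob}, Section~\ref{sec:toolbox:reg}), together with the clock-change tables of Section~\ref{sec:toolbox}; in particular I never need to change the clock of a delay.

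\textbf{Step 1 (arrival curve at the input of $S_k$, in \ac{TAI}).} The flow enters $S_k$ at the output of $\text{Reg}_{k-1}$ (the source if $k=1$), a \ac{PFR} with shaping curve $\gamma_{r_0,b_0}$, so its output is $\gamma_{r_0,b_0}$-constrained when observed with $\h_{\text{Reg}_{k-1}}$. By the synchronized-model, leaky-bucket line of Table~\ref{tab:ac-results} (with $\h_i=\h_{\text{Reg}_{k-1}}$, $\h_g=\hze$), the flow admits, observed with $\hze$, the arrival curve $\gamma_{\rho r_0,\,b_0+r_0\eta}\wedge\gamma_{r_0,\,b_0+2r_0\Delta}$; I deliberately keep only the weaker branch $\alpha:=\gamma_{r_0,\,b_0+2r_0\Delta}$.

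\textbf{Step 2 (service curve of the hop $S_k-\text{Reg}_k$, in \ac{TAI}).} Since $S_k$ guarantees \ac{TAI} delay $\le D_k^{\hze}$, it offers the \ac{TAI} service curve $\delta_{D_k^{\hze}}$. The regulator $\text{Reg}_k$ offers $\gamma_{r_0,b_0}$ as a service curve when observed with $\h_{\text{Reg}_k}$ (the fluid-greedy-shaper characterization of a \ac{PFR} with concave shaping curve; the packetizer does not increase per-packet delay and is ignored). By Proposition~\ref{prop:sc} / the synchronized leaky-bucket line of Table~\ref{tab:sc-results}, observed with $\hze$ it offers $(\delta_\eta\otimes\gamma_{r_0/\rho,b_0})\vee(\delta_{2\Delta}\otimes\gamma_{r_0,b_0})$, which is at least $\delta_{2\Delta}\otimes\gamma_{r_0,b_0}$. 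Concatenating the two \ac{TAI} service curves (legitimate since both are expressed in the same clock), the hop $S_k-\text{Reg}_k$ offers, observed with $\hze$, a service curve at least $\delta_{D_k^{\hze}}\otimes\delta_{2\Delta}\otimes\gamma_{r_0,b_0}=\delta_{D_k^{\hze}+2\Delta}\otimes\gamma_{r_0,b_0}$.

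\textbf{Step 3 (delay bound).} Apply the classic network-calculus delay bound (horizontal deviation) to the pair $(\alpha,\ \delta_{D_k^{\hze}+2\Delta}\otimes\gamma_{r_0,b_0})$: both are leaky buckets of the \emph{same} rate $r_0$, the service curve being burst $b_0$ shifted right by $D_k^{\hze}+2\Delta$ and the arrival curve burst $b_0+2r_0\Delta$, so the horizontal deviation equals $(D_k^{\hze}+2\Delta)+\frac{(b_0+2r_0\Delta)-b_0}{r_0}=D_k^{\hze}+4\Delta$. Using a looser arrival curve and a smaller service curve than the true ones can only inflate this quantity, and appending the packetizer of $\text{Reg}_k$ does not increase per-packet delay; hence $D_k'^{\hze}\le D_k^{\hze}+4\Delta$. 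The main obstacle is keeping the clock bookkeeping economical: the tempting alternative of converting $D_k^{\hze}$ into the regulator's clock, running a near-``shaping-for-free'' argument there, and converting the result back, costs two extra clock changes and yields only $D_k^{\hze}+6\Delta$. Staying entirely in \ac{TAI} and dropping the $\rho$-dependent branches of the converted curves is precisely what forces both leaky-bucket rates to be $r_0$, so that the burst inflation $2r_0\Delta$ contributes exactly $2\Delta$ after division by $r_0$ and adds to the latency inflation $2\Delta$ to give the claimed $4\Delta$.
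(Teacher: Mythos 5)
Your proof is correct and follows essentially the same route as the paper's: convert the regulator's shaping curve to a \acs{TAI} arrival curve at the input of $S_k$ and a \acs{TAI} service curve of the hop via Tables~\ref{tab:ac-results} and~\ref{tab:sc-results}, then take the horizontal deviation. The one cosmetic difference is that the paper keeps both branches of each min/max and settles the geometry by comparing $y_A$ and $y_B$, whereas you discard the $\rho$-dependent branches up front (which only loosens the curves) and compute the deviation between two rate-$r_0$ leaky buckets directly; this yields the same $D_k^{\hze}+4\Delta$ with less case analysis and makes it transparent that the penalty is $2\Delta$ of burst inflation plus $2\Delta$ of latency inflation.
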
 %
	The proof is in Appendix~\ref{proof:sync:pfr-stable}. Note that a TAI delay bound $D_k^{\hze}$
 can be obtained by using the fact that the shaping curve is a valid arrival curve at the output of the previous regulator. Hence we can use the last line of Table~\ref{tab:ac-results}; when observed with $\hze$, the flow enters $S_k$ with a double arrival-curve constraint: a leaky-bucket arrival curve of rate $\rho r_0$ and burst $b_0+r_0\eta$, and a leaky-bucket arrival curve of rate $r_0$ and burst $b_0 + 2 r_0 \Delta$.
	
	It follows from Propositions~\ref{prop:sync:pfr-lower} and \ref{prop:sync:pfr-stable} that the worst-case penalty on the \ac{TAI} per-hop delay of non-adapted \acp{PFR} in synchronized networks is between $\Delta$ and $4\Delta$, i.e., is of the order of magnitude of the synchronization precision. For tightly-synchronized networks and \acp{PFR}, the current practice of ignoring clock nonidealities is thus perfectly acceptable. However, in loosely-synchronized networks, the value of $\Delta$ ($\sim 125$ms) is larger than the required delay bound for flows with stringent delay requirements. The two solutions (rate and burst cascade, and \ac{ADAM}) that apply to non-synchronized networks also apply here and can be used.
	
	\ant In Figure~\ref{fig:comparaison}, we compare, with the same conditions as in Section~\ref{sec:async:ADAM}, the delay bound with non-adapted tightly-synchronized \acp{PFR}, obtained with Proposition~\ref{prop:sync:pfr-stable} with the methods of Section~\ref{sec:manag-async}. We observe that they perform similarly, each being within a few percents of the ideal-clocks case.


\subsection{Instability of Non-Adapted \acp{IR} in Synchronized Networks}\label{sec:ir-s-0}

	For the interleaved regulator, however, the conclusions are very different. Indeed, the \ac{IR} may yield unbounded latencies, even with tightly-synchronized networks, as shown in the following proposition.

	\begin{proposition}[Instability of non-adapted synchronized interleaved regulators]\label{prop:insta-sync-ir}
		Consider an interleaved regulator as in Figure \ref{fig:ir-prob}, with $n$ upstream systems. Each upstream system $j$ outputs $P_j$ flows $\lbrace f_{j,p}\rbrace_{p=1}^{P_j}$, each with a known arrival curve $\alpha_{f_{j,p}}^{\mathcal{H}_j}$ when observed with clock $\h_j$. Assume the interleaved regulator is non-adapted: $\forall(j,p), \sigma_{f_{j,p}}^{\mathcal{H}_{\text{IR}}} =  \alpha_{f_{j,p}}^{\mathcal{H}_{j}}$. Finally, assume that the clocks $\mathcal{H}_{\text{IR}}$,$\mathcal{H}_{\text{FIFO}}$ and each of the $\mathcal{H}_j$ are synchronized (as per  Equations~(\ref{eq:constr-async}) and (\ref{eq:const-sync})).
		Then, for any parameters $n, \rho,\eta,\Delta$ with $n\ge3$, $\rho>1$, $\eta\ge0$ and $\Delta>0$, there exists a \ac{FIFO} system, adversarial clocks for $\mathcal{H}_{\text{IR}}$,$\mathcal{H}_{\text{FIFO}}$ and $\lbrace\mathcal{H}_j\rbrace_j$, and adversarial traffic generation of the upstream systems, such that the flows crossing the IR have unbounded latency within the IR, when observed with any clock of the network.
	\end{proposition}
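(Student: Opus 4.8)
The plan is to exhibit an explicit adversarial scenario in which the backlog inside the interleaved regulator grows linearly with time, which forces the in-IR latency of at least one flow to be unbounded; this must be done while keeping all clocks inside the envelopes of Equations~(\ref{eq:constr-async}) and~(\ref{eq:const-sync}). It suffices to treat $n=3$ upstream systems with $P_j=1$ each: for $n>3$ only three of the upstream systems carry traffic (the others idle, with clocks set equal to $\mathcal{H}_{\text{IR}}$), and for $P_j\ge2$ the surplus flows are idle; in all cases the $n=3$ construction applies verbatim. By the scale-invariance of the time model we take unit-size packets and, for each flow $f_j$, burst $b$ of one packet and rate $r$ of one packet per time unit as measured with $\h_j$. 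We set $\mathcal{H}_{\text{FIFO}}=\mathcal{H}_{\text{IR}}$, do all the accounting with $\mathcal{H}_{\text{IR}}$, and recover the ``any clock of the network'' claim at the end from the fact that any two clocks have instantaneous rate in $[1/\rho,\rho]$, so an unbounded delay in one clock is unbounded in every clock (e.g.\ by Proposition~\ref{prop:toolbox:delay}).

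\textbf{The blocking mechanism.} The FIFO system is an order-preserving (FIFO-for-the-aggregate) element with a large delay budget $D$ that it may use to hold packets and thereby concentrate each flow's traffic into large bursts at the IR input. Because the IR is non-adapted, its shaping curve for $f_j$ is $\gamma_{r,b}$ as computed with $\mathcal{H}_{\text{IR}}$, whereas --- by the ``sync'' line of Table~\ref{tab:ac-results}, applied between $\h_j$ and $\mathcal{H}_{\text{IR}}$, followed by the burst inflation due to $D$ --- the arrival curve of $f_j$ actually entering the IR, observed with $\mathcal{H}_{\text{IR}}$, is strictly larger, both in burst (because of $D$ and of $\eta$) and, over any window in which $\h_j$ runs fast, in rate (rate $\rho r>r$). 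Hence the IR holds $f_j$'s packets, releasing the held head-of-line packet only at rate $r$ and thereby blocking every packet queued behind it, eligible or not. This head-of-line blocking is exactly what is absent from the per-flow regulator analysed in Proposition~\ref{prop:sync:pfr-stable}, and it is the source of the instability.

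\textbf{Clocks and round-robin.} Partition time into super-cycles, each split into three phases. In phase $j$, clock $\h_j$ is made to run at rate $\rho$ for a short interval of length $\ell=\Theta(\Delta/(\rho-1))$ and then to recover (rate $1/\rho$) during the remaining phases, with $\ell$ and the phase lengths chosen so that for \emph{every} pair of clocks the $\rho$-rate and $\Delta$-phase envelopes of Equations~(\ref{eq:constr-async}) and~(\ref{eq:const-sync}) hold at all times; this keeps each super-cycle of bounded duration, depending only on $\rho$ and $\Delta$. Simultaneously the sources emit greedily and the FIFO system is scheduled so that throughout phase $j$ the head of the IR queue is occupied by a packet of $f_j$ that is ineligible under $\gamma_{r,b}$ in $\mathcal{H}_{\text{IR}}$ (this is where the bunching by $D$ and the fast episode of $\h_j$ are used), while the $n-1\ge2$ other flows have eligible packets waiting behind it. Against ideal clocks the shaping-for-free property (Section~\ref{sec:prob}) keeps the in-IR backlog bounded; the point is that the fast episode of $\h_j$ makes $f_j$'s burst held about $(\rho-1)\ell\approx\Delta$ longer than that ideal baseline, and during this extra holding the $n-1$ blocked flows accumulate about $(n-1)r\Delta$ of extra backlog while only about $r\Delta$ of it --- flow $f_j$'s own surplus --- can drain; the construction is arranged so that this per-phase surplus is not recovered later, since by the time $f_j$'s packets could drain the head of line is taken over by the next active flow's ineligible burst.

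\textbf{Conclusion and main obstacle.} With $n\ge3$ the per-phase net surplus $\sim(n-2)r\Delta$ is strictly positive and independent of the current backlog, so summing over the infinitely many super-cycles the in-IR backlog, hence the in-IR delay of some packet, tends to $+\infty$ with $\mathcal{H}_{\text{IR}}$, and therefore with any clock. The hypotheses $\rho>1$, $\Delta>0$, $n\ge3$ are precisely what the argument uses; $\eta\ge0$ is unrestricted since the instability comes from the rate $\rho$ and not from the jitter, and with $n=2$ the surplus $(n-2)r\Delta$ vanishes, consistently with the statement. The real work is carrying out the two constructions jointly: clock functions $h_1,h_2,h_3$ (and $h_{\text{IR}}$) that stay inside all pairwise $\rho$- and $\Delta$-envelopes yet let the adversary make one flow ``too fast'' once per super-cycle, and a FIFO-for-the-aggregate schedule for the FIFO system --- respecting its delay budget $D$ --- that keeps the IR's head of line pinned to the currently active flow's ineligible burst while the other flows pile up, together with the quantitative check (via Table~\ref{tab:ac-results}) that the burst actually delivered to the IR exceeds the configured $\gamma_{r,b}$ by enough to sustain the blocking for a full phase. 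Everything else is bookkeeping.
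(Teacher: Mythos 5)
Your proposal correctly identifies the two essential ingredients --- head-of-line blocking inside the IR, and clocks that take turns running fast so as to keep the head of the queue pinned to a packet that the IR (using $\h_{\text{IR}}$) deems ineligible --- and the paper's proof is built on exactly this intuition. But what you submit is a sketch, not a proof: you explicitly concede that "the real work is carrying out the two constructions jointly," and that real work is in fact the entire content of the argument. The central claim in your accounting, that only about $r\Delta$ of the $\sim(n-1)r\Delta$ extra backlog accumulated per phase "can drain," is asserted rather than derived. Once the head-of-line packet is released, the IR can release every eligible packet behind it without any rate limit beyond per-flow token availability, so the queue could drain completely during the transition between phases. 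Preventing this is where all the difficulty lies, and your sketch does not supply the mechanism. The numerical coefficient $(n-2)r\Delta$ also does not match the paper's computed divergence rate of $\sqrt{\rho}-1$ per second (equivalently $\approx n\Delta$ per period), so even the rough bookkeeping is off.

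The paper closes this gap with a much more surgical construction that you would need to reproduce (or replace by something equally precise). It sets $\mathcal{H}_{\text{FIFO}}=\mathcal{H}_{\text{IR}}$ and takes the FIFO delay to be \emph{zero} --- your FIFO "delay budget $D$" used for bunching is a red herring, since shaping-for-free already compensates for any burstification a FIFO can produce under ideal clocks; the instability must, and can, come from the clock mismatch alone. Each source $j$ sends exactly two packets per period $\tau$, and the relative-time functions $d_{\text{IR}\to j}$ are staggered piecewise-linear functions with starting offsets $x_j=x_1+(j-1)(I/s_1+\epsilon)$ where $I=\Delta s_1/(s_1-1)$, $1<s_1\le\min(1.5,\sqrt{\rho})$, $0<\epsilon<I(1-1/s_1)$. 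The second packet of source $j$ appears $I(1-1/s_1)$ too early to the IR and is held, while the first packet of source $j{+}1$ arrives only $\epsilon<I(1-1/s_1)$ later and is therefore stuck behind it. This gives the chain $D^1_{j+1,k}\ge D^2_{j,k}\ge D^1_{j,k}+I$, and the crucial wrap-around $A^1_{1,k+1}=A^2_{n,k}+\epsilon$ closes the chain from source $n$ back to source $1$ of the next period, yielding a per-period delay increase of $n\bigl(I(1-1/s_1)-\epsilon\bigr)>0$ with no recovery window whatsoever. The requirement $n\ge 3$ enters there: it is what makes $\tau\ge 2I$, so that each source's sparse two-packet pattern actually respects its own $\gamma_{l/I,l}$ arrival curve in $\h_j$ --- not, as you suggest, because a $(n-2)$ factor in a backlog surplus happens to vanish at $n=2$. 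Your reduction of $n>3$ to $n=3$ by idling sources, and the clock-agnostic conclusion via Proposition~\ref{prop:toolbox:delay}, are both fine; it is the middle of the proof that is missing.
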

	The proof is in Appendix~\ref{proof:insta-sync-ir}. We give now an informal example of a missed deadline. Consider Figure~\ref{fig:contract} that gives the traffic profile that is expected to enter the previous FIFO system. It corresponds to two flows, each one with a leaky-bucket arrival curve with a burst of one packet and a rate of 1/2 packet per unit of time. Assume that the \ac{IR} is configured with this contract.
	Assume that System 1 generates Flow $1$ and System 2 generates Flow $2$. Each System believes, according to its clock, that it generates in Figure~\ref{fig:reality}, its flow's packets as per the expected behaviour of Figure~\ref{fig:contract}. But the clock of System 2 suddenly increases its speed relative to the \ac{IR} clock between Units 1 and 3 (Figure~\ref{fig:reality}). When seen from the clock of the \ac{IR} (here assumed to share the same clock as system 1), this sudden speeding of Clock 2 increases the burst of Flow $2$. Even worse, it allows Packet $2b$ to squeeze in before Packet $1b$, when $2b$ was supposed to come after, as per Figure~\ref{fig:contract}.
	
	Assume a simulation of the \ac{FIFO} system that provides the output given in Figure~\ref{fig:outfifo-reality}. The worst-case delay through the \ac{FIFO} is five units of \ac{TAI} time, reached by Packet $1a$. For Packet $1b$ to meet this deadline for the entire hop (FIFO+IR), it must be released no later than Time Unit 7. But recall that the \ac{IR} looks only at the head-of-line packet and all the following packets, even from other flows, need to wait for this head packet to be released before being processed by the \ac{IR}. $1b$, blocked by $2b$, is released no earlier than Time Unit 8. Thus, $1b$ has missed its deadline.
	
	\ant In the proof, the delay divergence increases at a rate $\sqrt{\rho}-1$ for any number of previous sources $n\ge3$, and any synchronization precision $\Delta>0$. This corresponds to $100\mu$s of increased worst-case delay per second of network operation for both tightly- or loosely-synchronized networks.
	
\begin{figure}\centering %
\resizebox{0.65\linewidth}{!}{\begin{tikzpicture}
	\tikzstyle{mybox} = [draw, minimum height=1cm]
	
	\node[mybox] at (0,0) (fifo) {FIFO system};
	\node[anchor=north] at (fifo.south east) {$\mathcal{H}_{\text{FIFO}}$};

	\node[mybox, anchor=west] at ([xshift=0.7cm] fifo.east) (pfs) {IR};
	\node[anchor=north] at (pfs.south east) {$\mathcal{H}_{\text{IR}}$};
	
	\node[draw, dashed, line width=0.5pt, anchor=south] (irconf) at ([yshift=0.5cm] pfs.north) {$\lbrace  \sigma_{f_{j,p}}^{\mathcal{H}_{\textbf{IR}}} = \alpha_{f_{j,p}}^{\mathcal{H}_j}\rbrace_{j,p}$};
	\node[anchor=north] at (irconf.south east) {$\mathcal{H}_{\text{IR}}$};
	\draw[->, dashed] (irconf) -- (pfs);

	\draw[->] ([yshift=1.2cm]fifo.west)++(-1cm,0) -- ([yshift=0.2cm]fifo.west) node[pos=0, left, draw, line width=0.5pt] (s1) {$\lbrace  \alpha_{f_{1,p}}^{\mathcal{H}_1}\rbrace_p$}; 
	\node[anchor=north] at (s1.south east) {$\mathcal{H}_1$};
	\draw[->] ([yshift=-0.2cm]fifo.west)++(-1cm,0) -- ([yshift=-0.2cm]fifo.west) node[pos=0, left, draw, line width=0.5pt] (s2) {$\lbrace  \alpha_{f_{n,p}}^{\mathcal{H}_n}\rbrace_p$}; 
	\node[anchor=north] at (s2.south east) {$\mathcal{H}_n$};
	\node[anchor=east, align=right] at (-2.5,0.5) {\ldots};
	
	\draw[->] (fifo) -- (pfs);
	\draw[->] (pfs.east) -- ++(1cm,0);
\end{tikzpicture}} %
\caption{\label{fig:ir-prob} Adversarial situation with several sources.}
\end{figure}
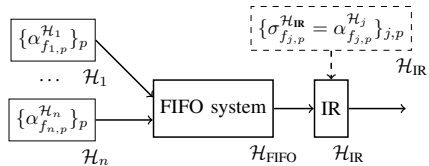
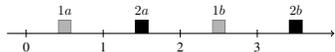
\begin{figure}\centering
\resizebox{0.5\linewidth}{!}{

\begin{tikzpicture}
	\tikzstyle{f} = [draw, regular polygon,regular polygon sides=4, anchor=south, fill=black, opacity=0.3]
	\tikzstyle{ff} = [draw, regular polygon,regular polygon sides=4, anchor=south, fill=black, opacity=1]
	\draw[->] (-0.5,0) -- (8,0);
	\draw[-] (0.,-0.1) -- (0,0.1) node[pos=0, below] {0};
	\draw[-] (2,-0.1) -- (2,0.1) node[pos=0, below] {1};
	\draw[-] (4,-0.1) -- (4,0.1) node[pos=0, below] {2};
	\draw[-] (6,-0.1) -- (6,0.1) node[pos=0, below] {3};
	\node[f, label=above:$1a$] at (1,0) {}; 
	\node[ff, label=above:$2a$] at (3,0) {};
	\node[f, label=above:$1b$] at (5,0) {};
	\node[ff, label=above:$2b$] at (7,0) {};
\end{tikzpicture}}
\caption{\label{fig:contract}Expected input of the previous \ac{FIFO} system.}
\end{figure}
\begin{figure}\centering
\begin{minipage}{0.48\linewidth}\centering
	\resizebox{0.8\linewidth}{!}{

\begin{tikzpicture}
	\tikzstyle{f} = [draw, regular polygon,regular polygon sides=4, anchor=south,xshift=0.1cm, fill=black, opacity=0.3]
	\tikzstyle{ff} = [draw, regular polygon,regular polygon sides=4, anchor=south,xshift=0.1cm, fill=black, opacity=1]
	\draw[->] (-1,0) -- (8,0);
	\node[anchor=west] at (-1,0.8) {Output system 1};
	\node[anchor=center] at (7,0) {\makecell[r]{$\h_{1}$\\$\h_{\text{IR}}$}};
	\draw[-] (0,-0.1) -- (0,0.1) node[pos=0, below] {0};
	\draw[-] (2,-0.1) -- (2,0.1) node[pos=0, below] {1};
	\draw[-] (4,-0.1) -- (4,0.1) node[pos=0, below] {2};
	\draw[-] (6,-0.1) -- (6,0.1) node[pos=0, below] {3};
	\node[f, label=above:$1a$] at (1,0) {};
	\node[f, label=above:$1b$] at (5,0) {};
	
	\draw[->] (-1,-1.5) -- (8,-1.5);
	\node[anchor=west] at (-1,-0.7) {Output system 2};
	\node[anchor=south] at (7,-1.5) {$\h_{2}$};
	\draw[-] (0,-1.6) -- (0,-1.4) node[pos=0, below] {0};
	\draw[-] (1,-1.6) -- (1,-1.4) node[pos=0, below] {1};
	\draw[-] (3.3,-1.6) -- (3.3,-1.4) node[pos=0, below] {2};
	\draw[-] (4.1,-1.6) -- (4.1,-1.4) node[pos=0, below] {3};
	\node[ff, label=above:$2a$] at (2.5,-1.5) {};
	\node[ff, label=above:$2b$] at (4.3,-1.5) {};

	\draw[->] (-1,-3) -- (8,-3);
	\node[anchor=west] at (-1,-2.2) {Output systems 1 and 2};
	\node[anchor=south] at (7,-3) {$\h_{\text{IR}}$};
	\draw[-] (0,-3.1) -- (0,-2.9) node[pos=0, below] {0};
	\draw[-] (2,-3.1) -- (2,-2.9) node[pos=0, below] {1};
	\draw[-] (4,-3.1) -- (4,-2.9) node[pos=0, below] {2};
	\draw[-] (6,-3.1) -- (6,-2.9) node[pos=0, below] {3};
	\node[f, label=above:$1a$] at (1,-3) {};
	\node[ff, label=above:$2a$] at (2.5,-3) {};
	\node[f, label=above:$1b$] at (5.2,-3) {};
	\node[ff, label=above:$2b$] at (4.3,-3) {};
\end{tikzpicture}}
	\subcaption{\label{fig:reality}Output of each source.}
\end{minipage}\hspace{0.01\linewidth}
\begin{minipage}{0.48\linewidth}\centering
	\resizebox{0.8\linewidth}{!}{

\begin{tikzpicture}
	\tikzstyle{f} = [draw, regular polygon,regular polygon sides=4, anchor=south west,xshift=0.1cm, fill=black, opacity=0.3]
	\tikzstyle{ff} = [draw, regular polygon,regular polygon sides=4, anchor=south west,xshift=0.1cm, fill=black, opacity=1]
	
	\draw[->] (-1,0) -- (8,0);
	\draw[-] (0,-0.1) -- (0,0.1);
	\draw[-] (2,-0.1) -- (2,0.1);
	\draw[-] (4,-0.1) -- (4,0.1);
	\draw[-] (6,-0.1) -- (6,0.1);
	\node[f, label=above:$1a$] at (0,0) {}; 
	\node[ff, label=above:$2a$] at (2,0) {};
	\node[draw, regular polygon, regular polygon sides=4, anchor=south east, fill=black, opacity=0.3, label=above:$1b$] at (3.8,0) {};
	\node[draw, regular polygon, regular polygon sides=4, anchor=south, fill=black, opacity=1, label=above:$1b$] at (3,0) {};
	
	\node[anchor=south] at (7,0) {$\h_{\text{IR}}$};
	
	\draw[->] (-1,-2) -- (8,-2);
	\draw[-] (0,-2.1) -- (0,-1.9) node[pos=0, below] {5};
	\draw[-] (2,-2.1) -- (2,-1.9) node[pos=0, below] {6};
	\draw[-] (4,-2.1) -- (4,-1.9) node[pos=0, below] {7};
	\draw[-] (6,-2.1) -- (6,-1.9) node[pos=0, below] {8};
	\node[f, label=above:$1a$] at (0,-2) {}; 
	\node[ff, label=above:$2a$] at (2,-2) {};
	\node[f, label=above:$1b$] at (6.8,-2) {};
	\node[ff, label=above:$2b$] at (6,-2) {};
	
	\node (rect) at (0,-1.2) (ir1) [anchor=south, draw,thick,minimum width=0.5cm,minimum height=1cm] {};
	\node (rect) at (2,-1.2) (ir) [anchor=south, draw,thick,minimum width=0.5cm,minimum height=1cm] {};
	\node (rect) at (4,-1.2) [anchor=south, draw,thick,minimum width=0.5cm,minimum height=1cm] {};
	\node (rect) at (6,-1.2) [anchor=south, draw,thick,minimum width=0.5cm,minimum height=1cm] {};
	\draw[<-] (3.3,-1.2) -- (3.3, -0.2) node[pos=0.5,sloped,above] {FIFO};
	\node[f, anchor=south, xshift=-0.1cm, label=right:$1b$] at (4,-0.7) {};
	\node[ff, anchor=south, xshift=-0.1cm, label=right:$2b$] at (4,-1.2) (hol) {};
	\node[ff, anchor=south, xshift=-0.1cm, label=right:$2b$] at (6,-1.2) {};
	\node[f, anchor=south, xshift=-0.1cm, label=right:$1b$] at (6,-0.7) {};
	
	
\end{tikzpicture}}
	\subcaption{\label{fig:outfifo-reality}Input (above) and output (below) of the \ac{IR}, with $\h_{\text{IR}}$.}
\end{minipage}
\caption{\label{fig:case-ir} Example of a missed deadline caused by nonideal clocks. Packet $1b$ misses its deadline (5 units of time).}
\end{figure}
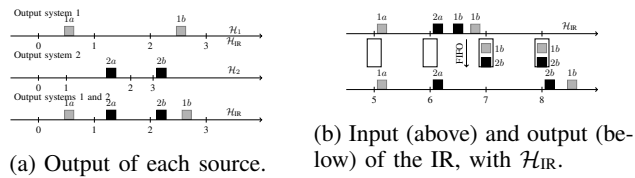


\section{conclusion}
\label{sec:conclu}
We have developed a theory for adding clock nonidealities to network calculus. We have applied the theory to time-sensitive networks with regulators and have obtained the following conclusions.

In loosely-synchronized and in non-synchronized networks, regulators are affected by clock issues; this leads to large or unbounded delays if not properly addressed. We have proposed two solutions for \acp{PFR}: the rate and burst cascade, and ADAM. The former imposes that the regulator parameters depend on the position of the regulator on the flow path, which complexifies the control plane. For interleaved regulators, only the rate and burst cascade applies.

In tightly-synchronized networks, per-flow regulators are affected, but the delay penalty is of the order of the time-error bound and can be neglected. In contrast, interleaved regulators are affected well beyond the time-error bound, and this can lead to unbounded delays if the issue is not properly addressed. Therefore, a solution such as the method of rate and burst cascade should be applied with interleaved regulators, even in tightly-synchronized networks.

In this paper, we proposed a network calculus toolbox for networks with nonideal clocks, we then focused on applying the toolbox on \acf{ATS}. Computing the effect of nonideal clocks on other mechanisms and non-work conservatives scheduler such as \ac{TAS}, \ac{CBS}, the damper, constitute a first interesting axis of future work.

We also proposed methods to manage the regulator parameters of \ac{ATS} within a network, focusing on delay bounds obtained with Network Calculus. The evaluation of these methods under realistic network and clock conditions, using simulations constitute another axis of future work.
\section*{Acknowledgements}
	This work was supported by Huawei Technologies Co., Ltd. in the framework of the project \emph{Large Scale Deterministic Network}. The authors thank Bryan Liubingyang for his contribution to the idea of this paper and Ahlem Mifdaoui for many fruitful discussions on an early version.

\bibliography{bibli,bibli-zotero}

\appendix
\section{Appendix}
\label{sec:appendix}

\subsection{Proof of Proposition~\ref{prop:toolbox:delay}}
	\begin{proof}[Proof of Proposition~\ref{prop:toolbox:delay}]\label{proof:toolbox:delay}
		Call $A(t)$ the time, measured with $\h_i$, at which a packet that has entered device $j$ at $t$ leaves the device. By definition, for any time $t$ observed with $\h_i$, $A(t)-t\le D$. As $d$ is increasing, for any time $t$ measured in $\h_i$, $d(A(t)) - d(t) \le d(t+D) - d(t) \le \sup_{t'} (d(t'+D) -d(t')) = (d\oslash d) (D)$.
		
		Note that $d(A(t))$ also equals $A(d(t))$ because the packet that enters device $j$ at $t$ observed with $\h_i$ [resp $d(t)$ observed with $\h_g$] leaves device $j$ at $A(t)$ [resp $A(d(t))$]. Hence for any $t$ observed using $\h_i$, $A(d(t)) - d(t) \le (d\oslash d)(D)$, which gives the result.
	\end{proof}

\subsection{Proof of Proposition~\ref{prop:cumu}}
	\begin{proof}[Proof of Proposition~\ref{prop:cumu}]\label{proof:toolbox:cumu}
		Note that the same number of bits of the flow enter the device $j$ between the time instants measured as $t_1$ and $t_2$ using $\mathcal{H}_{i}$ and between the time instants measured as $d_{g\rightarrow i}(t_1)$ and $d_{g\rightarrow i}(t_2)$ using $\mathcal{H}_g$. We now split the proof into two situations.
		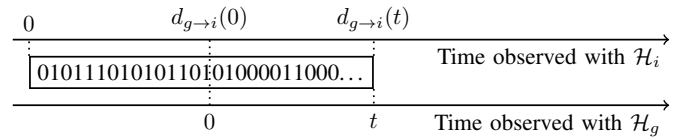
\begin{figure}[h]\centering%
			\resizebox{\linewidth}{!}{
\begin{tikzpicture}
	\draw[->] (0,1) -- (10,1) node[pos=1,anchor=north east] {Time observed with $\mathcal{H}_i$};
	\draw[->] (0,0) -- (10,0) node[pos=1,anchor=north east] {Time observed with $\mathcal{H}_{g}$};
	\node[draw, anchor=east] at (5.5,0.5) (p) {01011101010110101000011000\ldots};
	
	\draw[-, dotted] (p.north west) -- ++(0,0.25cm) node [above] {$0$};;
	\draw[-, dotted] (p.north east) -- ++(0,0.25cm) node [above] {$d_{g\rightarrow i}(t)$};
	\draw[-, dotted] (p.south east) -- ++(0,-0.25cm) node [below] {$t$};
	
	\draw[-,dotted] (3,0) -- (3,1) node[pos=0,below] {$0$} node[pos=1,above] {$d_{g\rightarrow i}(0)$};

\end{tikzpicture}}%
			\caption{\label{fig:cumufCasA} Case where the initial offset of the local clock $\h_i$ is positive compared to the reference clock $\h_g$.}%
		\end{figure}
		
		\paragraph{Case $d(0) \ge 0$} If the origin of $\h_i$ is before the origin of $\h_g$, then the situation is as in Figure \ref{fig:cumufCasA}, where we represent face-to-face the two clocks. For any $t$, the number of bits observed using $\h_i$ between $0$ and $d(t)$ equals the number of bits observed between $0$ and $d(0)$, plus the number of bits observed between $d(0)$ and $d(t)$. That second term also equals the number of bits observed between $0$ and $t$ using $\h_g$ (Figure \ref{fig:cumufCasA}), i.e., $R^{\mathcal{H}_{g}}(t)$. Hence,
		$$\forall t, R^{\mathcal{H}_i}(d(t)) = R^{\mathcal{H}_i}(d(0)) + R^{\mathcal{H}_{g}}(t) $$
		Due to the definition of $T_{\text{start}}$, no bit could have been sent before the time measured as $0$ using $\mathcal{H}_{g}$. Consequently, $ R^{\mathcal{H}_i}(d(0))=0$, and we have the result.
		
		\paragraph{Case  $d(0) < 0$} If the origin of clock $\h_i$ is after the origin of clock $\h_g$, then by symmetry, we simply flip the pair and obtain the result from case \emph{a)}.
	\end{proof}

\subsection{Proof of Proposition~\ref{prop:ac}}
	\begin{proof}[Proof of Proposition~\ref{prop:ac}]\label{proof:toolbox:ac}
		For any $t,\tau$, we have
		\begin{equation*}\label{eq:proof1}
			\begin{aligned}
				&R^{\mathcal{H}_{g}}(t+\tau) - R^{\mathcal{H}_{g}}(t) \\
				&= R^{\mathcal{H}_i}(d(t+\tau)) - R^{\mathcal{H}_i}(d(t)) \\
				& \qquad \text{per Proposition~\ref{prop:cumu}} \\
				&\le \alpha^{\mathcal{H}_i}(d(t+\tau)-d(t))\\
				& \qquad \text{because } \alpha^{\mathcal{H}_i} \text{ is an arrival curve when observed with } \mathcal{H}_i \\
				&\le \sup_{t'\ge 0} \left[ \alpha^{\mathcal{H}_i}(d(t'+\tau) - d(t')) \right] \\
				&\le  \alpha^{\mathcal{H}_i}(d \oslash d (\tau))\\
				& \qquad \text{as } \alpha^{\mathcal{H}_i} \text{ is a wide-sense increasing function}\\
			\end{aligned}
		\end{equation*}
		Also, per Equation~(\ref{eq:constr-async}),  $(d\oslash d)(\tau) \le \rho\tau + \eta$, and $d\oslash d$ is always finite in our model.
	\end{proof}

\subsection{Proof of Proposition~\ref{prop:sc}}
	\begin{proof}[Proof of Proposition~\ref{prop:sc}]\label{proof:toolbox:sc}
		Fix any arbitrary $t\ge0$.
		Then
		\begin{equation*}
			\begin{aligned}
				R^{*\mathcal{H}_{g}}(t) &= R^{*\mathcal{H}_i}(d(t)) \\
				&\hspace{-0.8cm}\ge \inf_{0\le \sigma \le d(t)} \left[R^{\mathcal{H}_i}(\sigma) + \beta^{\mathcal{H}_i}(d(t)-\sigma) \right]
			\end{aligned}
		\end{equation*}
		As per Proposition~\ref{prop:cumu} and because $\beta^{\mathcal{H}_i}$ is a service curve observed with $\mathcal{H}_i$. As $d$ is strictly increasing, for any time instant $\sigma$ measured with $\mathcal{H}_i$, we note $\sigma = d(s)$ with $s$ the measure using $\mathcal{H}_{g}$. Hence,
		\begin{equation*}
			\begin{aligned}
				&R^{*\mathcal{H}_{g}}(t)  \\
				&\ge \inf_{d^{-1}(0)\le s \le t} \left[ R^{\mathcal{H}_i}(d(s)) + \beta^{\mathcal{H}_i}(d(t)-d(s)) \right]
			\end{aligned}
		\end{equation*}
		
		If $t< T_{\text{start}}$ (defined in Section~\ref{sec:time-model}), the result holds because no bit has been transmitted: $R^{*\mathcal{H}_{g}}(t)=0$ and $\forall s\le t, R^{\mathcal{H}_i}(d(s))=0$ and $\inf_{d^{-1}(0)\le s \le t} \beta^{\mathcal{H}_i}(d(t)-d(s))=0$, obtained for $s=t$.
		
		Let's now assume $t\ge T_{\text{start}}$ and call $A:s\mapsto R^{\mathcal{H}_i}(d(s)) + \beta^{\mathcal{H}_i}(d(t)-d(s))$. As $T_{\text{start}}\ge d^{-1}(0)$, we can split the domain of the $\inf_s A(s)$ into the two cases $s\le T_{\text{start}}$ and $s\ge T_{\text{start}}$, the result being the minimum of the two obtained $\inf$.
		
		For $s < T_{\text{start}}$, $\beta^{\mathcal{H}_i}(d(t)-d(s)) \ge \beta^{\mathcal{H}_i}(d(t)-d(T_{\text{start}}))$ because $d$ and $\beta^{\mathcal{H}_i}$ are both increasing functions. On the other hand, based on the assumption on $T_{\text{start}}$, $R^{\mathcal{H}_i}(d(s))=R^{\mathcal{H}_i}(d(T_{\text{start}}))$ as both quantities equal 0 bit.
		
		Consequently,	
		$$\inf_{d^{-1}(0)\le s \le T_{\text{start}}} A(s) \ge A(T_{\text{start}})$$
		and $\inf_s A(s)$ is obtained for $s\in[T_{\text{start}},t]$, i.e.,
		$$ R^{*\mathcal{H}_{g}}(t) \ge \inf_{T_{\text{start}}\le s \le t} A(s)$$
		By definition $T_{\text{start}}\ge 0$, so $[T_{\text{start}}, t] \subset [0, t]$, and hence,
		\begin{align*}
			R^{*\mathcal{H}_{g}}(t) &\ge \inf_{0\le s \le t} A(s)\\
			&\ge\inf_{0\le s \le t} \left[ R^{\mathcal{H}_i}(d(s)) + \beta^{\mathcal{H}_i}(d(t)-d(s)) \right]\\
			&\hspace{-1.5cm}\ge\inf_{0\le s \le t} \left[ \makecell[l]{R^{\mathcal{H}_i}(d(s)) + \\\inf_{0\le u \le s} \beta^{\mathcal{H}_i}(d(t-s+u)-d(u))} \right]\\
			&\hspace{-1.5cm}\ge\inf_{0\le s \le t} \left[ R^{\mathcal{H}_i}(d(s)) + \beta^{\mathcal{H}_i}((d \overline{\oslash} d)(t-s)) \right]
		\end{align*}
		Because $\beta^{\mathcal{H}_i}$ is wide-sense increasing. Now, as $R^{\mathcal{H}_i}(d(s)) = R^{\mathcal{H}_{g}}(s)$ per Proposition~\ref{prop:cumu}, we obtain
		$$R^{*\mathcal{H}_{g}}(t) \ge \inf_{0\le s \le t} \left[ R^{\mathcal{H}_{g}}(s) + \beta^{\mathcal{H}_i}((d \overline{\oslash} d)(t-s)) \right]$$
		which proves that $t\mapsto\beta^{\mathcal{H}_i}((d \overline{\oslash} d)(t-s))$ is a service curve observed with $\mathcal{H}_{g}$.
	\end{proof}

\subsection{Proof of Proposition~\ref{prop:insta-async}}
	\begin{proof}[Proof of Proposition~\ref{prop:insta-async}]\label{proof:insta-async}
		Consider a flow such as $f$ in Figure~\ref{fig:prob}, assume that the source of the flow is at the input to system $S$ and call $\mathcal{H}_{1}$ the clock used to define the arrival curve $\alpha^{\mathcal{H}_{1}}=\gamma_{r,b}$ at the source. Assume the regulator is a \ac{PFR} and let $\mathcal{H}_{\text{Reg}}\neq \mathcal{H}_{1} $ be its clock. Since the \ac{PFR} is non-adapted, it is configured with shaping curve $\sigma^{\mathcal{H}_{\text{Reg}}} = \alpha^{\mathcal{H}_{1}}$.
		
		Let the flow be greedy after $T_{\text{start}}$ and generate packets of size $\ell$ bits, with $\ell\leq b$.
		Its cumulative arrival function at the source, measured in $\mathcal{H}_{1}$, is
		\begin{equation}\label{eq-proof-async-0}
		 R^{\mathcal{H}_{1}}(t) = \left\lfloor \frac{\alpha^{\mathcal{H}_{1}}(|t-T_{\text{start}}|^+)}{\ell}\right\rfloor\ell
		\end{equation}
		where $\lfloor \cdot\rfloor$ is the floor function. The fact that this flow satisfies the arrival-curve constraint $\alpha^{\mathcal{H}_{1}}$ in $\mathcal{H}_{1}$ follows from \cite[Thm III.2]{le2002some}. Also, system $S$ provides a delay bound $D$ in \acs{TAI} thus, by Section \ref{sec:toolbox:delay}, a bound $D_1=\rho D+ \eta$ in $\mathcal{H}_{1}$. Let $R'^{\mathcal{H}_{1}}$ be the cumulative arrival function of flow $f$ at the input of the \ac{PFR}, observed with $\mathcal{H}_{1}$. Thus, for every $t$,
		\begin{equation}\label{eq-proof-async-1}
		 R^{\mathcal{H}_{1}}(t-D_1) \leq R'^{\mathcal{H}_{1}}(t)
		\end{equation}
		Let $d_{1\rightarrow\text{Reg}}(t) = t/\rho$ be the relative time function between $\h_{1}$ and $\h_{\text{Reg}}$. The function meets the conditions of Equation~(\ref{eq:constr-async}).
		%
		%
		%
		Using Proposition~\ref{prop:cumu}, the cumulative arrival function of flow $f$ at the input of the \ac{PFR}, when observed with the \ac{PFR} clock $\h_{\text{Reg}}$, is given by
		\begin{equation}\label{eq-proof-async-1a}
		R'^{\mathcal{H}_{\text{Reg}}}(\tau) = R'^{\mathcal{H}_{1}}\left(d_{1\to\text{Reg}}^{-1}(\tau)\right)=R'^{\mathcal{H}_{1}}(\rho\tau)
		\end{equation}
		
		Network calculus results are valid as long as all the notions are in the same time reference. Recall from Section~\ref{sec:prob} that the PFR can be modelled as a fluid greedy shaper followed by a packetizer, and the latter can be ignored for delay computations. The output of the fluid greedy shaper, observed with $\h_{\text{Reg}}$, is thus
		$
		  R^{*\mathcal{H}_{\text{Reg}}}=R'^{\mathcal{H}_{\text{Reg}}}\otimes \gamma_{r,b}
		$. It follows that, for all $\tau$, $R^{*\mathcal{H}_{\text{Reg}}}(\tau) \le R^{\h_{\text{Reg}}}(\T) + \alpha^{\mathcal{H}^{1}}(\tau-\T) = r|\tau-\T|^++b$ by definition of $\T$.

		We now show that for any $e>0$, the worst-case delay through the $\ac{PFR}$, measured in $\h_{\text{Reg}}$, is $\geq e$. From the previous equation, it follows that, for $\tau\geq\T$,
		\begin{equation}\label{eq-proof-async-1b}
		  R^{*\mathcal{H}_{\text{Reg}}}(\tau+e)\leq r(\tau+e-\T)+b
		\end{equation}
		Combine Eqs~(\ref{eq-proof-async-0})--(\ref{eq-proof-async-1b}) and obtain
		\begin{equation}
		\begin{aligned}
		 R^{*\mathcal{H}_{\text{Reg}}}(\tau+e)-R'^{\mathcal{H}_{\text{Reg}}}(\tau)
		  \\
		  \leq r(\tau+e-\T)+b - R^{\h_{1}}(\rho\tau -D_1)
		  \\
		  =
		  r(\tau+e-\T)+b
		  -
		  \left\lfloor \frac{r(\rho \tau-D_1\T)+b}{\ell}\right\rfloor\ell\\
		   \leq  r(\tau+e-\T)+b - (r(\rho \tau-D_1-\T)+b)+\ell\\
		   = r(1-\rho)\tau+re+rD_1+\ell
		 \end{aligned}
		\end{equation}
		Thus $R^{*\mathcal{H}_{\text{Reg}}}(\tau+e)-R'^{\mathcal{H}_{\text{Reg}}}(\tau)<0$ whenever $\tau > \left(\frac{re+rD_1+\ell}{(\rho-1)r}\vee\T\right)$; it follows that the delay, measured with $\h_{\text{Reg}}$, for packets arrived at the \ac{PFR} after time $\left(\frac{re+rD_1+\ell}{(\rho-1)r}\vee\T\right)$ is larger than $e$. This holds for any arbitrary $e>0$, therefore the delay measured with $\h_{\text{Reg}}$ is unbounded.
		 %
		%
		%
		%
		%
		
		 By Section~\ref{sec:toolbox:delay}, this also proves that the delay is not bounded when viewed from any clock of the network.
				
				For the \ac{IR}, the same adversarial example applies because an \ac{IR} processing only one flow has the same behavior as a \ac{PFR}~\cite{leboudecTheoryTrafficRegulators2018}.			
	\end{proof}

\subsection{Proof of Proposition~\ref{lemma:cascade:whole-delay}}
	\begin{proof}[Proof of Proposition~\ref{lemma:cascade:whole-delay}]\label{proof:cascade:whole-delay}
		Assume that $S$ is the $k$th hop for flow $f$ and note $S=S_k$ as in Figure~\ref{fig:async-hop-model}. When observed with $\h_{\text{Reg}_{k-1}}$, the flow has the arrival curve $\sigma_{k-1}^{\h_{\text{Reg}_{k-1}}}$ at the output of $\text{Reg}_{k-1}$. We now apply Table~\ref{tab:ac-results} with $\h_g=\h_{\text{Reg}_k}$ and $\h_i=\h_{\text{Reg}_{k-1}}$.
		We obtain that, when observed with the clock of the next regulator, $\h_{\text{Reg}_{k}}$, the flow leaves $\text{Reg}_{k-1}$ with a leaky-bucket arrival curve $\alpha_{k-1}^{\h_{\text{Reg}_{k}}}$ of rate $\rho r_{\text{Reg}_{k-1}}$ and burst ${\text{Reg}_{k-1}} + \eta r_{\text{Reg}_{k-1}}$.
		
		From the configuration of $\text{Reg}_{k-1}$ and $\text{Reg}_{k}$, we note that $\alpha_{k-1}^{\h_{\text{Reg}_{k}}} \le \sigma_k^{\h_{\text{Reg}_k}}$. Consequently, all the conditions for the shaping-for-free property are met when observed with clock $\mathcal{H}_{\text{Reg}_k}$. We apply the respective theorems for both the \ac{PFR}~\cite[Thm 1.5.2]{leboudecNetworkCalculusTheory2001} and the \ac{IR}~\cite[Thm 5]{leboudecTheoryTrafficRegulators2018} using this clock. An upper bound on the delay for the flow through the system S$_k$ as measured with $\mathcal{H}_{\text{Reg}_k}$ is $\rho D_k^{\hze}+\eta$ (Proposition~\ref{prop:toolbox:delay}). Applying the shaping-for-free property, this is also a valid delay bound for the flow trough the whole hop ($S_k$ followed by regulator), when measuring with $\mathcal{H}_{\text{Reg}_k}$. To obtain a delay bound back in the measurement clock $\hze$, we apply again Proposition~\ref{prop:toolbox:delay}, which gives the result.
	\end{proof}

\subsection{Proof of Proposition~\ref{prop:adam}}
	\label{proof:adam:whole-delay}
We first establish the following lemma.
\begin{lemma}\label{lemma:adam:whole-delay}
			Assume that $\alpha_{2,k-1}$ is an arrival curve for the flow at the input of the $S_k$, observed in $\hze$. Then
\begin{enumerate}
  \item $D_k'^{\hze} = D_k^{\hze} + \eta (1+\rho) + \frac{b_{2,k-1}-b_{0}-\eta Wr_0}{\rho r_0} \frac{\rho^2-1}{W - 1}$ is a \ac{TAI} delay bound for the flow through the concatenation of $S_k$ and $\text{Reg}_k$.
  \item $\gamma_{\rho r_0, b_{2,k-1} + \rho r_{0} \cdot D_k'^{\hze}}$ is an arrival curve for the flow observed in $\hze$ at the output of $\text{Reg}_k$.
\end{enumerate}
		\end{lemma}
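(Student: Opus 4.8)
\emph{Plan.} The plan is to prove part~1 by a single network-calculus delay computation carried out entirely with the clock $\hze$, and then to read off part~2 from it. I would freely use the facts recalled in Sections~\ref{sec:prob}, \ref{sec:toolbox} and~\ref{sec:async:ADAM}: that a \ac{PFR} with concave shaping curve offers that curve as service curve when observed with its own clock; the entries of Tables~\ref{tab:ac-results} and~\ref{tab:sc-results}; and that the $\hze$-delay of a flow through a system that is FIFO per flow is at most the horizontal deviation between an $\hze$-arrival curve of the flow at the input and an $\hze$-service curve of the system.

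\emph{Input and service curves.} First I would fix the $\hze$-arrival curve of the flow entering $S_k$. Its input is the output of $\text{Reg}_{k-1}$, where the flow carries the configured shaping curve as arrival curve in $\mathcal{H}_{\text{Reg}_{k-1}}$ (Section~\ref{sec:toolbox:reg}); by the non-synchronized row of Table~\ref{tab:ac-results} it therefore has, in $\hze$, the leaky-bucket curve $\alpha_1=\gamma_{\rho Wr_0,\,b_0+\eta Wr_0}$ for $k\ge2$ (the regulators are configured with rate $Wr_0$, burst $b_0$), and $\gamma_{\rho r_0,\,b_0+\eta r_0}=\alpha_{2,0}$ for $k=1$ (the source is configured with rate $r_0$, burst $b_0$); since $W\ge\rho^2>1$ we have $\alpha_{2,0}\le\alpha_1$, so in all cases $\alpha_1$ is a valid $\hze$-arrival curve at the input of $S_k$. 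Combined with the hypothesis that $\alpha_{2,k-1}=\gamma_{\rho r_0,\,b_{2,k-1}}$ is also one, the flow enters $S_k$ with $\hze$-arrival curve $\alpha_1\wedge\alpha_{2,k-1}$. Next, $S_k$ offers the service curve $\delta_{D_k^{\hze}}$ in $\hze$, and $\text{Reg}_k$, being a \ac{PFR} with concave shaping curve $\gamma_{Wr_0,b_0}$, offers that curve as service curve in $\mathcal{H}_{\text{Reg}_k}$, hence $\delta_\eta\otimes\gamma_{Wr_0/\rho,\,b_0}$ in $\hze$ (non-synchronized leaky-bucket entry of Table~\ref{tab:sc-results}); by concatenation, $S_k-\text{Reg}_k$ offers in $\hze$ the service curve $\beta^{\hze}=\delta_{D_k^{\hze}+\eta}\otimes\gamma_{Wr_0/\rho,\,b_0}$. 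Because this concatenation is FIFO per flow, its $\hze$-delay is at most $h(\alpha_1\wedge\alpha_{2,k-1},\beta^{\hze})=D_k^{\hze}+\eta+h(\alpha_1\wedge\alpha_{2,k-1},\,\gamma_{Wr_0/\rho,\,b_0})$, the leading $D_k^{\hze}+\eta$ coming from the shift by $\delta_{D_k^{\hze}+\eta}$.

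\emph{The residual horizontal deviation.} It then remains to show that $h(\alpha_1\wedge\alpha_{2,k-1},\,\gamma_{Wr_0/\rho,\,b_0})\le\eta\rho+\frac{b_{2,k-1}-b_0-\eta Wr_0}{\rho r_0}\cdot\frac{\rho^2-1}{W-1}$, which makes the total equal to $D_k'^{\hze}$; this is where $W\ge\rho^2$ is essential. Writing $R=Wr_0/\rho$, the slope $\rho Wr_0$ of $\alpha_1$ is $\ge R$ while the slope $\rho r_0$ of $\alpha_{2,k-1}$ is $\le R$, so $\alpha_1\wedge\alpha_{2,k-1}$ is a two-segment concave curve whose horizontal deviation from $\gamma_{R,b_0}$ is attained at its break point. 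If $b_{2,k-1}\ge b_0+\eta Wr_0$, the break is at $t^*=\frac{b_{2,k-1}-b_0-\eta Wr_0}{\rho r_0(W-1)}\ge0$ and a direct computation gives exactly $h=\eta\rho+(\rho^2-1)t^*$, i.e. the claimed quantity. Otherwise $\alpha_{2,k-1}\le\alpha_1$, the minimum reduces to $\alpha_{2,k-1}$, and $h=\frac{\rho(b_{2,k-1}-b_0)}{Wr_0}$; here I would finish by observing that both this value and the claimed bound are affine in $b_{2,k-1}$, coincide at $b_{2,k-1}=b_0+\eta Wr_0$, and that the slope of (claimed $-$ actual) has the sign of $\rho^2-W\le0$, so on this branch, where by construction $b_0+\eta r_0=b_{2,0}\le b_{2,k-1}<b_0+\eta Wr_0$, the claimed bound dominates the actual deviation. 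This establishes part~1.

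\emph{Part 2 and main obstacle.} For part~2, the flow enters $S_k$ with $\hze$-arrival curve $\le\gamma_{\rho r_0,b_{2,k-1}}$, and by part~1 it crosses the FIFO-per-flow system $S_k-\text{Reg}_k$ with $\hze$-delay $\le D_k'^{\hze}$; hence its $\hze$-cumulative function at the output of $\text{Reg}_k$ dominates the input one shifted right by $D_k'^{\hze}$, which yields the output $\hze$-arrival curve $t\mapsto\gamma_{\rho r_0,b_{2,k-1}}(t+D_k'^{\hze})=\gamma_{\rho r_0,\,b_{2,k-1}+\rho r_0 D_k'^{\hze}}$. The hard part is the residual horizontal-deviation estimate of part~1: casting it into the exact closed form of Line~\ref{algo:adam:delay}, and in particular checking, in the branch $b_{2,k-1}<b_0+\eta Wr_0$, that the tighter true deviation is still dominated by the stated expression --- this is precisely the step that uses $W\ge\rho^2$; a minor secondary point is the $k=1$ boundary case, handled above via $\alpha_{2,0}\le\alpha_1$.
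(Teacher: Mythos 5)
Your proposal is correct and follows essentially the same approach as the paper's proof: write the PFR service curve and the two arrival curves in $\hze$, concatenate the $\delta$ from $S_k$ with the regulator's service curve, and bound the delay by the horizontal deviation between $\alpha_1 \wedge \alpha_{2,k-1}$ and the hop's service curve, attained at the break point. You make explicit the case analysis (sign of the break abscissa $t^*$, which is negative for $k=1$ and possibly for small $k$, and the $k=1$ base case where the source's shaping curve only gives $\alpha_{2,0}\le\alpha_1$) that the paper compresses into ``geometrical considerations''; these added details are correct and the bound you derive matches the closed form in Line~\ref{algo:adam:delay}.
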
 %
\begin{proof}

		\begin{figure}
			\resizebox{\linewidth}{!}{
\begin{tikzpicture}[samples=200]

	\pgfplotsset{ticks=none}

	\begin{axis}[xlabel=time interval measured in $\hze$, ylabel=bits,xmin=-4,xmax=11,ymin=-1,ymax=11, axis lines=center,width=11cm,height=11cm]
	
	\draw (axis cs:3,0) -- (axis cs:3,3) node[pos=0,below] {$D_k^{\hze} + \eta$};
	\draw[dotted] (axis cs:0,3) -- (axis cs:3,3) node[pos=0,anchor=north west] {$b_{0}$};
	\addplot[domain=3:9] {x} node [pos=0.5,sloped, yshift=-0.25cm] {rate $Wr_{0}/\rho$} node[pos=1,anchor=north west] {$\beta_{\text{Hop}_k}^{\hze}$};
	
	\addplot[domain=0:1] {1.5*x+4} node[pos=0,anchor=center] (targb1) {};
	\node[draw, dashed,line width=0.5pt] at (axis cs:-1.1,2) (b1) {\makecell[c]{$b_{0}$\\$+\eta Wr_{0}$}};
	\draw[dashed, ->, line width=0.5pt] (b1.north) |- (targb1.center);
	\addplot[domain=1:4, dashed] {1.5*x+4} node [pos=0.4,sloped, yshift=0.3cm] {rate $r_{1} = \rho Wr_{0}$} node[pos=1,anchor=south west] {$\alpha_{1}^{\hze}$};

	\addplot[domain=0:1, dashed] {0.5*x+5}node[pos=0,left] {$b_{2,k-1}$};
	\addplot[domain=1:9] {0.5*x+5} node [pos=0.5,sloped,yshift=0.3cm] {rate $r_{2} = \rho r_{0}$} node[pos=1,anchor=south west]{$\alpha_{2,k-1}^{\hze}$};
	
	\draw[<->, line width=1pt] (axis cs:1,5.5) -- (axis cs:5.5,5.5) node[pos=0.5, draw, below] {$D_k'^{\hze}$};

{}	\end{axis}
\end{tikzpicture}}
			\caption{\label{fig:adam-delay} TAI delay bound computation for the flow through the whole hop ($S_k$ followed by regulator) in the \ac{ADAM} method. The knowledge of $\alpha_{2,k}$ is required to provide a bounded delay whereas the knowledge of $\alpha_{1}$ helps having a tight delay bound.}
		\end{figure}
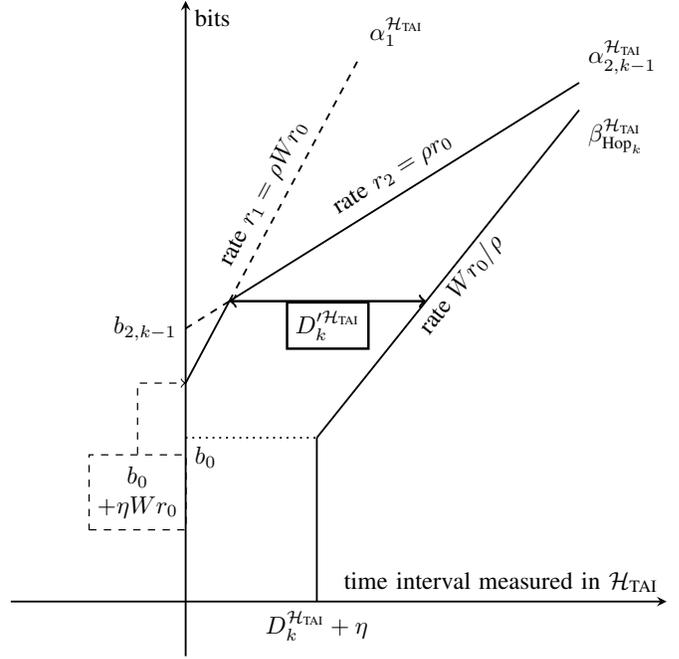

		(1) We use the property that the shaping curve of a \ac{PFR} is also a service curve, when observed with its own clock. Using the results of Table~\ref{tab:sc-results} with $\h_g=\hze$ and $\h_i = \h_{\text{Reg}_{k}}$, we obtain a service curve of the \ac{PFR} when observed with $\hze$:  $\beta_{\text{Reg}_k}^{\hze} = \delta_{\eta}\otimes\gamma_{Wr_0/\rho,b_0}$. Additionally, as $D_k^{\hze}$ is a delay bound through S$_k$, observed with $\hze$, $\delta_{D_k^{\hze}}$ is a service curve of S$_k$ when observed with $\hze$. Due to the concatenation property of service curves \cite[Thm 1.4.6]{leboudecNetworkCalculusTheory2001}, applied with $\hze$, the whole $k$-th hop offers, when observed with $\hze$, the service curve $\beta_{\text{Hop}_k}^{\hze} = \delta_{D_k^{\hze}}\otimes\delta_{\eta}\otimes\gamma_{Wr_0/\rho,b_0}$to the flow. Its shape is given in Figure~\ref{fig:adam-delay}.
		
		Conversely, the flow has, when observed with $\hze$ and at the input of the hop, both $\alpha_{1}$ (\ac{PFR} output arrival-curve property) and $\alpha_{2,k-1}$ (induction assumption) as arrival curves. Their shape are given in Figure~\ref{fig:adam-delay}.
		
		We apply the Network Calculus three-bound theorem \cite[Thm 1.4.2]{leboudecNetworkCalculusTheory2001} to obtain a delay bound as the maximal horizontal distance between  $\alpha_{1} \otimes \alpha_{2}$ and $\beta_{\text{Hop}_k}$, reached at the location marked on Figure~\ref{fig:adam-delay}. Note that knowing only $\alpha_{1}$ does not prove the existence of a maximal horizontal distance because $Wr_0/\rho < \rho Wr_0$ (for $\rho>1$). However, knowing also $\alpha_{2,k-1}$ proves it because $Wr_0/\rho \ge \rho r_0$ because $W\ge \rho^2$. Geometrical considerations gives the result in the proposition.
		
		(2) Use the output flow part of the three-bound theorem \cite[Thm 1.4.3]{leboudecNetworkCalculusTheory2001}, applied with $\hze$.
	\end{proof}

\begin{proof}[Proof of Proposition~\ref{prop:adam}]
(1)
  Applying Table~\ref{tab:ac-results} with $\h_g=\hze$ and $\h_i=\h_{\rr_0}$ proves that $\alpha_{2,0}$ is an arrival curve for the flow at its source when observed with $\hze$.

  (2) and (3): The combination of the two statements is shown by induction on $k$. The base step $k=0$ follows from the previous item. The induction step follows immediately from items~(2) and (3) of Lemma~\ref{lemma:adam:whole-delay}

\end{proof}

\subsection{Proof of Proposition~\ref{prop:sync:pfr-lower}}

	\begin{proof}[Proof of Proposition~\ref{prop:sync:pfr-lower}]\label{proof:sync:pfr-lower}
			
			Take the clock of the source to be exactly the \ac{TAI} and let $\mathcal{H}_{\text{Reg}}\neq \hze $ be the clock of the \ac{PFR}. Since the \ac{PFR} is non-adapted, it is configured with shaping curve $\sigma^{\mathcal{H}_{\text{Reg}}} = \alpha^{\hze}$.
			
			Let $x_1 = \T + \frac{\rho\Delta}{\rho - 1}$ and take, for the adversarial relative time function $d_{\text{\ac{TAI}}\rightarrow\text{PFR}}$, the following piecewise linear function
			\begin{equation*}
					d_{\text{TAI}\rightarrow\text{PFR}}(t) = \left\lbrace
					\begin{aligned}
					t &\qquad\text{ if } t\le \T\\
					\frac{1}{\rho} (t-\T) + \T &\qquad\text{ if } \T<t\le x_1\\
					t - \Delta &\qquad\text{ if } x_1<t\\
					\end{aligned}
					\right.	
			\end{equation*}
			The shape of $d_{\text{TAI}\rightarrow\text{PFR}}$ is given in Figure~\ref{fig:shape-dpfr}. It is continuous, strictly increasing, and meets the constraints of Equations~(\ref{eq:constr-async}) and~(\ref{eq:const-sync}).
			\begin{figure}\centering%
				\resizebox{\linewidth}{!}{
\begin{tikzpicture}
\begin{axis}[xlabel=time observed with $\hze$, ylabel=time observed with $\h_{\text{Reg}}$, xmin=-4, ymin=-4, xmax=20, ymax=20, no marks, axis lines=center, ticks=none, width=10cm, height=10cm]
	\addplot[domain=0:18, dashed] {x};
	\addplot[domain=18:19, dotted] {x};
	\addplot[domain=0:18, dashed] {x-2};
	\addplot[domain=18:19, red, dotted] {x-2};
	\addplot[domain=0:20, dashed] {x+2};
	\draw[<->] (axis cs:17,17) -- (axis cs:17,15) node[pos=1, left] {$\Delta$};
	
	\draw[red] (axis cs:0,0) -- (axis cs:5,5);
	\draw[dotted] (axis cs:5,0) -- (axis cs:5,5) node[below,pos=0] {$\T$};
	\draw[dotted] (axis cs:0,5) -- (axis cs:5,5) node[left, pos=0] {$\T$};
	\draw[red] (axis cs:5,5) -- (axis cs:12,10);
	\draw[red] (axis cs:12,10) -- (axis cs:18,16);
	\draw[dotted] (axis cs:12,0) -- (axis cs:12,10) node[below, pos=0] {$x_1$};
	\draw[dotted] (axis cs:0,10) -- (axis cs:12,10) node[left, pos=0] {$x_1-\Delta$};

	\end{axis}
\end{tikzpicture}} %
				\caption{\label{fig:shape-dpfr} Shape of the time function $d_{\text{TAI}\rightarrow\text{PFR}}$. When the \ac{TAI} reaches $\T$, the \ac{PFR} clock $\h_{\text{Reg}}$ starts counting the time slower (with a relative factor $1/\rho$) until its time is late by $\Delta$ compared to the \ac{TAI}. After that point, it counts the time at the same speed but remain $\Delta$ seconds late.}
			\end{figure}
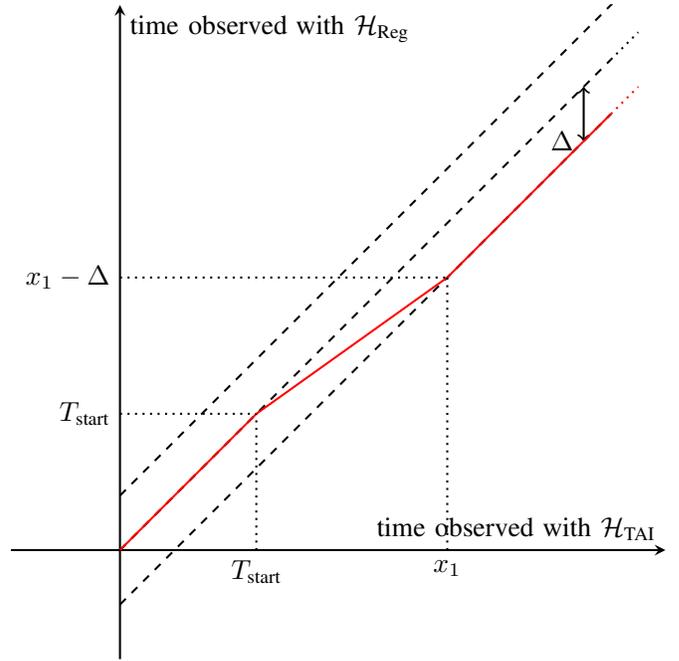
		
		As in Appendix~\ref{proof:insta-async}, let the source be greedy after $T_{\text{start}}$ and generate packets of size $\ell$ bits, with $b\ge\ell$.
		Its cumulative arrival function at the source, measured in $\mathcal{H}_{1}$, is
		\begin{equation}\label{eq-proof-sync-0}
		R^{\hze}(t) = \left\lfloor \frac{\alpha^{\hze}(|t-T_{\text{start}}|^+)}{\ell}\right\rfloor\ell
		\end{equation}
		where $\lfloor \cdot\rfloor$ is the floor function. The fact that this flow satisfies the arrival-curve constraint $\alpha^{\hze}$ in $\hze$ follows again from \cite[Thm III.2]{le2002some}.
		
		Consider now the first network element for the flow, $S_1$ in Figure~\ref{fig:async-hop-model}, as having no delay, for any observation clock.
		
		Let $R'$ be the cumulative arrival function in the \ac{PFR}, that is at the output of the network element. Then for any $t$, $R'^{\hze}(t) = R^{\hze}(t)$ and for any $\tau$, $R'^{\hr}(\tau)=R^{\hr}(\tau)$.
		
		With the same arguments as in Appendix~\ref{proof:insta-async}, we model the PFR as a fluid greedy shaper followed by a packetizer. The latter can be ignored for delay computations. The output of the fluid greedy shaper, observed with $\h_{\text{Reg}}$, is thus
		$
		R^{*\mathcal{H}_{\text{Reg}}}=R'^{\mathcal{H}_{\text{Reg}}}\otimes \gamma_{r,b}
		$. It follows that, for all $\tau$, $R^{*\mathcal{H}_{\text{Reg}}}(\tau) \le R^{\h_{\text{Reg}}}(\T) + \alpha^{\hze}(\tau-\T) = r|\tau-\T|^++b$ by definition of $\T$.
		
		Now let $x_2$ be the next \ac{TAI} time after $x_1$ at which the source finishes sending a packet. Then by definition of $R$ and $R'$
		
		\begin{equation}\label{eq:proof:lower:0}
		R'^{\hze}(x_2)=R^{\hze}(x_2) = \alpha^{\hze}(x_2 - \T)
		\end{equation}
		
		Now consider $x_2+\Delta$ and compute the cumulative output of the regulator at $x_2+\Delta$: $R^{*\hze}(x_2+\Delta) = R^{\hr}(d_{\text{TAI}\rightarrow\text{PFR}}(x_2+\Delta))=R^{\hr}(x_2)$ because $x_2 \ge x_1$ and by definition of $d_{\text{TAI}\rightarrow\text{PFR}}$.
		
		Yet $R^{\hr}(x_2) \le \alpha^{\hze}(x_2-\T)$ so
		\begin{equation}\label{eq:proof:lower:1}
			R^{*\hze}(x_2+\Delta) \le \alpha^{\hze}(x_2-\T)
		\end{equation}
		Combining Equations~\ref{eq:proof:lower:0} and \ref{eq:proof:lower:1} proves
		\begin{equation}\label{eq:proof:lower:2}
			R^{*\hze}(x_2+\Delta) - R^{\hze}(x_2) \le 0
		\end{equation}
		Equation~(\ref{eq:proof:lower:2}) proves that the delay of the packet output at $x_2$ (observed with $\hze$) from the source exits the greedy shaper at $x_2+\Delta$ (observed with $\hze$). It has hence suffered a delay of $\Delta$ measured with $\hze$, which is $\Delta$ more than the worst-case delay through the network element. The worst-case delay is hence lower-bounded by this reachable value.
	\end{proof}

\subsection{Proof of Proposition~\ref{prop:sync:pfr-stable}}
\begin{proof}[Proof of Proposition~\ref{prop:sync:pfr-stable}]\label{proof:sync:pfr-stable}
	We use the service-curve property of \acp{PFR}. The $k$th \ac{PFR} is non-adapted, its configuration is
	\begin{equation*}
		\left\lbrace
		\begin{aligned}
			r_{\text{Reg}_{k}} &= r_0\\
			b_{\text{Reg}_{k}} &= b_0
		\end{aligned}
		\right.
	\end{equation*}
	
	One one hand, $\gamma_{r_{\text{Reg}_{k}},b_{\text{Reg}_{k}}}$ is a service curve of each \ac{PFR} when observed with its clock $\h_{\rr_k}$.
	We use the synchronized part of Table~\ref{tab:sc-results} with $\h_g=\hze$ and $\h_i=\h_{\rr_k}$ and obtain that  $\beta_{\text{PFR}_k}^{\hze} =\left(\delta_{\eta}\otimes\gamma_{r_0/\rho,b_0}\right) \vee\left(\delta_{2\Delta}\otimes\gamma_{r_0,b_0}\right)$ is a service curve of the \ac{PFR} when observed with $\hze$.
	
	On the other hand, $\gamma_{r_{\text{Reg}_{k-1}},b_{\text{Reg}_{k-1}}}$ is an arrival curve of the flow at the input of the $k$-th hop, when observed with $\h_{\rr_{k-1}}$. We apply the synchronized part of Table~\ref{tab:ac-results} with $\h_g=\hze$ and $\h_i=\h_{\rr_{k-1}}$ and obtain that
	$\alpha_{k-1}^{\hze} = \gamma_{\rho r_0,b_0+r_0\eta} \wedge \gamma_{r_0,b_0+2r_0\Delta}$ is an arrival curve of the flow at the input of hop $k$ when observed with $\hze$. Its shape is given in Figure~\ref{fig:pfr-unadapted}.
	
	Assume now that the delay through S$_k$, $D_k^{\hze}$ is computed using the above $\alpha_{k-1}^{\hze}$ as an arrival curve in S$_k$ when observed with $\hze$. The whole hop ($S_k$ followed by regulator) offers, when observed with $\hze$, the service curve $\beta_{\text{Hop}_k}^{\hze} = \delta_{D_k^{\hze}} \otimes \beta_{\text{PFR}_k}^{\hze}$. Its shape is given in Figure \ref{fig:pfr-unadapted}.
	
	We then compute the maximal horizontal distance in the figure. Geometrical considerations give
	\begin{equation*}
		\left\lbrace
		\begin{aligned}
			y_A &= b_0 + \frac{r_0}{\rho-1} (2\Delta \rho - \eta)\\
			y_B &= b_0 + \frac{r_0}{\rho-1} (2\Delta - \eta)
		\end{aligned}
		\right.
	\end{equation*}
	Hence, $y_A\ge y_B$ and the maximum horizontal distance is reached at $A$. We obtain
	\begin{equation}\label{eq:pfr-unadapted:whole-delay}
	D_k'^{\hze} = D_k^{\hze} + 4\Delta
	\end{equation}
	
	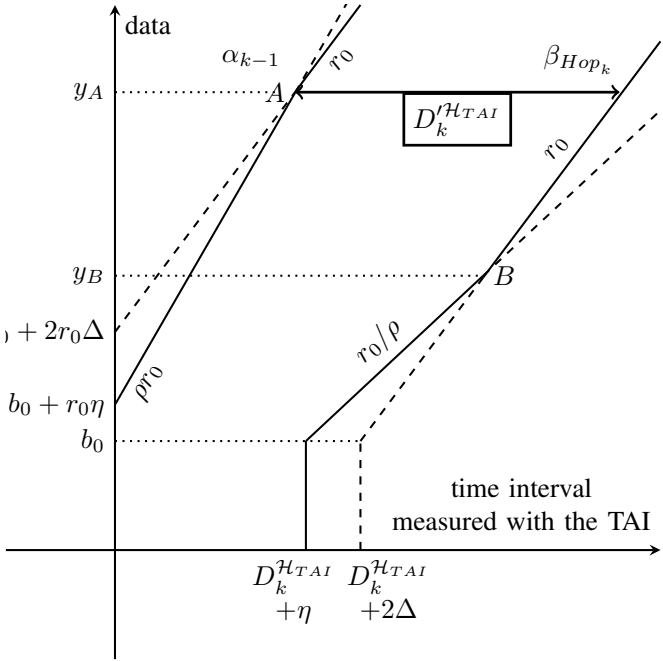
\begin{figure}
		\resizebox{\linewidth}{!}{

	\begin{tikzpicture}
	
	\pgfplotsset{ticks=none}
	\begin{axis}[xlabel=\makecell{time interval\\measured with the TAI}, ylabel=data,
	xmin=-4,xmax=20,ymin=-3,ymax=15, axis lines=center, width=10cm, height=10cm]

	
	\draw[-] (axis cs:7,0) -- (axis cs:7,3) node [pos=0, anchor=north east, xshift=0.5cm] (dpeta) {\makecell[c]{$D_k^{\h_{{TAI}}}$\\$+\eta$}};
	\draw[-, dashed] (axis cs:9,0) -- (axis cs:9,3) node [pos=0, anchor=north west, xshift=-0.3cm] (dpdelta) {\makecell[c]{$D_k^{\h_{{TAI}}}$\\$+2\Delta$}};
	
	\addplot[domain=7:13.5] {0.7*x-1.9} node[pos=0.5,above,sloped] {$r_0/\rho$};	
	\addplot[domain=13.5:20, dashed] {0.7*x-1.9} node[pos=0,right] {$B$};
	\draw[dotted] (axis cs:0,7.55) -- (axis cs:13.5,7.55) node[left,pos=0] {$y_B$};
	\addplot[domain=9:13.5, dashed] {x-6};	
	\addplot[domain=13.5:20] {x-6}  node[pos=0.5,above,sloped] {$r_0$};
	\node at (axis cs:17,13.5) { $\beta_{{Hop}_k}$};
	\node at (axis cs:5,13.5) { $\alpha_{k-1}$};
	
	\draw[dotted] (axis cs:0,3) -- (axis cs:9,3) node[pos=0, left] {$b_0$};
	
	\addplot[domain=0:6.6, dashed] {x+6} node[pos=0,left] {$b_0+2r_0\Delta$};
	\addplot[domain=6.6:9] {x+6}node[pos=0.5,sloped, below] {$r_0$} node[pos=0,left] (na) {$A$};
	\draw[dotted] (axis cs:0,12.6) -- (axis cs:6,12.6) node[left,pos=0] {$y_A$};
	\addplot[domain=0:6.6] {1.3*x+4} node[pos=0,left] {$b_0+r_0\eta$} node[pos=0.1,sloped,below] {$\rho r_0$};
	\addplot[domain=6.6:20, dashed] {1.3*x+4};
	
	\draw[<->, line width=1pt] (axis cs:6.6,12.6) -- (axis cs:18.5,12.6) node[pos=0.5, draw, below] {$D_k'^{\h_{{TAI}}}$};

	\end{axis}
\end{tikzpicture}}
		\caption{\label{fig:pfr-unadapted} Delay bound computation for the flow through the whole $k$th hop ($S_k$ followed by regulator) when the regulator is a \ac{PFR} and the network is synchronized.}
	\end{figure}
\end{proof}

\subsection{Proof of Proposition~\ref{prop:insta-sync-ir}}
	\begin{proof}[Proof of Proposition~\ref{prop:insta-sync-ir}]\label{proof:insta-sync-ir}
		Take $\eta \ge 0$, $\rho > 1$, $\Delta > 0$ and $n \ge 3$. The following example respects the constraints of the model and has unbounded flow latencies.
		
		We consider $n$ sources, each source inputs a single flow to the FIFO system. We choose a \ac{FIFO} system with infinite service and with $\mathcal{H}_{\text{FIFO}}=\mathcal{H}_{\text{IR}}$ such that the delay of the flows trough the \ac{FIFO} system, when observed in the clock $\mathcal{H}_{\text{IR}}$, is null. We further take $\mathcal{H}_{\text{IR}}=\hze$. If the TAI delay is not bounded, then it is also not bounded in any other clock that meets the stability requirements of Equation~(\ref{eq:constr-async}).
		\paragraph*{Adversarial Clocks}
		
		We consider a starting point $x_1\ge T_{\text{start}}$. We choose a slope $s_1$ such that $1< s_1 \le \min(1.5,\sqrt{\rho})$ and define $$I=\frac{\Delta s_1}{s_1-1}$$We also consider any $\epsilon > 0$ such that $\epsilon < I(1-\frac{1}{s_1})$. $\epsilon$ is well defined because $s_1>1$. We also define $\tau = nI/s_1 + n \epsilon$ and $x_j=x_1 + (j-1)I/s_1 + (j-1)\epsilon$ for $j=1\ldots n$.  Now, for every clock $j$, we choose as relative time function  $d_{\text{IR}\rightarrow j}$ the following piecewise linear function
		\begin{equation*}
			\resizebox{\linewidth}{!}{$%
				d_{\text{IR}\rightarrow j} = \left\lbrace
				\begin{aligned}
				t - \Delta/2 &\qquad\text{ if } t\le x_j\\
				s_1 (t-x_j) + x_j - \Delta/2 &\qquad\text{ if } x_j<t\le x_j + I/s_1\\
				\left.\makecell[r]{1/s_1 (t -  x_j + I/s_1)\\+ I + x_j - \Delta/2}\right\rbrace&\qquad\text{ if } x_j+\frac{I}{s_1} < t \le x_j + \frac{I}{s_1} + I\\
				t - \Delta/2 &\qquad\text{ if } x_j + \frac{I}{s_1} + I <t\le x_j + \tau\\
				\tau + d_j(t-\tau) &\qquad\text{ if }  x_j + \tau < t \\
				\end{aligned}
				\right.	
				$}%
		\end{equation*}
		The shape of function $d_{\text{IR}\rightarrow j}$ is available in Figure~\ref{fig:shape-dj}. The slope between $x_j$ and $x_j + I/s_1$ corresponds to the exaggerated compression of the time line in Figure~\ref{fig:reality} of the example in Section~\ref{sec:ir-s-0}. We obtain directly the following properties for any $j$
		\begin{itemize}
			\item $d_{\text{IR}\rightarrow j}$ is continuous and strictly increasing
			\item The time-error function $t\mapsto d_{\text{IR}\rightarrow j}(t) - t$ is periodic with period $\tau$ for $t\ge x_j$
		\end{itemize}
		
		Also, for any $j,j'$ $d_{j \rightarrow j'}(t) = d_{\text{IR}\rightarrow j'}(d^{-1}_{\text{IR}\rightarrow j}(t))$. As $s_1\le\sqrt{\rho}$ and $|d_{\text{IR}\rightarrow j}(t) - t|\le \Delta/2$, any pair of clocks $(\h_{j},\h_{j'})$ meets the constraints of Equations~(\ref{eq:constr-async}) and (\ref{eq:const-sync}), which shows that our adversarial clocks are within the synchronized time model proposed in Section~\ref{sec:time-model}.
		
		\begin{figure}\centering%
			\resizebox{\linewidth}{!}{
\begin{tikzpicture}
\begin{axis}[xlabel=time observed with $\h_{\text{IR}}$, ylabel=time observed with $\h_j$, xmin=-5, ymin=-4, xmax=20, ymax=20, no marks, axis lines=center, ticks=none, width=10cm, height=10cm]
	\addplot[domain=0:18, dashed] {x};
	\addplot[domain=18:19, dotted] {x};
	\addplot[domain=0:18, dashed] {x-2};
	\addplot[domain=18:19, dotted] {x-2};
	\addplot[domain=0:20, dashed] {x+2};
	
	\draw[dotted] (axis cs:4,0) -- (axis cs:4,2) node[pos=0, below] {$x_j$};
	\draw[red] (axis cs:2,0) -- (axis cs:4,2);
	
	\draw[dotted] (axis cs:7,0) -- (axis cs:7,9) node[pos=0, below] {\makecell[c]{$x_j$\\$+\frac{I}{s_1}$}};
	\draw[red] (axis cs:4,2) -- (axis cs:7,9) node[pos=0.6, above, sloped] {$s_1$};
	
	\draw[dotted] (axis cs:15,0) -- (axis cs:15,13) node[pos=0,below] {\makecell[c]{$x_j$\\$+\frac{I}{s_1}$\\$+I$}};
	\draw[red] (axis cs:7,9) -- (axis cs:15,13) node[below,pos=0.5, sloped] {$1/s_1$};
	
	\draw[dotted] (axis cs:18,0) -- (axis cs:18,16) node[pos=0,below] {$x_j+\tau$};
	\draw[red] (axis cs:15,13) -- (axis cs:18,16);
	\draw[red, dotted] (axis cs:18,16) -- (axis cs:19,18.3);

	\draw[dotted] (axis cs:4,2) -- (axis cs:0,2) node[pos=1, left] {$d_{\text{IR} \rightarrow j}(x_j)$};
	\draw[dotted] (axis cs:7,9) -- (axis cs:0,9) node[pos=1,left] {\makecell[r]{$d_{\text{IR} \rightarrow j}(x_j)$\\$+I$}};
	\draw[dotted] (axis cs:18,16) -- (axis cs:0,16) node[pos=1, left] {\makecell[r]{$d_{\text{IR} \rightarrow j}(x_j)$\\$+\tau$}};
	\draw[dotted] (axis cs:0,13) -- (axis cs:15,13) node[pos=0, left] {\makecell[r]{$d_{\text{IR} \rightarrow j}(x_j)$\\$+I+\frac{I}{s_1}$}};
	
	\draw[<->] (axis cs:17,17) -- (axis cs:17,19) node[pos=1, left] {$\Delta/2$};
	
	\end{axis}
\end{tikzpicture}} %
			\caption{\label{fig:shape-dj} Shape of the time function $d_{\text{IR} \rightarrow j}$. The shape is periodic, with period $\tau$. When clock $\h_{\text{IR}}$ reaches $x_j$, clock $\h_j$ starts counting the time faster (with a relative factor $s_1$) until $\h_{\text{IR}}$ counts $I/s_1$ more seconds. Then, $\h_j$ counts the time slower (with a relative factor $1/s_1$). When the time-error function between $\h_{\text{IR}}$ and $\h_j$ reaches $-\Delta/2$, $\h_j$ counts the time at the same speed as $\h_{\text{IR}}$ until the next period starts.}
		\end{figure}
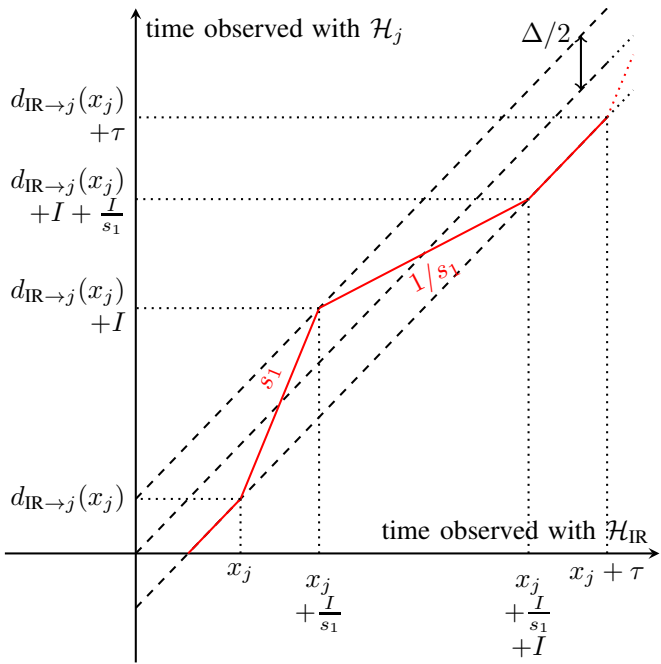
		\paragraph*{Adversarial Traffic Generation}:
		Let $l$ be any arbitrary data size. Each source $j$ is configured to send a packet of size $l$ when its local clock reaches $d_j(x_j)+k\tau$ and $d_j(x_j)+k\tau+I$ for all $k\in\mathbb{N}$. Figure \ref{fig:timeline-local} presents the traffic generation of source $j$ within one period, observed with its internal clock $\h_j$. When observed with $\h_j$, the traffic generation is periodic of period $\tau$.
		\begin{figure}\centering %
			\resizebox{\linewidth}{!}{

\begin{tikzpicture}

	\draw[->] (0,0) -- (11.5,0) node[pos=1, anchor=east] {\makecell[r]{time observed\\with $\mathcal{H}_j$}};
	
	\draw[->] (1,0) -- (1,1) node[pos=0, anchor=north] {$\begin{aligned}&d_j(x_j)\\&+ k\tau\end{aligned}$};
	\draw[->] (4,0) -- (4,1) node[pos=0, anchor=north] {$\begin{aligned}&d_j(x_j)\\&+ k\tau+I\end{aligned}$};
	\draw[->] (9,0) -- (9,1) node[pos=0, anchor=north] {$\begin{aligned}&d_j(x_j)\\&+ (k+1)\tau\end{aligned}$};
	
	\draw[<->] (1,1.2) -- (4,1.2) node[midway, above] {$I$};
	\draw[<->] (1,1.8) -- (9,1.8) node[midway, above] {$\tau$};
\end{tikzpicture}} %
			\caption{\label{fig:timeline-local} Generation of packets as observed with $\h_j$. The traffic profile is periodic of period $\tau$. Source $j$ sends a packet when the internal clock reaches $d_j(x_j) + k\tau$ for some $k\in\mathbb{N}$, then it sends another packet after a duration of $I$ counted using $\h_j$, and finally restarts at the next period.}
		\end{figure}
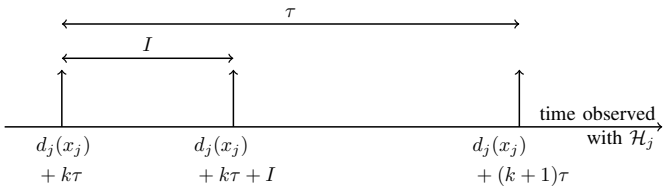
		
		As $n\ge3$ and $s_1 < 1.5$, $\tau\ge 2I + 2\epsilon \ge 2I$ and $\tau-I \ge I$. Hence, the minimum duration between two packets generated by source $j$, counted using $\h_j$ is $I$. This proves that each flow exits its respective source $j$ with a leaky-bucket arrival curve $\gamma_{\frac{l}{I},l}$ (rate $l/I$, burst $l$) when observed using $\mathcal{H}_j$.	We now assume that the interleaved regulator is configured with the same leaky-bucket arrival curve $\gamma_{\frac{l}{I},l}$ for all the flows.
		
		\begin{figure}\centering %
			\resizebox{\linewidth}{!}{

\begin{tikzpicture}

	\draw[->] (0,0) -- (11.5,0) node[pos=1, anchor=east] {\makecell[r]{time observed\\with $\mathcal{H}_{\text{IR}}$}};
	
	\draw[->] (1,0) -- (1,1) node[pos=0, anchor=north] {$\begin{aligned}&x_j\\+ &k\tau\end{aligned}$};
	\draw[->] (3,0) -- (3,1) node[pos=0, anchor=north] {$\begin{aligned}&x_j\\+ k\tau&+\frac{I}{s_1}\end{aligned}$};
	\draw[->] (9,0) -- (9,1) node[pos=0, anchor=north] {$\begin{aligned}&x_j\\+ (k&+1)\tau\end{aligned}$};
	
	\draw[<->] (1,1.2) -- (3,1.2) node[midway, above] {$I/s_1$};
	\draw[<->] (1,1.8) -- (9,1.8) node[midway, above] {$\tau$};
\end{tikzpicture}} %
			\caption{\label{fig:timeline-global} Generation of packets as observed with $\mathcal{H}_{\text{IR }}$. The traffic profile is periodic of period $\tau$. When observing with $\h_{\text{IR}}$, source $j$ sends packets at $x_j + k\tau$ and $x_j + k\tau + \frac{I}{s_1}$ for all $k\in\mathbb{N}$.}
		\end{figure}
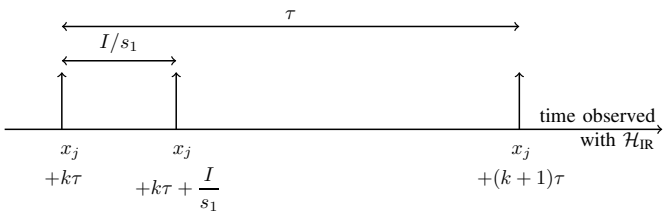
		
		Figure \ref{fig:timeline-global} presents the timeline of packets generated by source $j$ but as observed with $\h_{\text{IR}}$. The IR has to regulate the same timeline for all the $n$ inputs, based on its configuration. For  $j=1\ldots n$ 
and for $k\in\mathbb{N}$ we note $A^1_{j,k}=x_j+k\tau$ the arrival time in the \ac{IR} of the first packet of the the $k$th period of source $j$ measured with $\mathcal{H}_{\text{IR}}$ and $A^2_{j,k}=x_j+k\tau+\frac{I}{s_1}$ the arrival time of the second packet, still measured with $\mathcal{H}_{\text{IR}}$. Also, note $D^1_{j,k}$ and $D^2_{j,k}$ their respective release time out of the IR, again measured using $\mathcal{H}_{\text{IR}}$.
		
		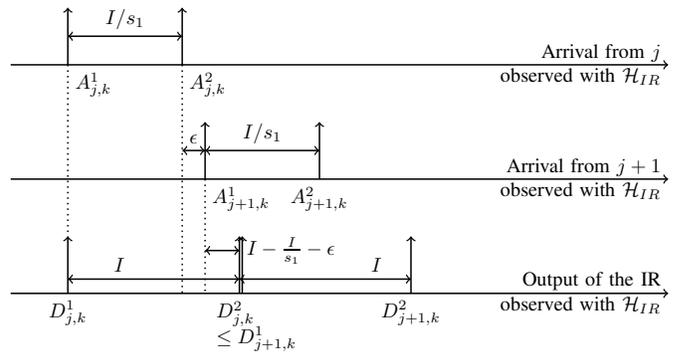
\begin{figure}\centering %
			\resizebox{\linewidth}{!}{

\begin{tikzpicture}

	\draw[->] (0,0) -- (11.5,0) node[pos=1, anchor=east] {\makecell[r]{Arrival from $j$\\observed with $\mathcal{H}_{IR}$}};
	\draw[->] (1,0) -- (1,1) node[pos=0, anchor=north west] {$A^1_{j,k}$};
	\draw[->] (3,0) -- (3,1) node[pos=0, anchor=north west] {$A^2_{j,k}$};
	\draw[<->] (1,0.5) -- (3,0.5) node[pos=0.5,above] {$I/s_1$};
	
	\draw[->] (0,-2) -- (11.5,-2) node[pos=1, anchor=east] {\makecell[r]{Arrival from $j+1$\\observed with $\mathcal{H}_{IR}$}};
	\draw[->] (3.4,-2) -- (3.4,-1) node[pos=0, anchor=north west] {$A^1_{j+1,k}$};
	\draw[->] (5.4,-2) -- (5.4,-1) node[pos=0, anchor=north] {$A^2_{j+1,k}$};
	\draw[<->] (3,-1.5) -- (3.4,-1.5) node[pos=0.5,above] {$\epsilon$};
	\draw[<->] (3.4,-1.5) -- (5.4,-1.5) node[pos=0.5,above] {$I/s_1$};
	
	\draw[->] (0,-4) -- (11.5,-4) node[pos=1, anchor=east] {\makecell[r]{Output of the IR\\observed with $\mathcal{H}_{IR}$}};
	\draw[->] (1,-4) -- (1,-3) node[pos=0, anchor=north] {$D^1_{j,k}$};
	\draw[->] (4,-4) -- (4,-3) node[pos=0, anchor=north, xshift=0.3cm] {\makecell[l]{$D^2_{j,k}$\\$\le D^1_{j+1,k}$}};
	\draw[->] (4.05,-4) -- (4.05,-3);
	\draw[->] (7,-4) -- (7,-3) node[pos=0, anchor=north] {$D^2_{j+1,k}$};
	
	\draw[dotted] (1,0) -- (1,-3);
	\draw[dotted] (3,0) -- (3,-4);
	\draw[dotted] (3.4,-2) -- (3.4,-4);
	\draw[<->] (1,-3.75) -- (4,-3.75) node [pos=0.3,above] {$I$};
	\draw[<->] (3.4,-3.25) -- (4,-3.25) node [pos=1,anchor=west] {$I-\frac{I}{s_1}-\epsilon$};
	\draw[<->] (4,-3.75) -- (7,-3.75) node [pos=0.8,above] {$I$};

\end{tikzpicture}} %
			\caption{\label{fig:timeline-two} Traffic arrival from two successive upstream sources, as observed with $\mathcal{H}_{\text{IR}}$ and release time of the packets, again observed with $\h_{\text{IR}}$.}
		\end{figure}
		Figure \ref{fig:timeline-two} presents the arrival and release times of the packets for two consecutive sources. Assume for instance that the source has been idle for a while, then $D^1_{j,k} = A^1_{j,k}$. The instant $A^2_{j,k}$ is, from the perspective of $\h_{\text{IR}}$, too soon by $I(1-\frac{1}{s_1})$. Using the IR equations \cite{leboudecTheoryTrafficRegulators2018}, the IR has to delay the packet and
		\begin{equation}\label{eq:d2vsd1}\forall j=1\ldots n, \forall k\in\mathbb{N}, \quad D^2_{j,k} \ge D^1_{j,k} + I\end{equation}
		As $x_{j+1}=x_j + \frac{I}{s_1} + \epsilon$, packet $A^1_{j+1,k}$ arrives $\epsilon$ seconds after $A^2_{j,k}$ (measured using $\h_{\text{IR}}$). As $\epsilon < I(1-\frac{1}{s_1})$, the packet at $A^1_{j+1,k}$ arrives before the previous packet of the previous source could be released out of the \ac{IR}. Because the \ac{IR} only looks at the head-of-line packet, and using the IR equations, we obtain
		\begin{equation}\label{eq:jp1vsj}\forall j=1\ldots n-1, \forall k\in\mathbb{N}, \quad D^1_{j+1,k} \ge D^2_{j,k}\end{equation}
		Combining Equations \ref{eq:d2vsd1} and \ref{eq:jp1vsj} gives, by induction,
		\begin{equation}\label{eq:firstDvsLastD} D^2_{n,k}\ge D^1_{1,k} + nI\end{equation}
		Now we can note that
		\begin{align*}
			A^1_{1,k+1}&=x_1+k\tau+\tau\\
			&=x_1+k\tau+n\frac{I}{s_1}+n\epsilon\\
			&=x_1 + (n-1)\frac{I}{s_1} + (n-1)\epsilon + k\tau + \frac{I}{s_1} + \epsilon\\
			&=x_{n} + k\tau + \frac{I}{s_1} + \epsilon\\
			&=A^2_{n,k} + \epsilon
		\end{align*}
		Hence, the first packet of the $(k+1)$th period of the first upstream source arrives $\epsilon$ seconds (counted with $\h_{\text{IR}}$) after the second packet of the $k$th period of the last source, so we also have
		\begin{equation}\label{eq:loop}
		D^1_{1,k+1} \ge D^2_{n,k}
		\end{equation}
		Combining equations \ref{eq:d2vsd1} and \ref{eq:loop} gives
		\begin{equation}
		D^1_{1,k}\ge	D^1_{1,1} + (k-1)nI = x_1 + (k-1)nI
		\end{equation}
		Because we have $D^1_{1,1} = x_1$, as the network was empty before.
		The delay suffered through the IR by the first packet of the $k$th period of the first source is, when measured with $\h_{\text{IR}}$
		\begin{align}
			D^1_{1,k} - A^1_{1,k} &\ge x_1 + (k-1)nI - x_1 - (k-1)\tau\\
			&\ge x_1 + (k-1)nI - x_1 - (k-1)n\frac{I}{s_1} - (k-1)n\epsilon\\
			&\ge (k-1)n\left(I\left(1-\frac{1}{s_1}\right)-\epsilon\right) \label{eq:diverge}
		\end{align}
		As we have arbitrary selected $\epsilon$ such that $\epsilon < I(1-\frac{1}{s_1})$, we obtain $I(1-\frac{1}{s_1}) - \epsilon > 0$ thus the above delay lower-bound diverges as $k$ increases, so the delay through the IR is unbounded when seen from $\h_{\text{IR}}$, which proves the instability.
		
		\rema Equation~(\ref{eq:diverge}) proves that at each period of duration $\tau$, the worst-case delay increases by $nI(1-\frac{1}{s_1})-n\epsilon$. The divergence of the worst-case delay per second is
		\begin{align*}
			\text{div} &= \frac{nI(1-\frac{1}{s_1})-n\epsilon}{\frac{nI}{s_1}+n\epsilon}\\
		\end{align*}
		This divergence is valid for any $\epsilon>0$, with $\epsilon < I(1-\frac{1}{s_1})$. Taking $\epsilon \rightarrow 0$, the divergence can be as large as
		\begin{align*}
			\lim_{\epsilon\rightarrow0}\text{div} &= \frac{nI(1-\frac{1}{s_1})}{\frac{nI}{s_1}}\\
			&= s_1-1
		\end{align*}
		$s_1$ can be as large as $\sqrt{\rho}$, so the divergence of the worst-case delay can be as large as $\sqrt{\rho}-1$, for any $n\ge3, \Delta>0$.
	\end{proof}

\begin{acronym}[CP-OFDMX] 
	\acro{ADAM}{asynchronous dual arrival-curve method}
	\acro{AFDX}{avionics full-dupleX switched ethernet}
	\acro{ATS}{asynchronous traffic shaping}
	\acro{CAN}{Controller Area Network}
	\acro{CBS}{credit-based scheduler}
	\acro{CDT}{control-data traffic}
	\acro{CEV}{crew exploration vehicule}
	\acro{CQF}{cyclic queuing and forwarding}
	\acro{ETE}{end-to-end}
	\acro{FIFO}{first in, first out}
	\acro{FP}{fixed-priority}
	\acro{GCL}{gate control list}
	\acro{GNSS}{global navigation satellite system}
	\acro{GPS}{Global Positioning System}
	\acro{IEEE}{Institute of Electrical and Electronics Engineers}
	\acro{IETF}{Internet Engineering Task Force}
	\acro{IR}{interleaved regulator}
	\acro{ITU}{International Telecommunication Union}
	\acro{LCAN}{low-cost acyclic network}
	\acro{xTFA}[FP-TFA]{fixed-point total-flow analysis}
	\acro{MFAS}{minimum feedback arc set}
	\acro{MFVS}{minimum feedback vertex set}
	\acro{MTIE}{maximum time interval error}
	\acro{NC}{network calculus}
	\acro{NoC}{networks on chip}
	\acro{ns-3}{network simulator 3}
	\acro{NTP}{Network Time Protocol}
	\acro{PBOO}{pay burst only once}
	\acro{PFR}{per-flow regulator}
	\acro{PMOC}{pay multiplexing only at convergence points}
	\acro{PTP}{Precision Time Protocol}
	\acro{TAI}{international atomic time (\emph{temps atomique international})}
	\acro{TAS}{time-aware shaper}
	\acro{TDEV}{time deviation}
	\acro{TFA}{total-flow analysis}
	\acro{TIE}{time interval error}
	\acro{TP}{turn prohibition}
	\acro{TSN}{time-sensitive networking}
\end{acronym}

\end{document}